  	\newtheorem{theorem}{Theorem}[section]
  	\newaliascnt{corollary}{theorem}
  	\newaliascnt{lemma}{theorem}
  		\newtheorem{lemma}[lemma]{Lemma}
  	\newaliascnt{claim}{theorem}
	\newaliascnt{fact}{theorem}
 	 	\newtheorem{fact}[theorem]{Fact}
  	\newaliascnt{proposition}{theorem}
  		\newtheorem{proposition}[proposition]{Proposition}
  \theoremstyle{remark}
  	\newtheorem{remark}[theorem]{Remark}
  \theoremstyle{definition}
  	\newaliascnt{definition}{theorem}
 		 \newtheorem{definition}[definition]{Definition}
\title{Communication with Imperfectly Shared Randomness\thanks{An extended abstract of this work appeared in the \emph{Proceedings of the 2015 Conference on Innovations in Theoretical Computer Science (ITCS)} \cite{CGMS:15}.}}
\date{\today}
\author{Cl\'ement L. Canonne\thanks{Columbia University. Email: \email{ccanonne@cs.columbia.edu}. Research supported in part by NSF CCF-1115703 and NSF CCF-1319788. Some of this work was done when the author was an intern at Microsoft Research New England.},
Venkatesan Guruswami\thanks{Carnegie Mellon University. Email: \email{venkatg@cs.cmu.edu}.
Some of this work was done when the author was visiting Microsoft Research New England. Research supported in part by NSF CCF-0963975.
},
Raghu Meka\thanks{University of California, Los Angeles. Email: \email{raghum@cs.ucla.edu} .},
Madhu Sudan\thanks{Harvard University. Email: \email{madhu@cs.harvard.edu}.}}
\newcommand{\appendixref}[1]{\hyperref[#1]{Appendix~\ref{#1}}}
\def\withcolors{0}
\newenvironment{proofof}[1]{\begin{proof}[Proof of {#1}]}{\end{proof}}
\providecommand{\email}[1]{\href{mailto:#1}{\nolinkurl{#1}\xspace}}
  \newcommand{\newer}[1]{{\color{blue} {#1}}}
  \newcommand{\newest}[1]{{\color{orange} {#1}}}
  \newcommand{\mnew}[1]{{\color{magenta} {#1}}}
  \newcommand{\newer}[1]{{{#1}}}
  \newcommand{\newest}[1]{{{#1}}}
  \newcommand{\mnew}[1]{{{#1}}}
\newcommand{\ignore}[1]{}
\newcommand{\eps}{\ensuremath{\epsilon}\xspace}
\newcommand{\eqdef}{\stackrel{\rm def}{=}}
\newcommand{\yes}{\textsf{yes}\xspace}
\newcommand{\no}{\textsf{no}\xspace}
\newcommand{\dyes}{{\cal Y}}
\newcommand{\dno}{{\cal N}}
\newcommand{\bigO}[1]{{O\!\left({#1}\right)}}
\newcommand{\bigTheta}[1]{{\Theta\!\left({#1}\right)}}
\newcommand{\bigOmega}[1]{{\Omega\!\left({#1}\right)}}
\providecommand{\poly}{\operatorname*{poly}}
\newcommand{\infl}[2][f]{{\mathsf{Inf}_{#1}(#2)}}
\newcommand{\infldeg}[3][f]{{\mathsf{Inf}_{#1}^{#2}(#3)}}
\newcommand{\Var}{\mathsf{Var}}
\newcommand{\multindex}{\boldsymbol{\sigma}}
\newcommand{\setOfSuchThat}[2]{ \left\{\; #1 \;\colon\; #2\; \right\} } 			\newcommand{\dist}[2]{{\operatorname{dist}\!\left({#1, #2}\right)}} 
\newcommand{\proba}{\Pr}
\newcommand{\probaOf}[1]{\proba\!\left[\, #1\, \right]}
\newcommand{\probaCond}[2]{\proba\!\left[\, #1 \;\middle\vert\; #2\, \right]}
\newcommand{\probaDistrOf}[2]{\proba_{#1}\left[\, #2\, \right]}
\newcommand{\expect}[1]{\mathbb{E}\!\left[#1\right]}
\newcommand{\shortexpect}{\mathbb{E}}
\newcommand{\uniform}{\ensuremath{\mathcal{U}}}
\newcommand{\bernoulli}[1]{\ensuremath{\operatorname{Bern}\!\left( #1 \right)}}
\newcommand{\bern}[2]{\ensuremath{\operatorname{Bern}^{#1}\!\left( #2 \right)}}
\newcommand{\norm}[1]{\lVert#1{\rVert}}
\newcommand{\normtwo}[1]{{\norm{#1}}_2}
\newcommand{\abs}[1]{\left\lvert #1 \right\rvert}
\newcommand{\dotprod}[2]{ \left\langle #1,\xspace #2 \right\rangle } 			\newcommand{\ip}[2]{\dotprod{#1}{#2}}
\newcommand{\R}{\ensuremath{\mathbb{R}}\xspace}
\newcommand{\Z}{\ensuremath{\mathbb{Z}}\xspace}
\newcommand{\N}{\Z^+}
\newcommand{\wt}{\mathrm{wt}}
\newcommand{\round}{\mathsf{Round}}
\newcommand{\mv}[1]{\mathbf{#1}}
\newcommand{\good}{\textsc{Good}\xspace}
\DeclareMathOperator{\cc}{cc}
\DeclareMathOperator{\isrcc}{isr-cc}
\DeclareMathOperator{\psrcc}{psr-cc}
\DeclareMathOperator{\privatecc}{private-cc}
\DeclareMathOperator{\capsisrcc}{\mathsf{ISR-CC}}
\DeclareMathOperator{\capspsrcc}{\mathsf{PSR-CC}}
\newcommand{\owisrcc}[1][]{\isrcc_{#1}^{\rm ow}}
\DeclareMathOperator{\owpsrcc}{\psrcc^{\rm ow}}
\newcommand{\smisrcc}[1][]{\isrcc_{#1}^{\rm sm}}
\newcommand{\capsowisrcc}[1][]{\capsisrcc_{#1}^{\rm ow}}
\newcommand{\capssmisrcc}[1][]{\capsisrcc_{#1}^{\rm sm}}
\DeclareMathOperator{\success}{succ}
\newcommand{\calf}{\mathcal{F}}
\newcommand{\calg}{\mathcal{G}}
\newcommand{\bad}{\textsc{Bad}}
\newcommand{\ext}{\mathsf{Ext}}
\newcommand{\compress}{\textsc{Compress}\xspace}
\newcommand{\agreement}{\textsc{Agreement-Distillation}\xspace}
\newcommand{\gapip}{\textsc{GapInnerProduct}\xspace}
\newcommand{\sparsegapip}{\textsc{SparseGapInnerProduct}\xspace}
\newcommand{\gausscorr}{\textsc{GaussianCorrelation}\xspace}
\newcommand{\strat}[1]{\xi_{#1}}
\newcommand{\tildestrat}[1]{\tilde{\xi}_{#1}}
\newcommand{\ptranscript}[1]{\chi_{#1}}
\newcommand{\tildeptranscript}[1]{\tilde{\chi}_{#1}}
\newcommand{\E}{\shortexpect} 
\begin{document}
\maketitle

\begin{abstract}
Communication complexity investigates the amount of communication needed for two or more players to determine some joint function of their private inputs. For many interesting functions, the communication complexity can be much smaller than basic information theoretic measures associated with the players' inputs such as the input length, the entropy, or even the conditional entropy. Communication complexity of many functions reduces further when the players share randomness.  Classical works studied the communication complexity of functions when the interacting players share randomness perfectly, i.e., they get identical copies of randomness from a common source. This work considers the variant of this question when the players share randomness imperfectly, i.e., when they get noisy copies of the randomness produced by some common source. Our main result shows that any function that can be computed by a $k$-bit protocol in the perfect sharing model has a $2^k$-bit protocol in the setting of imperfectly shared randomness and such an exponential growth is necessary. Our upper bound relies on ideas from locality sensitive hashing while lower bounds rely on hypercontractivity and a new invariance principle tailored for communication protocols.
\end{abstract}
\setcounter{page}{0}
\pagenumbering{gobble}
\clearpage
\tableofcontents
\newpage
\pagenumbering{arabic}
\section{Introduction}
\label{sec:intro}

The availability of shared randomness can lead to enormous savings in communication complexity
when computing some basic functions whose inputs are spread out over
different communicating players. A basic example of this is Equality
Testing, where two players Alice and Bob have inputs $x\in\{0,1\}^n$
and $y\in\{0,1\}^n$ and need to determine if $x = y$. Deterministically this
takes $n$ bits of communication. This reduces to $\Theta(\log n)$ bits if Alice and
Bob can toss coins and they are allowed some error. But if they share some
randomness $r \in \{0,1\}^*$ independent of $x$ and $y$ then the
communication cost drops to $O(1)$. (See, for instance, \cite{KusNisan:06:book}.)

A more prevalent example of a communication problem is compression with
uncertain priors. Here Alice has a distribution $P$ on a universe $[N] =
\{1,\ldots,N\}$, and a message $m \in [N]$ chosen according to the
distribution $P$. Alice is allowed to send some bits to Bob and Bob should
output $m$ and the goal is to minimize the expected number of bits that
Alice sends Bob (over the random choice of $m$). If Bob knows the
distribution $P$ exactly then this is the classical compression problem,
solved for example by Huffman coding. In most forms of natural
communication (e.g., think about the next email you are about to send),
Alice and Bob are not perfectly aware of the underlying context to their
exchange, but have reasonably good ideas about each other. One way to model
this is to say that Bob has a distribution $Q$ that is \emph{close} to
the distribution $P$ that Alice is working with, but is not identical to
$P$. Compressing information down to its entropy in the presence of such
uncertainty (i.e., $P \neq Q$) turns out to be possible if Alice and Bob
share randomness that is independent of $(P,Q,m)$ as shown by Juba et al.~\cite{JKKS2011}.
However it remains open as to whether such compression
can be effected deterministically, without the shared randomness --- the
best known schemes can only achieve a compression length of roughly $O(H(P) + \log
\log N)$, where $H(P) = \sum_{i\in[N]} P(i) \log 1/P(i)$ denotes the entropy
of $P$.\footnote{We stress that the setting of uncertain compression is completely different from that of compression with the ``wrong distribution'', a well-studied question in information theory. In the ``wrong distribution problem'' (see, for instance, \cite[Theorem 5.4.3]{CoverJoy:book}) the sender and receiver agree on the distribution, say $P$, but both have it wrong and the distribution the message comes from is $R$. This leads to a compression length of $\E_{m \sim R}[\log (1/P(m))] \approx H(R) + D(R \|P)$. The important aspect here is
that while the compression is not as good, there is no confusion between sender and receiver; and the latter is the focus of our problem.}

In both examples above it is natural to ask the question: can the
(presumed) savings in communication be achieved in the absence of perfect
sharing of randomness? The question especially makes sense in the latter
context where the essential motivation is that Alice and Bob are not in
perfect synchrony with each other: If Alice and Bob are not perfectly aware of the distributions $P$ and $Q$, why should their randomness be identical?

The question of communication with imperfectly shared
randomness was considered recently in the work of Bavarian et
al.~\cite{BavarianGI:14}. They consider the setting where Alice and Bob have
randomness $r$ and $s$ respectively, with some known correlation between
$r$ and $s$, and study the implications of {imperfectly shared} randomness in the simultaneous message communication model (where a referee gets messages from Alice and Bob and computes some joint function of their inputs). Their technical focus is on the different kinds of correlations possible between $r$ and $s$, but among basic results
they show that equality testing has a $O(1)$ communication complexity
protocol with {imperfectly} shared randomness.
{Similar questions have been considered in other contexts and communities, such as information theory~\cite{GacsKorner:73,Wits:75,AC:98,KA:12,FRKT:15,BG:15}, cryptography~\cite{BrassardSalvail:93,Maurer:93,AC:93,CN:00,RennerWolf:05}, probability theory~\cite{Mossel-cosmic:05,Mossel-NICD:06,BogdanovMossel,CMN:14}, and quantum computing~\cite{BBPSSW:96}.}

{In this work we are concerned with the setting of general communication protocols, where Alice and Bob
interact to determine the value of some function.}
From some perspectives, this setting does not seem to offer a major
difference between ``private randomness'' and ``perfectly shared
randomness'' --- Newman~\cite{Newman:91} shows that the communication
complexity in the former setting can be larger by at most an additive $\log
n$ term, where $n$ is the input size {(indeed, Newman proves that any protocol with perfectly shared
randomness can be converted into one that uses only $O(\log n)$ bits of shared randomness)}.
``Imperfectly shared randomness'' being in between the two models cannot therefore
be too far from them either. However, problems like compression above
highlight a different perspective. There $N$ is the size of the universe of
all possible messages, and compression to $\log N$ bits of
communication is trivial and uninteresting. Even a solution with $\log \log
N$ bits of communication is not completely satisfactory. The real
target is $O(H(P))$ bits of communication, which may be a constant
independent of the universe size $N$ (and for natural communication, the set
of possible messages could be thought of as an infinitely large set).
Thus the gap between the communication complexity with perfectly shared
randomness and imperfectly shared randomness remains a very interesting
question, which we explore in this paper.

We provide a formal description of our models and results in the following
section, and here give an informal preview. We consider communication
complexity in a simplified setting of imperfectly shared randomness: Alice
has a uniform binary string $r \in \{0,1\}^m$ and Bob has a string $s$
obtained by flipping each bit of $r$ independently with some tiny
probability. (While this setting is not the most general possible, it seems
to capture the most interesting aspects of the ``lack of prior
agreement'' between Alice and Bob.) Our main contributions in this work are
the introduction of some new problems of interest in the context of
communication complexity, and a comparison of their communication complexity
with/without perfect sharing of randomness.

The first problem we study is the complexity of {\em compression with
uncertain priors}.
We show that any distribution $P$ can be compressed to $O(H(P))$ bits
even when the randomness is not perfectly shared. As in the analogous
result of Juba et al.~\cite{JKKS2011} this protocol sheds some light on
natural communication processes, and introduces an error-correcting element
that was not previously explained.

The next problem we mention is that of {\em agreement distillation}.
Here Alice and Bob try to agree on a small random string using little communication. This is a natural problem to study in the context of communication
complexity with imperfect randomness, since an efficient solution for this problem would allow Alice and Bob to
convert any protocol using perfectly shared randomness into one that relies
only on imperfectly shared randomness. 
It turns out that the zero-communication version of this question, where Alice and Bob are not allowed to communicate at all with each other, ant the one-way communication version were studied in the past.
 Witsenhausen~\cite{Wits:75} shows for instance that no perfect
agreement is possible even for a single bit, i.e. Alice and Bob must fail with positive probability. 
Ahlswede and Csiszar~\cite{AC:98} studies the one-way communication version of this question and gives tight bounds on the number of bits that need to be communicated to get $k$ bits of entropy with probability  tending to one. 
Later, Bogdanov and
Mossel~\cite{BogdanovMossel} extend this negative result, showing that the probability that Alice and Bob can agree on a
$k$-bit string is exponentially small in $k$. By a reduction we show
that this implies that $o(k)$ bits of communication are insufficient to get
agreement on $k$ bits. Conversely, we also show that
Alice and Bob can get a
constant factor advantage --- so they can communicate $\alpha k$ bits for some
$\alpha < 1$ {to obtain $k$ bits of perfectly shared randomness {with high probability}}. Such a result seems implicit in \cite{BogdanovMossel}.

\newest{A sequence of earlier works~\cite{GacsKorner:73,Wits:75,AC:98} also studied the agreement distillation problem focusing on the maximum achievable ratio $a/c$, such that for sufficiently large number $r$ of used correlated samples, Alice and Bob can agree on $a \cdot r$ random bits using $c \cdot r$ bits of communication. Moreover, these works focus on the case where the agreement probability tends to 1 (as $r \to \infty$). It is
surprising that despite requiring the number of agreed bits to grow linearly in the number of used samples, Ahlswede and Csiszar~\cite{AC:98} lose nothing in terms of the best achievable trade-off.\

Following our work, Guruswami and Radhakrishnan~\cite{GR:16} pinpoint the exact trade-off between communication and success probability required in order for Alice and Bob to agree on $k$ bits of common randomness, when an unlimited number of correlated samples are available.}

Returning to our work,
we next attempt to get a general conversion of communication
protocols from the perfectly-shared setting to the imperfectly-shared
setting. We introduce a complete promise problem {\gapip} which
captures {two}-way communication, and use it to show that any problem with {a protocol using} $k$
bits of communication with perfectly shared randomness
also has a $\min\{\exp(k),k+\log n\}$ bit {(one-way)} protocol with imperfectly shared
randomness. While the protocol is simple, we feel its existence is somewhat
surprising; and indeed it yields a very different protocol for equality testing when compared with Bavarian et
al.~\cite{BavarianGI:14}.

Lastly, our {\em main technical result} is a matching lower bound giving a
parameterized family of promise problems, \sparsegapip, where the $k$'th problem can
be solved with $k$ bits of communication with perfect randomness, but
requires $\exp(\Omega(k))$ bits with imperfect sharing. This result builds a new
connection between influence of variables and communication complexity,
which may be of independent interest. Finally we conclude with a variety of
open questions.

\section{Model, Formal Description of Results and Main Ideas}
\label{sec:model}

Throughout the paper, we denote by $\N$ the set of positive integers, and by $[n]$ the set $\{1,\dots, n\}$. Unless specified otherwise, all logarithms are in base 2. We also recall, for $x\in[0,1]$, the definition of the binary entropy function $h(x)=-x\log x-(1-x)\log(1-x)$; furthermore, for any  $p\in[0,1]$, we will write $\bernoulli{p}$ for the Bernoulli distribution on {$\{0,1\}$} with parameter $p$, and $\bern{n}{p}$ for the product distribution on {$\{0,1\}^n$} of $n$ independent Bernoulli random
variables.
 For a distribution $P$ over a domain $\Omega$, we write $H(P)= \sum_{x\in \Omega} P(x) \log (1/P(x))$ for its entropy, and $x\sim P$ to indicate that $x$ is drawn from $P$. $\uniform_{\Omega}$ denotes the uniform distribution over $\Omega$.

Finally, for two elements $x,y\in\{+1,-1\}^n$, their \emph{Hamming distance} $\dist{x}{y}$ is defined as the number of coordinates in which they differ (and similarly for $x,y\in\{0,1\}^n$).

\subsection{Model}

We use the familiar model of communication complexity, augmented by the
notion of {imperfectly} shared randomness.
Recall that in the standard model, two players, Alice and Bob, have access
to inputs $x$ and $y$ respectively. A protocol $\Pi$ specifies the
interaction between Alice and Bob (who speaks when and what), and concludes
with Alice and Bob producing outputs $w_A$ and $w_B$ respectively.
A communication problem $P$ is (informally) specified by conditions on the
inputs and outputs $(x,y,w_A,w_B)$. In usual (promise) problems this is simply a
relationship on the 4-tuple. In sampling problems, this may be given by
requirements on the distribution of this output given $x$ and $y$.
For functional problems, $P = (f_A,f_B)$ and
the conditions require that $w_A = f_A(x,y)$ and $w_B = f_B(x,y)$.
A randomized
protocol is said to solve a functional problem $P$ if the outputs are correct with
probability at least $2/3$. The (worst-case) complexity of a protocol $\Pi$, denoted
$\cc(\Pi)$ is the maximum over all $x,y$ of the expected number of bits
communicated by $\Pi$. This is the main complexity measure of interest to us, although
distributional complexity will also be considered, as also any mix. (For instance, the
most natural measure in compression is a max-average measure.)

We will be considering the setting where Alice and Bob have access to an
arbitrarily long sequence of correlated random bits. For this definition it
will be convenient to let a random bit be an element of $\{+1,-1\}$. For $\rho\in[{0,1}]$,\footnote{{The definition extends to $\rho\in[-1,+1]$, but in this work we shall without loss of generality only be concerned with non-negative correlations.}} we
say a pair of bits $(a,b)$ are {\em $\rho$-correlated (uniform) bits} if
$\E[a] = \E[b] = 0$ and $\E[ab] = \rho$. We will consider the performance of
protocols when given access to sequences $(r,r^\prime)$ where each coordinate
pair $(r^{}_i,r^\prime_i)$ are $\rho$-correlated uniform bits chosen independently
for each $i$. We shall  write $r\sim_\rho r^\prime$ for such $\rho$-correlated pairs.

The {\em communication complexity of a problem $P$} with access to
$\rho$-correlated bits, denoted\footnote{All throughout ``isr'' stands for \emph{{imperfectly} shared randomness, while \emph{psr} refers to \emph{{perfectly}} shared randomness.}} $\isrcc_\rho(P)$ is the minimum over all
protocols $\Pi$ that solve $P$ with access to $\rho$-correlated bits of
$\cc(\Pi)$.
For integer $k$, we let $\capsisrcc_\rho(k)$ denote the collections
of problems $P$ with $\isrcc_\rho(P) \leq k$.
The one-way communication complexity and simultaneous message complexities\footnote{{Recall that the \emph{simultaneous message passing model} (SMP)~\cite{BabaiK:97} is defined as a communication game between 3 players: Alice, Bob, and a Referee. Given a function $f$ known to all players, Alice and Bob both receive inputs respectively $x$ and $y$, and send messages to the Referee who must compute the value $f(x,y)$.}}
are defined similarly (by restricting to appropriate protocols)
and denoted $\owisrcc[\rho](P)$ and $\smisrcc[\rho](P)$ respectively.
The corresponding complexity classes are denoted similarly by
$\capsowisrcc[\rho](k)$ and $\capssmisrcc[\rho](k)$.

Note that when $\rho = 1$ we get the standard model of communication with
shared randomness. We denote this measure by $\psrcc(P) = \isrcc_1(P)$, and write $\capspsrcc(k)$ for the corresponding complexity class.
Similarly, when $\rho = 0$ we get communication complexity with private
randomness $\privatecc(P) = \isrcc_0(P)$. We note that $\isrcc_\rho(P)$ is
non-increasing in $\rho$. Combined with Newman's Theorem~\cite{Newman:91}, we obtain:

\begin{proposition}
For every problem $P$ with inputs $x,y \in \{0,1\}^n$ and $0 \leq \rho \leq \rho' \leq 1$ we have
\[
\psrcc(P) \leq \isrcc_{\rho'}(P) \leq \isrcc_{\rho}(P) \leq \privatecc(P)
\leq \psrcc(P) + O(\log n).
\]
\end{proposition}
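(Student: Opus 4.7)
The chain decomposes into four inequalities, each provable by a short simulation argument; the plan is to treat them from left to right. The unifying observation is that a ``stronger'' source of correlation can always locally simulate a ``weaker'' one without any extra communication, so it suffices to exhibit the four simulations.

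For $\psrcc(P) \leq \isrcc_{\rho'}(P)$: given a protocol $\Pi$ using $\rho'$-correlated bits, Alice and Bob use their perfectly shared tape $R$ to \emph{generate} a stream of $\rho'$-correlated pairs. Concretely, chop $R$ into blocks; from each block Alice reads off one bit as her $r_i \in \{+1,-1\}$, and Bob outputs $s_i := r_i \cdot \varepsilon_i$ where $\varepsilon_i \in \{+1,-1\}$ is determined by the remaining bits of the block and is biased so that $\E[\varepsilon_i] = \rho'$. Both parties can compute their own side from the common tape, and $(r_i,s_i)$ has exactly the target joint distribution; they then execute $\Pi$ on these simulated bits with no change in communication cost. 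The inequality $\isrcc_{\rho'}(P) \leq \isrcc_{\rho}(P)$ for $\rho \leq \rho'$ is analogous: given $\rho'$-correlated pairs $(r'_i,s'_i)$, Bob multiplies his side by an independent $\pm1$ bit $\varepsilon_i$ with $\E[\varepsilon_i] = \rho/\rho'$, which yields $\rho$-correlation. The private randomness Bob needs for the $\varepsilon_i$'s is extracted from a disjoint batch of $\rho'$-correlated pairs by keeping only Bob's side and discarding Alice's; pairs on fresh indices are independent of all previously used pairs, so this side is genuinely Bob-private.

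For $\isrcc_{\rho}(P) \leq \privatecc(P)$: given any private-coin protocol $\Pi$ needing $R_A$ coins for Alice and $R_B$ for Bob, Alice uses her half of indices $1,\ldots,R_A$ of the $\rho$-correlated supply as her randomness and Bob uses his half of the disjoint block $R_A{+}1,\ldots,R_A{+}R_B$. Since pairs $(r_i,s_i)$ at distinct indices are mutually independent, the bits Alice consumes are statistically independent of the bits Bob consumes, giving exactly the independent uniform coins that $\Pi$ was designed for. Hence $\Pi$ can be executed bit-for-bit over the $\rho$-correlated supply.

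Finally, $\privatecc(P) \leq \psrcc(P) + O(\log n)$ is Newman's theorem~\cite{Newman:91}: fix a multi-set of $M = \poly(n)$ shared-randomness tapes drawn i.i.d. and observe by Chernoff plus a union bound over the $2^{2n}$ input pairs $(x,y)$ that, with positive probability, this multi-set has empirical success probability within $o(1)$ of the true success of $\Pi$ on every $(x,y)$. Alice then uses $\log M = O(\log n)$ private bits to sample an index into this good multi-set and transmits it to Bob; both parties run $\Pi$ on the indexed tape. None of the four steps poses a real obstacle; the only ingredient beyond elementary manipulations is the standard sampling argument in Newman's theorem, which is invoked as a black box.
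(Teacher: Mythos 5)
Your proposal is correct and matches the paper's (largely implicit) reasoning: the paper simply notes that $\isrcc_\rho(P)$ is non-increasing in $\rho$ (which, together with $\psrcc = \isrcc_1$ and $\privatecc = \isrcc_0$, yields the first three inequalities) and invokes Newman's theorem for the last, and your four simulations are precisely the standard way one would establish that monotonicity and that final step. The only nitpick is cosmetic: in the first simulation, once both parties share the full tape $R$, each can directly compute \emph{both} $r$ and $s$ and simply use the appropriate side, so the "Alice reads one bit, Bob combines it with a bias bit" phrasing is slightly more elaborate than needed.
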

\noindent The proposition also holds for one-way  {communication, and (except for the last inequality) simultaneous messages.}

\subsection{Problems, Results and Techniques}

We now define some of the new problems we consider in this work and describe our main results.

\subsubsection{Compression}

\begin{definition}[Uncertain Compression]
For $\delta > 0$, $\Delta\geq 0$ and {integers $\ell,n$}, the \emph{uncertain compression
problem $\compress_{\Delta,\delta}^{\ell,n}$} is a promise
problem with Alice getting as input the pair $(P,m)$, where
$P = (P_1,\ldots,P_n)$ is a probability distribution on $[n]$ and
$m \in [n]$. Bob gets a probability distribution $Q$ on $[n]$.
The promises are that $H(P) \leq \ell$
and for every $i \in [n]$, $\abs{\log (P_i/Q_i)} \leq \Delta$.
The goal is for Bob to output $m$, i.e., $w_B = m$ with probability at least $1-\delta$.
The measure of interest here is the maximum, over $(P,Q)$ satisfying the promise, of the
expected one-way communication complexity when $m$ is sampled according to $P$.
\end{definition}

When $\Delta = 0$, this is the classical compression problem and Huffman
coding achieves a compression length of at most $\ell +1$; and this is optimal
for ``prefix-free'' compressions. For larger values of $\Delta$, the work of
\cite{JKKS2011} gives an upper bound of $\ell + 2\Delta + O(1)$ in the setting of perfectly shared randomness (to get constant error probability).
In the setting of deterministic communication or private randomness,
it is open if this communication complexity can be bounded by a function of $\ell$ and
$\Delta$ alone (without dependence on $n$).
(The work of \cite{HaramatyS:2014} studies the deterministic
setting.) Our first result shows that the bound of \cite{JKKS2011} can be extended
naturally to the setting of imperfectly shared randomness.

\begin{restatable}{theorem}{thmcompress}\label{thm:compress}
For every $\eps, \delta > 0$ and $0 < \rho \leq 1$ there exists
$c = c_{\eps,\delta,\rho}$ such
that for every $\ell,n$, we have
$\owisrcc[\rho]\mleft(\compress_{\Delta,\delta}^{\ell,n}\mright)
\leq \frac{1+\eps}{1 - h((1-\rho)/2)}\mleft({H(P) + 2\Delta + c}\mright)$.
\end{restatable}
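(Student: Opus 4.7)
I plan to adapt the hash-based compression protocol of Juba--Kalai--Khanna--Sudan~\cite{JKKS2011} from perfectly-shared randomness to the $\rho$-correlated setting, replacing exact prefix agreement with approximate (majority-threshold) agreement. The controlling phenomenon is that for two length-$L$ bit strings, the expected Hamming agreement is $(1+\rho)L/2$ if they are coordinate-wise $\rho$-correlated and $L/2$ if they are independent; placing the decoding threshold between these two means yields a Chernoff-type separation rate of exactly $1-h((1-\rho)/2)$ (using the symmetry $h(1-p)=h(p)$), which is the denominator appearing in the target bound.

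\textbf{Protocol.} Alice and Bob use their $\rho$-correlated shared randomness to derive, for each $i\in[n]$, paired sequences $R_i^A,R_i^B\in\{+1,-1\}^\infty$ whose coordinates are i.i.d.\ $\rho$-correlated $\pm 1$ bits, independently across $i$. Fix a small constant $\eta=\eta(\eps,\rho)>0$, set $\alpha=(1+\rho)/2-\eta$, and let $c_0=c_0(\eps,\delta,\rho)$ be a constant to be chosen. On input $(P,m)$, Alice computes
\[
L_m\;=\;\left\lceil \tfrac{1+\eps/2}{1-h((1-\rho)/2)}\bigl(\log(1/P_m)+2\Delta+c_0\bigr)\right\rceil,
\]
sends a self-delimiting encoding of $L_m$ (Elias-type, of length $O(\log L_m)$), followed by the bits $R_m^A[1\ldots L_m]$. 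Bob then outputs the index $i\in[n]$ that maximizes $Q_i$ subject to $R_i^B[1\ldots L_m]$ agreeing with the received bits in at least $\alpha L_m$ coordinates.

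\textbf{Analysis.} Two failure modes must be ruled out. (i) The true index $m$ is rejected because its agreement falls below $\alpha L_m$: since $R_m^A[j]$ and $R_m^B[j]$ agree with probability $(1+\rho)/2$ independently over $j$, Chernoff bounds the failure probability by $\exp(-\Omega(\eta^2 L_m))\le\delta/2$ for $c_0$ large. (ii) Some competitor $i\ne m$ with $Q_i\ge Q_m$ clears the threshold: since $R_i^B$ is independent of Alice's transmission, the agreement is $\mathrm{Binomial}(L_m,1/2)$ and the probability it is $\ge\alpha L_m$ is at most $2^{-L_m(1-h(\alpha))}$. By the promise $|\log(P_i/Q_i)|\le\Delta$, any such $i$ satisfies $P_i\ge 2^{-2\Delta}P_m$, so there are at most $2^{2\Delta}/P_m$ such competitors; choosing $\eta$ small enough that $1-h(\alpha)\ge (1-h((1-\rho)/2))/(1+\eps/2)$ (possible by continuity of $h$, since $h(\alpha)\to h((1-\rho)/2)$ as $\eta\to 0$) and $c_0$ large, a union bound bounds this mode by $\delta/2$. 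Taking expectation of $L_m$ over $m\sim P$ yields main-term expected length $\frac{1+\eps/2}{1-h((1-\rho)/2)}\bigl(H(P)+2\Delta+c_0\bigr)$; the self-delimiting prefix contributes $\E[O(\log L_m)]=O(\log(H(P)+\Delta))$, which is $o(H(P))$ as $H(P)\to\infty$ and is absorbed into the remaining $\eps/2$ multiplicative slack and the final additive constant $c$.

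\textbf{Main obstacle.} The only genuinely delicate piece is carefully allocating the $(1+\eps)$ slack between three sources of inefficiency: the $\eta$-gap in the decoding threshold (which makes $h(\alpha)$ slightly larger than $h((1-\rho)/2)$), the additive Chernoff slack (which is negligible only when $L_m$ is not too small, hence requires the $+c_0$ term), and the $O(\log L_m)$ overhead of the self-delimiting length prefix. Each is manifestly a lower-order term, but checking that all three simultaneously fit inside a single $(1+\eps)$ multiplicative factor together with the additive $c$ requires explicit tracking.
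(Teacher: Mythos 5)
Your proposal is correct and follows essentially the same approach as the paper's proof: both use the $\rho$-correlated randomness as a dictionary of codewords indexed by message and length, have Alice send a prefix of length proportional to $\frac{1+\eps}{1-h((1-\rho)/2)}(\log(1/P_m)+2\Delta+O(1))$, and have Bob decode by maximizing $Q$ among messages whose correlated codeword is within Hamming distance $((1-\rho)/2+\eta)L$ of the received string, with correctness via a Chernoff bound for the true message and a union bound over the at most $2^{2\Delta}/P_m$ competitors with comparable $Q$-mass. The only cosmetic difference is that you make the length prefix self-delimiting explicitly (via Elias coding and absorbing the $O(\log L_m)$ overhead into the $\eps$-slack), whereas the paper leaves ``Bob reads $j=|X|$'' implicit.
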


We stress that the notation
$\owisrcc[\rho]\left(\compress_{\Delta,\delta}^{\ell,n}\right)$
describes the {\em worst-case} complexity over $P$ with entropy $H(P) \leq \ell$
of the {\em expected} compression length when $m \sim P$. 
{We first note that one approach would be to initially ``distill'' perfectly shared randomness from the imperfectly shared one available (by communicating a few bits), before using this perfectly shared randomness to run the protocol of~\cite{JKKS2011}. Unfortunately, this results in $\bigTheta{\log n}$ bites of communication, which would be excessively large. Indeed, a naive protocol that ignores $P$ would only require to communicate $\log n$ bits. Instead, to achieve our bound we develop a new protocol based on} a simple modification of the protocol of~\cite{JKKS2011}. Roughly, Alice and Bob use their {imperfectly shared} randomness to define a
``redundant and ambiguous dictionary'' with words of every length for every message.
Alice communicates using a word of appropriate length given the distribution $P$,
and Bob decodes using maximum likelihood decoding given $Q$. The main difference in our
case is that Alice and Bob work knowing their dictionaries do not match exactly (as if they spelled the same words
differently) and so use even longer words during encoding and decoding with some
error-correction to allow for spelling errors. Details can be found in~\autoref{sec:compress}.  
\subsubsection{Agreement distillation}

Next we turn to a very natural problem in the context of imperfect sharing of
randomness. Can Alice and Bob communicate to distill a few random bits from their large
collection $r$ and $r^\prime$ (of correlated random bits), bits on which they can agree perfectly?

\begin{definition}[Agreement distillation]
In the $\agreement^k_\gamma$ problem, Alice and Bob have no inputs. Their goal is
to output $w_A$ and $w_B$ satisfying the following properties:
\begin{enumerate}[(i)]
	\item $\probaOf{w_A=w_B}\geq \gamma$;
	\item $H_\infty(w_A) \geq k$; and
	\item $H_\infty(w_B) \geq k$
\end{enumerate}
where $H_\infty(X)=\min_{x}\log\frac{1}{\probaOf{X=x}}$ {denotes the \emph{min-entropy} of $X$}.
\end{definition}

The version of this problem where Alice and Bob are not allowed to communicate at all was considered by Bogdanov and Mossel~\cite{BogdanovMossel}. Ahlswede and Csiszar~\cite{AC:98} consider the setting where Alice and Bob must agree on the same string with high probability as the number of correlated samples grows, and pinpoint the ratio between communication required and number of random bits agreed upon. 

A trivial way to distill randomness would be for Alice to toss random
coins and send their outcome to Bob. This would achieve $\gamma = 1$ and
communication complexity of $k$ for $k$ bits of entropy.
Our first proposition {(of which a tighter version was obtained in~\cite{AC:98})} says that with non-trivial correlation, some
savings can always be achieved over this naive protocol.

\begin{proposition}\label{prop:agree-positive}
For every $\rho > 0$, we have
$\owisrcc[\rho](\agreement^k_\gamma) \leq (h(\frac{1-\rho}{2}) + o_k(1))\cdot k$
with $\gamma = 1 - o_k(1)$. In particular for every $\rho > 0$ there exists $\alpha <
1$ such that for every sufficiently large $k$
$\owisrcc[\rho](\agreement^k_{1/2}) \leq \alpha k$.
\end{proposition}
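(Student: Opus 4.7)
The plan is a Slepian--Wolf-style one-way protocol built on a random linear code. Set $\delta := (1-\rho)/2$ and $n := k$. Identifying $\{+1,-1\}^n$ with $\mathbb{F}_2^n$ in the natural way, Alice interprets the first $n$ bits of her $\rho$-correlated randomness as a uniform string $r \in \mathbb{F}_2^n$, and Bob has the corresponding $r' = r + e$ with $e \sim \bern{n}{\delta}$ independent of $r$. Alice will output $w_A := r$ and send Bob only a short ``syndrome'' of $r$, enough for him to reconstruct $r$ with high probability from his noisy side information.

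Concretely, fix $\eps_k := 1/\log k = o_k(1)$. A standard random-coding argument (Shannon capacity of the $\text{BSC}(\delta)$) yields a parity-check matrix $H \in \mathbb{F}_2^{t \times n}$ with $t := \lceil (h(\delta)+\eps_k)\, n \rceil$ whose syndrome decoder
\[
\hat r(y, s) \;:=\; \arg\min_{u \in \mathbb{F}_2^n \,:\, Hu = s} \dist{y}{u}
\]
satisfies $\probaOf{\hat r(r', Hr) \neq r} \leq 2^{-\bigOmega{\eps_k^2 n}} = o_k(1)$ when $r$ is uniform. The protocol is: Alice sends $s := Hr$ (costing $t = (h(\delta) + o_k(1))\,k$ bits) and outputs $w_A := r$; Bob outputs $w_B := \hat r(r', s)$. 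The matrix $H$ is a single public constant exhibited by the probabilistic argument, so it requires no shared randomness at protocol time. By construction $\probaOf{w_A = w_B} \geq 1 - o_k(1)$, and $H_\infty(w_A) = n = k$ since $r$ is uniform. For $H_\infty(w_B) \geq k$ I would invoke a translation-symmetry argument: for any $x \in \mathbb{F}_2^n$, the shift $r \mapsto r + x$ preserves the uniform distribution on $r$ and induces $r' \mapsto r' + x$, $s \mapsto s + Hx$, and (by translation-equivariance of the $\arg\min$-distance decoder) $\hat r \mapsto \hat r + x$; hence $\probaOf{w_B = x} = \probaOf{w_B = 0}$ for every $x$, so $w_B$ is uniform on $\mathbb{F}_2^n$ and $H_\infty(w_B) = n = k$.

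The ``in particular'' clause is then immediate: $\rho > 0$ gives $\delta < 1/2$ and thus $h(\delta) < 1$, so any $\alpha \in (h(\delta),1)$ works for large $k$. The main delicate point I expect is the usual capacity-versus-error-exponent tension: to fit inside the communication budget the gap $\eps_k$ above capacity must vanish, which shrinks the random-coding error exponent (of order $\eps_k^2$ near capacity), so $\eps_k$ must be chosen to decay slowly enough that $\eps_k^2 n \to \infty$. The choice $\eps_k := 1/\log k$ gives error $\exp(-\bigOmega{k/\log^2 k})$ and leaves plenty of slack; everything else is a routine application of Shannon's theorem together with the symmetry observation above.
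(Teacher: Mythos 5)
Your protocol --- syndrome decoding against a fixed random parity-check matrix $H$, with Alice outputting $w_A = r$ and sending $Hr$, and Bob decoding $w_B$ from his noisy copy plus the syndrome --- is essentially the same as the paper's proof (the paper uses a bounded-distance/uniqueness decoder that falls back to $r'$ on failure rather than your nearest-coset-member decoder, but for these parameters the two coincide with probability $1 - o_k(1)$). You also go further than the paper by explicitly verifying $H_\infty(w_B) \geq k$ via a translation-symmetry argument; the paper simply does not address this condition, so this is a worthwhile addition. One small caveat in that step: the equivariance $\hat r(r'+x,\, s + Hx) = \hat r(r',s) + x$ holds only if the $\arg\min$ is a well-defined function, i.e.\ only if ties are broken in a translation-equivariant way. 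A plain lexicographic tie-break on $u$ does \emph{not} commute with the shift $u \mapsto u + x$; you should instead break ties by minimizing $u \oplus r'$ lexicographically (which is shift-invariant since both $u$ and $r'$ translate together), or adopt the paper's decoder, which outputs $r'$ whenever there is no unique in-radius coset member and is therefore manifestly equivariant. With either fix, your argument that $w_B$ is exactly uniform on $\mathbb{F}_2^k$ goes through, and the rest of the proposal (rate $h(\delta)+o_k(1)$ with vanishing decoding error via $\eps_k = 1/\log k$) is a correct instantiation of the same Slepian--Wolf idea.
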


\noindent For completeness, we prove this proposition in \autoref{sec:agreement}. 
\noindent Our next theorem says that these linear savings are the best possible: one
cannot get away with $o(k)$ communication unless $\rho = 1$. For constant $\gamma$ {and the restriction of one-way communication}, this theorem follows from~\cite{AC:98}; since we are here concerned with the case where $\gamma = o(1)$ {even with two-way communication}, we give a proof based on Theorem 1 of \cite{BogdanovMossel} (restated as~\autoref{lem:lb:agreement} here) and a reduction that converts protocols with 
communication to zero-communication protocols with a loss in $\gamma$.

\begin{theorem}\label{th:agreementintro}
$\forall \rho < 1, \exists \eps > 0$ such that
$\isrcc_\rho(\agreement^k_\gamma) \geq \eps k - \frac{3}{2}\log \frac{1}{\gamma} - O(1)$.
\end{theorem}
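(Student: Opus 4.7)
The plan is to reduce to the zero-communication setting and then invoke the Bogdanov--Mossel bound \autoref{lem:lb:agreement}. Given a protocol $\Pi$ solving $\agreement^k_\gamma$ with $c$ bits of communication and $\rho$-correlated randomness, I would construct a zero-communication protocol $\Pi'$ in which each party privately samples a uniformly random ``pretended transcript'' for the bits they would have received and then simulates their own side of $\Pi$ as though the received bits coincided with this guess. Concretely, Alice samples $\tilde{B}\in\{0,1\}^{c_B}$ uniformly (with $c_B$ the number of bits Bob sends in $\Pi$), runs her side of $\Pi$ treating $\tilde{B}$ as Bob's messages, and outputs the resulting $w_A'$; Bob symmetrically samples $\tilde{A}$ and outputs $w_B'$.

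For the analysis, fix any realization of the $\rho$-correlated shared randomness and of the private randomness the two parties already use inside $\Pi$: a genuine execution of $\Pi$ on this randomness produces a unique transcript $T^*=T_A^*\|T_B^*$. The fresh guesses $\tilde{B},\tilde{A}$ are independent of everything else, so the event $\{\tilde{B}=T_B^*\}\cap\{\tilde{A}=T_A^*\}$ has probability exactly $2^{-c}$; on this event, the outputs of $\Pi'$ are identical in distribution to those of a real run of $\Pi$. This yields $\Pr[w_A'=w_B']\geq 2^{-c}\gamma$, and (as discussed below) the min-entropy condition of $\agreement^k_\gamma$ transfers to $\Pi'$. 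Applying \autoref{lem:lb:agreement} to the zero-communication $\Pi'$ then gives $2^{-c}\gamma\leq 2^{-\eps k}$ for some $\eps=\eps(\rho)>0$ depending only on $\rho<1$, which rearranges to $c\geq \eps k-\log(1/\gamma)$, exactly as claimed.

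The main obstacle is the min-entropy step: the marginal of $w_A'$ in $\Pi'$ is a mixture of the conditional distributions of $w_A$ in $\Pi$ given ``$T_B^*=t$'', weighted uniformly over $t$ instead of by $\Pr[T_B^*=t]$, so in principle the output marginal---and hence its min-entropy---could differ from that in $\Pi$ and even degrade significantly. I would address this by relying on the quantitative form of Theorem~1 of \cite{BogdanovMossel}, which bounds the agreement probability by a probability-mass quantity of the outputs that is robust to this reweighting (so that a black-box ``$H_\infty\geq k$'' hypothesis on the mixture is not required), and then combine with the simple reduction above to conclude.
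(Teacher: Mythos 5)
Your high-level plan — reduce to a zero-communication protocol and invoke \autoref{lem:lb:agreement} — is the right one, and your analysis of the agreement probability ($\Pr[w_A'=w_B']\geq 2^{-c}\gamma$) is correct. However, the reduction you chose (each party independently samples a uniformly random ``pretended transcript'') is precisely the reduction for which the min-entropy transfer fails, as you yourself observe in the last paragraph, and your proposed fix does not close the gap.

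To make the gap concrete: for a fixed guess $\tilde{b}$, nothing prevents Alice's map $r_A \mapsto f_A(r_A,\tilde{b})$ from being constant. For instance, if in $\Pi$ Bob deterministically sends $0$ and Alice outputs a uniformly random $k$-bit string on receiving $0$ but the all-zeros string on receiving anything else, then $H_\infty(w_A)=k$ in $\Pi$, yet in your $\Pi'$ the output is the all-zeros string with probability at least $1-2^{-c_B}$, so $H_\infty(w_A')=O(1)$. The mechanism behind \autoref{lem:lb:agreement} is small-set expansion on the noisy hypercube, which upper-bounds $\Pr[r\in A_n,\,r'\in B_n]$ given per-output size bounds $|A_n|,|B_n|\leq 2^{m-k}$; there is no ``reweighting-robust'' version of this that lets you escape such size constraints, and the averaged constraint $\E_{\tilde{b}}[\Pr(f_A(\cdot,\tilde{b})=w)]\leq 2^{-k}$ (which is all you inherit from $\Pi$) does not control the individual $|A_n^{\tilde{b}}|$ terms appearing after you fix a guess.

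The paper's proof takes a genuinely different reduction that sidesteps this. Rather than sampling guesses, it uses the product structure of the protocol tree: the $2^c$ transcripts partition the shared-randomness space into rectangles, and an averaging argument picks a \emph{single, deterministic} rectangle $R=P\times Q$ with $\Pr[w_A=w_B\mid T\in R]\cdot\Pr[T\in R]\geq \gamma/2^c$. Each player then checks locally whether their shared randomness is consistent with $R$ and, if so, outputs the value determined by $R$ and their side of the randomness (and otherwise outputs a fresh uniform string). Because $R$ is fixed and $\Pr[T\in R]\geq\gamma/2^c$, conditioning on $T\in R$ inflates any outcome probability by at most $2^c/\gamma$, yielding the quantitative min-entropy guarantee $H_\infty(w_A'),H_\infty(w_B')\geq k-(c+\log(1/\gamma))$ that the random-guessing reduction cannot provide. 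Plugging this reduced entropy into \autoref{lem:lb:agreement} gives $2^{-c}\gamma\leq 2^{-\eps'(k-c-\log(1/\gamma))}$, hence $c\geq \frac{\eps'}{1+\eps'}k-\log(1/\gamma)$, i.e., the theorem with $\eps=\eps'/(1+\eps')$.
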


\noindent \autoref{sec:agreement} contains details of
this proof.

\subsubsection{General relationships between perfect and imperfect sharing}

Our final target in this work is to get some general relationships for communication complexity in the settings of
perfect and imperfectly shared randomness. Our upper bounds for communication complexity are obtained by
considering a natural promise problem, that we call $\gapip$, which is a ``hard problem'' for
communication complexity. We use a variant, $\sparsegapip$, for our lower bounds. We
define both problems below.

\begin{definition}[$\gapip^n_{c,s}$,
$\sparsegapip^n_{q,c,s}$]\label{def:gap:inner-product}
The $\gapip^n_{c,s}$ problem has parameters
$n\in\N$ (dimension), and $c>s\in[0,1]$ (completeness and soundness).
Both $\yes$- and $\no$-instances of this problem have
inputs {$x,y\in\{0,1\}^n$}.
An instance $(x,y)$ is a $\yes$-instance if $\ip{x}{y} \geq cn$,
and a $\no$-instance if $\ip{x}{y} < sn$.
The
$\sparsegapip^n_{q,c,s}$ is a restriction of
$\gapip^n_{c,s}$ where both the \yes- and the
\no-instances are sparse, i.e., $
\normtwo{x}^2 \leq n/q$.
\end{definition}

In \autoref{prop:gapip:hard:psr} we show that
$\gapip^n_{c,s}$ is ``hard'' for $\capspsrcc(k)$
with $c = (2/3) 2^{-k}$ and $s = (1/3)2^{-k}$.
Then in
\autoref{lem:gaussian-upper} we show that this problem is in
$\capsowisrcc[\rho](\poly(1/(c-s)))$. Putting the two results together we get the
following theorem giving a general upper bound on {$\owisrcc[\rho](P)$ in
terms of $\psrcc(P)$} for any promise problem $P$.

\begin{theorem}\label{thm:twoway-psr-isr-upper}
$\forall \rho > 0$, $\exists c < \infty$ such that $\forall k$,
we have
$\capspsrcc(k) \subseteq \capsowisrcc[\rho](c^k)$.
\end{theorem}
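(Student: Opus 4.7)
The plan is to compose the two ingredients the authors have flagged just before the theorem: the completeness of $\gapip$ for $\capspsrcc$ (the proposition stating that $\gapip^n_{c,s}$ is hard for $\capspsrcc(k)$ with $c=(2/3)2^{-k}$, $s=(1/3)2^{-k}$) and the one-way imperfectly-shared-randomness upper bound for $\gapip$ (the lemma showing $\gapip \in \capsowisrcc[\rho](\poly(1/(c-s)))$). Symbolically, I want the chain
\[
P \in \capspsrcc(k) \;\Longrightarrow\; P \text{ reduces by local maps to } \gapip^n_{c,s} \text{ with } c-s = \Theta(2^{-k}) \;\Longrightarrow\; P \in \capsowisrcc[\rho](\poly(2^k)).
\]

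First, given an arbitrary $P \in \capspsrcc(k)$, I would apply the completeness statement to obtain a dimension $n$ and deterministic maps $\xi_A$ and $\xi_B$ sending Alice's and Bob's inputs into $\{0,1\}^n$, such that $\ip{\xi_A(x)}{\xi_B(y)} \geq cn$ on \yes-instances and $\ip{\xi_A(x)}{\xi_B(y)} < sn$ on \no-instances, with $c = (2/3)2^{-k}$ and $s = (1/3)2^{-k}$. Intuitively this is because any $k$-bit psr protocol with constant acceptance gap yields, after flattening over the $2^k$ transcripts, an inner-product formula whose gap is $\Omega(2^{-k})$. Crucially this reduction uses no randomness and no communication: Alice's vector depends only on $x$, Bob's only on $y$, so the composed protocol is still one-way.

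Next, I would feed the resulting $\gapip^n_{c,s}$ instance into the $\owisrcc[\rho]$ protocol promised by the Gaussian-upper lemma. Its cost is $\poly(1/(c-s)) = \poly(3\cdot 2^k) = 2^{O(k)}$, where the constant hidden in the $O(\cdot)$ depends on $\rho$ but not on $k$. Setting $c$ in the theorem statement to the base of this exponent closes the argument. The composition itself is mechanical given the two cited results, so the real technical content has been packaged into those lemmas; the only point I would want to double-check is that the $\poly(1/(c-s))$ bound for $\gapip$ is genuinely dimension-independent (so that the potentially huge $n$ arising from the reduction does not leak into the final communication cost), and that the completeness reduction indeed yields exactly the completeness/soundness thresholds $(2/3)2^{-k}$ and $(1/3)2^{-k}$ needed to make the two black boxes plug together cleanly.
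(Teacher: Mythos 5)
Your proposal is correct and matches the paper's own proof: both compose \autoref{prop:gapip:hard:psr} (hardness of $\gapip^n_{(2/3)2^{-k},(1/3)2^{-k}}$ for $\capspsrcc(k)$) with \autoref{lem:gaussian-upper} (the $O_\rho(1/(c-s)^2)$-bit one-way ISR protocol for $\gapip$), observing that the reduction is local and the Gaussian protocol's cost is dimension-independent so the composition stays one-way and costs $2^{O_\rho(k)}$. The two sanity checks you flag are indeed the right ones and both hold.
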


\noindent We prove this theorem in \autoref{ssec:upperbound:isr:psr}.

\autoref{thm:twoway-psr-isr-upper} is obviously tight already because of
known gaps between one-way and two-way communication complexity. For
instance, it is well known that the ``indexing'' problem (where Alice gets a vector $x \in \{0,1\}^n$ and Bob an index $i \in [n]$ and they wish to compute $x_i$) has one-way
communication complexity of $\Omega(n)$ with perfectly shared randomness,
while its deterministic two-way communication complexity is at most $\log n
+ 2$.
However one could hope for tighter results capturing promise problems $P$
with low $\owpsrcc(P)$, or to give better upper bounds on $\isrcc(P)$ for
$P$ with low $\psrcc(P)$. Our next theorem rules out any further improvements to
\autoref{thm:twoway-psr-isr-upper} when $n$ is sufficiently large (compared to $k$).
We do so by focusing on the problem $\sparsegapip$.
In \autoref{prop:sparsegapinnerpsr} we show that
$\owpsrcc(\sparsegapip_{q,c,s}^n) = O(\poly(\frac{1}{q(c-s)})\log q)$ for every
$q$, $n$ and $c > s$. In particular if say $c = 1/(2q)$ and $s = 1/(4q)$ the
one-way communication complexity with perfectly shared randomness reduces to
$O(\log q)$, in contrast to the $\poly(q)$ upper bound on the
one-way communication complexity with imperfectly shared randomness
from \autoref{lem:gaussian-upper}.

Our main technical theorem shows that this gap is necessary
for every $\rho < 1$. Specifically in \autoref{thm:main-neg} we show
that
{$\isrcc_\rho(\sparsegapip_{q,c=.9/q,s = .6/q}^n) = \Omega(\sqrt{q})$}.
Putting the two together we get a strong converse to
\autoref{thm:twoway-psr-isr-upper},
stated below.

\begin{theorem}\label{thm:psr-isr-lower}
For every $k$, there exists a promise problem
$P = (P_n)_{n \in \N}$ such that $\owpsrcc(P) \leq k$, but for every
$\rho < 1$ it is the case that $\isrcc_\rho(P) = 2^{\Omega_\rho(k)}$.
\end{theorem}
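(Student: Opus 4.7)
The plan is to derive \autoref{thm:psr-isr-lower} as an immediate corollary of the two preceding results on the $\sparsegapip$ problem. Fix $k$ and set $q = \lfloor 2^{k/C_1}\rfloor$ for a suitable constant $C_1$ to be determined below; define the family $P = (P_n)_{n\in\N}$ by $P_n = \sparsegapip_{q,\,0.9/q,\,0.6/q}^n$ (with $P_n$ defined trivially for the few small values of $n$ where the instance is degenerate).

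For the upper bound on $\owpsrcc(P)$, I would apply \autoref{prop:sparsegapinnerpsr}. With $c = 0.9/q$ and $s = 0.6/q$, we have $q(c-s) = 0.3$, an absolute constant, so the $\poly(1/(q(c-s)))$ factor collapses to a constant and the bound reduces to $\owpsrcc(P_n) \leq C_1 \log q$ for some $C_1$ independent of both $n$ and $q$. The choice $q = \lfloor 2^{k/C_1}\rfloor$ then guarantees $\owpsrcc(P) \leq k$.

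For the lower bound, I would invoke \autoref{thm:main-neg}, which asserts $\isrcc_\rho(P_n) = \Omega_\rho(\sqrt{q})$ for every $\rho < 1$. Substituting the chosen $q$ yields
\[
\isrcc_\rho(P) \;=\; \Omega_\rho\!\left(2^{k/(2C_1)}\right) \;=\; 2^{\Omega_\rho(k)},
\]
which is the desired exponential separation. Note that the hidden constant in $\Omega_\rho(k)$ depends on $\rho$ through the lower bound's constant only, while $C_1$ in the upper bound is independent of $\rho$, so the dependencies line up.

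The only real subtlety is ensuring that $n$ is large enough relative to $q$ for the sparsity constraint $\normtwo{x}^2 \leq n/q$ in \autoref{def:gap:inner-product} to be meaningful, and for both ingredient theorems to apply uniformly in $n$; this is a matter of bookkeeping rather than a substantive difficulty. The genuine hard work is not in this combination step but in the two inputs: the upper bound of \autoref{prop:sparsegapinnerpsr}, which exploits sparsity through a hashing/subsampling argument, and—the main obstacle and the technical heart of the paper—the lower bound \autoref{thm:main-neg}, which relies on hypercontractivity together with a new invariance principle tailored for communication protocols.
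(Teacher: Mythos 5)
Your proposal is correct and follows exactly the same route as the paper's own proof: the paper also instantiates the family as $\sparsegapip$ with $q = \Theta(2^{k})$, $c = 0.9/q$, $s = 0.6/q$, then quotes \autoref{prop:sparsegapinnerpsr} for the $\owpsrcc \leq O(k)$ upper bound and \autoref{thm:main-neg} for the $\isrcc_\rho \geq 2^{\Omega_\rho(k)}$ lower bound. One small parameter-matching nit: \autoref{thm:main-neg} is stated for $\sparsegapip^n_{.99q,\,.9/q,\,.6/q}$, with the $.99q$ (not $q$) in the sparsity slot; you should either use that exact instantiation or note explicitly that the lower-bound proof is insensitive to this factor, since $\sparsegapip^n_{q,\cdot,\cdot}$ is a sub-problem of $\sparsegapip^n_{.99q,\cdot,\cdot}$ and a lower bound on the larger problem does not formally transfer to the smaller one. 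Your choice $q = \lfloor 2^{k/C_1}\rfloor$ to make the upper bound exactly $\leq k$ is slightly more careful than the paper, which settles for $O(k)$; both are fine since the statement is scale-invariant under $k \mapsto Ck$.
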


\paragraph{Remarks on the proofs.}
\autoref{thm:twoway-psr-isr-upper} and \autoref{thm:psr-isr-lower}
are the technical highlights of this paper and
we describe some of the ideas behind them here.

\autoref{thm:twoway-psr-isr-upper} gives an upper bound for $\owisrcc[\rho]$ for
problems with low {$\psrcc$}. As such this ought to be somewhat surprising in
that for known problems with low probabilistic communication complexity
(notably, equality testing), the known solutions are very sensitive to perturbations of the randomness.
But the formulation in terms of {\gapip} suggests that any such problem
reduces to an approximate inner product calculation; and the theory of metric
embeddings, and examples such as locality sensitive hashing, suggest that one
can reduce the dimensionality of the problems here significantly and this may
lead to some reduced complexity protocols that are also robust to the noise of
the $\rho$-correlated vectors. This leads us to the following idea: To
estimate $\ip{x}{y}$, where $x,y \in {\{0,1\}^n}$, Alice can compute
$a = \ip{g_1}{x}$ where $g_1$ is a random $n$-dimensional spherical Gaussian
and send $a$ (or the most significant bits of $a$) to Bob. Bob can compute $b
= \ip{g_2}{y}$ and $a\cdot b$ is {an unbiased} estimator (up to normalization) of
$\ip{x}{y}$ if $g_1 = g_2$. This protocol can be easily shown to be robust in
that if $g_2$ is only $\rho$-correlated with $g_1$, $a\cdot b$ is still a good
estimator, with higher variance. And it is easy to convert a collection of $\rho$-correlated
bits to $\rho$-correlated Gaussians, so it is possible for Alice and Bob to generate
the $g_1$ and $g_2$ as desired from their imperfectly shared randomness.
A careful analysis (of a variant of this protocol) shows that to estimate
$\ip{x}{y}$ to within an additive error $\eps \normtwo{x} \normtwo{y}$, it
suffices for Alice to send about $1/\eps^2$ bits to Bob, and
this leads to a proof of \autoref{thm:twoway-psr-isr-upper}. 
Next we turn to the proof of \autoref{thm:psr-isr-lower}, which shows a roughly matching
lower bound to \autoref{thm:twoway-psr-isr-upper} above. The insight to this proof comes from examining the ``Gaussian protocol''
above carefully and contrasting it with the protocol used in the perfect
randomness setting. In the latter case Alice uses the randomness to pick one
(or few) coordinates of $x$ and sends some function of these bits to Bob
achieving a communication complexity of roughly $\log(1/\eps)$, using the
fact that only $O(\eps n)$ bits of $x$ are non-zero.
In the Gaussian protocol Alice sends a very ``non-junta''-like function
of $x$ to Bob; this seems robust to the perturbations of the randomness,
but leads to $1/\eps^2$ bits of communication.
This difference in behavior suggests that perhaps functions where variables
have low ``influence'' cannot be good strategies in the setting of perfect
randomness, and indeed we manage to prove such a statement in
\autoref{thm:singlefg}. The proof of this theorem uses a variant of the invariance
principle that we prove (see \autoref{thm:invariance:principle:bifunction}), which shows that if a
communication protocol with
low-influences works in a ``product-distributional'' setting, it will also work with inputs being Gaussian and with the same moments.
This turns out to be a very useful reduction. The reason that
{\sparsegapip} has nice $\owpsrcc$ protocols is the asymmetry between the inputs
of Alice and the inputs of Bob --- inputs of Alice are sparse! But with the
Gaussian variables there is no notion of sparsity and indeed Alice and Bob
have symmetric inputs and so one can now reduce the {``disjointness''} problem
from communication complexity (where now
{Alice and Bob hold sets $A,B\subseteq[1/\eps]$, and would like to distinguish $\abs{A\cap B}=0$ from $\abs{A\cap B}=1$)}
to the Gaussian inner product problem. Using the
well-known lower bound on {disjointness}, we conclude that {$\Omega(1/\eps)$ bits} of communication are necessary and this proves \autoref{thm:singlefg}.

Of course, all this rules out only one part of the solution space for the
communication complexity problem, one where Alice and Bob use functions of
low-influence. To turn this into a general lower bound we note that if Alice
and Bob use functions with some very influential variables, then they should
agree on which variable to use (given their randomness $r$ and $r^\prime$). Such
agreement on the other hand cannot happen with too high a probability by
our lower bound on \agreement (from \autoref{th:agreementintro}). Putting all these
ingredients together gives us a proof of \autoref{thm:psr-isr-lower} (see
\autoref{ssec:lowerbound:isr}) for more details).

\paragraph{Organization of the rest of the paper} The rest of the paper contains details and proofs of the theorems mentioned in this section. In the next section (\autoref{sec:compress}), we prove our isr upper bound for the ``Uncertain Compression'' problem, {namely} \autoref{thm:compress}. We then turn, in \autoref{sec:agreement}, to the matching upper and lower bounds for ''Agreement Distillation'' {as described in} \autoref{prop:agree-positive} and \autoref{th:agreementintro}. \autoref{sec:general} contains the details of our main results relating communication with {perfectly} and {imperfectly} shared randomness, \autoref{thm:twoway-psr-isr-upper} and \autoref{thm:psr-isr-lower}: we first describe an alternate characterization of communication strategies in \autoref{ssec:strategy:preliminaries}, which allows us to treat them as vectors in (carefully defined) convex sets. This enables us to use ideas and machinery from Gaussian analysis: in particular, our lower bound on isr presented in \autoref{sec:proof:theo:main-neg} relies on a new invariance theorem, \autoref{thm:invariance:principle:bifunction}, that we prove in \autoref{sec:invariance}.

\section{Compression}\label{sec:compress}

In this section, we prove \autoref{thm:compress}, restated below:
\thmcompress*

\begin{proof}[Proof of \autoref{thm:compress}]
Let $\mu = (1-\rho)/2$ and $\eps'>0$ be such that
$1/(1-h(\mu + \eps')) = (1+\eps)/(1-h(\mu))$.
Let $c = O(\frac{1}{{\eps'}^2}\ln(1/\delta))$.

We interpret the random strings $r$ and $r^\prime$ as two ``dictionaries'', i.e.,
as describing words $\{w_{i,j} \in \{-1,+1\}^j\}_{{i \in [n],j\in\N}}$
and $\{w'_{i,j} \in \{-1,+1\}^j\}_{{i \in [n],j\in\N}}$, with the property that for every
$i,j$ and coordinate $k \in [j]$, the $k$th coordinates of $w_{i,j}$
and $w'_{i,j}$ are $\rho$-correlated.

\noindent On input $P,m$ Alice sends $X = w_{m,j}$ to Bob
where $j = \max\{c, \frac{1+\eps}{1 - h(\mu)}\left(\log (1/P(m)) + 2\Delta + \log(1/\delta)\right) \}$. On input $Q$ and on receiving $X$ from Alice, Bob computes
$j = |X|$ and the set
\[ S_X = \setOfSuchThat{\tilde{m} }{ \mathrm{dist}(w'_{\tilde{m},j},X) \leq (\mu + \eps')j } \ , \]
 where $\mathrm{dist}$ denotes the Hamming distance between strings. Bob then outputs $\mathrm{argmax}_{\tilde{m} \in S_X} \{Q(\tilde{m})\}$ (so it outputs
the most likely message after some error-correction).

It is clear from construction that the expected length of the
communication when $m \sim P$ is at most
\begin{multline*}
\E_{m \sim P} \left[\frac{1+\eps}{1 - h(\mu)}\left(\log (1/P(m)) + 2\Delta +
c\right)\right]
= \\ \frac{1+\eps}{1 - h(\mu)}\left( \E_{m \sim P} [\log (1/P(m))] + 2\Delta + c\right)
= \frac{1+\eps}{1 - h(\mu)}\left( H(P) + 2\Delta + c\right).
\end{multline*}

We finally turn to correctness, i.e., to show that Bob's output $\tilde{m} =
m$ with probability at least $1-\delta$. First note that the probability
that $m \in S_X$ is at least $(1-\delta/2)$ (by a simple application of
Chernoff bounds and the fact that $j$ is sufficiently large compared to
$\eps'$ and $\delta$).
Now let $T_m = \setOfSuchThat{m'\ne m }{ P(m') \geq P(m)/4^\Delta}$. Note that $|T_m|\leq
4^\Delta/P(m)$. For any fixed $m' \in T_m$, we have that the probability
(over the choice of $w'_{m',j}$) that $m' \in S_X$ is at most
$2^{-(1-h(\mu+\eps'))j}$. \newer{Indeed, since $m' \neq m$, the two sets of indices corresponding to $w'_{m',j}$ and $w_{m,j}$ (in the ``random bit dictionaries'' $r,r'$) are disjoint, and so $w'_{m',j}$, $w_{m,j}$ are independent.
Now, for any two independent u.a.r. $u,v$, the probability that $\dist{u}{v} \leq \alpha j$ is given by
\[
  \sum_{i \leq \alpha j} \binom{j}{i} \frac{1}{2^{i}}\frac{1}{2^{j-i}}
  = \frac{1}{2^j}\sum_{i \leq \alpha j} \binom{j}{i}
  \leq \frac{1}{2^j}\sum_{i \leq \alpha j} \binom{j}{i}
  \leq 2^{-j}2^{h(\alpha)j}.
\]
}
Taking the union bound over $m' \in T_m$ and
plugging in our choice of $j$, we have
that with probability at least $1 - \delta/2$, $T_m \cap S_X = \emptyset$.
With probability at least $1 - \delta$ both events above happen and when
they do, as {$m\in S_X$ satisfies $Q(m) \geq \frac{P(m)}{2^\Delta}$ and any other element $m^\prime\in S_X$ is such that $Q(m^\prime) \leq 2^\Delta P(m^\prime) < 2^\Delta P(m)/4^\Delta$}, we have $\tilde{m} = m$.
\end{proof}

\section{Agreement Distillation}\label{sec:agreement}

In this section we give proofs of \autoref{prop:agree-positive} and
\autoref{th:agreementintro}
which respectively give upper and lower bounds on the one-way
communication complexity of randomness distillation.

We start with the upper bound, which relies on the existence of
linear error-correcting codes capable of correcting $\mu \triangleq
\frac{1-\rho}2$ fraction errors. The fact that such codes have rate
approaching
$1 - h(\mu)$ yields the result that agreement distillation requires
$(1 + o_k(1))\cdot h(\mu) \cdot k$ communication for $\gamma \to 1$.
Details below.

\begin{proof}[Proof of \autoref{prop:agree-positive}] Let $\eps>0$ be any positive constant and let $\bern{k}{\mu}$ be the distribution on $\{0,1\}^{k}$ where each bit is independent and is $1$ with probability $\mu$. {Our proof protocol will rely on the existence of a certain matrix $H \in \{0,1\}^{\ell \times k}$ over $\mathbb{F}_2$ satisfying the following property:
\begin{equation}\label{eq:condition:matrix:agreement}
  \Pr_{e \sim \bern{k}{\mu}} \left[\exists e'\ne e \mbox{ s.t. } \wt(e') \leq
(\mu+\eps)k \mbox{ and } H\cdot e' = H \cdot e\right] \leq \delta/2
\end{equation}
(we will establish the existence of such a matrix later, for $\ell = h(\mu + \eps)k$).

With this matrix $H$ in hand, Alice and Bob act as follows.} Given $\rho$ correlated strings $r,r' \in \{0,1\}^k$, Alice's output is
$w_A = r$. She communicates $y=H\cdot r$ to Bob. Bob's output is $w_B =
\tilde{r}$ such that (i) $H\cdot \tilde{r} = y$ and
(ii) $\dist{\tilde{r}}{r'} \leq (\mu+\eps)k$, provided
$\tilde{r}$ with these properties exists and is unique. Else he outputs $r'$.

It follows that unless $\dist{{r}}{r'} > (\mu+\eps)k$ or if $\exists e' \neq e \triangleq r-r'$ such that $\wt(e') \leq
(\mu+\eps)k$  and $H\cdot e' = H \cdot e$, we have $\tilde{r} = r$. {(Indeed, this is a consequence of the above property of $H$, writing $e' \triangleq \tilde{r}-r'$ and observing that $\wt(e')=\dist{\tilde{r}}{r'}$, and that $He' = He$ if and only if $H\cdot \tilde{r}=H\cdot r$.)}
The probability of either event above is small (by Chernoff bound for
the first, and by the condition on $H$ for the second). 

It remains to prove the existence of a matrix $H$ satisfying~\eqref{eq:condition:matrix:agreement}, for small $\ell\in\N$. 
We will actually show that a random matrix satisfies this condition for $\ell = h(\mu +
\eps)k$ with probability tending to $1$ as $k$ goes to $\infty$: indeed, fixing any non-zero vector $x\in\{0,1\}^k$, a random matrix $H$ obtained by setting independently each coefficient to be $1$ with probability $1/2$ satisfies $\probaOf{Hx = 0^\ell} = 2^{-\ell}$. The claim follows from a union bound over the (at most) $\sum_{i\leq (\mu+\eps)k} \binom{k}{i} \leq 2^{h(\mu+\eps)k - \omega_k(1)}$ vectors $e^\prime$.
\end{proof}

We now turn towards a proof of \autoref{th:agreementintro}. We first
consider the setting of {\em zero} communication, i.e., when Alice and Bob
are not allowed to communicate at all. Here we use the following lemma
due to \cite{BogdanovMossel} which shows that the
agreement probability $\gamma$ is exponentially small in $k$.

\begin{lemma}[{\cite[Theorem 1]{BogdanovMossel}}]\label{lem:lb:agreement}
$\forall \rho < 1, \exists\eps > 0$ such that for every zero-communication protocol for \linebreak$\agreement^k_\gamma$, we have
$\gamma \leq 2^{-\eps k}$. 
(Furthermore, one can take $\eps = 1-\bigO{\rho}$). \end{lemma}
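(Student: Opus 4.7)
My plan is to establish the lemma via the standard machinery of hypercontractivity on the noisy hypercube applied to the preimages of Alice's and Bob's outputs. First, since there is no communication, I may assume without loss of generality that the two protocols are deterministic functions $f,g\colon\{-1,+1\}^m\to\mathcal{U}$ of Alice's and Bob's randomness $r$ and $r^\prime$ respectively (any internal randomness can be fixed to the value maximizing $\gamma$). For $v\in\mathcal{U}$, let $A_v=f^{-1}(v)$ and $B_v=g^{-1}(v)$, and write $\alpha_v=|A_v|/2^m$, $\beta_v=|B_v|/2^m$. The min-entropy hypothesis translates to $\alpha_v,\beta_v\leq 2^{-k}$ for every $v$, while $\sum_v \alpha_v=\sum_v \beta_v=1$.

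Next, I would write the agreement probability as $\gamma=\sum_v \Pr_{r\sim_\rho r^\prime}[r\in A_v\text{ and }r^\prime\in B_v]=\sum_v \langle \mathbf{1}_{A_v},T_\rho \mathbf{1}_{B_v}\rangle$, where $T_\rho$ is the Bonami--Beckner noise operator on $\{-1,+1\}^m$. The key tool is the hypercontractive inequality: taking $p=q=1+\rho$ (so that $(p-1)(q-1)=\rho^2$), one obtains $\langle \mathbf{1}_{A_v},T_\rho \mathbf{1}_{B_v}\rangle\leq \|\mathbf{1}_{A_v}\|_{1+\rho}\|\mathbf{1}_{B_v}\|_{1+\rho}=\alpha_v^{1/(1+\rho)}\beta_v^{1/(1+\rho)}$, which is the main bound on each individual summand and captures the small-set expansion property of the noisy hypercube.

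To finish, I would sum these bounds and apply Cauchy--Schwarz, which gives
\[
  \gamma\;\leq\;\sum_v \alpha_v^{1/(1+\rho)}\beta_v^{1/(1+\rho)}\;\leq\;\Bigl(\sum_v \alpha_v^{2/(1+\rho)}\Bigr)^{1/2}\Bigl(\sum_v \beta_v^{2/(1+\rho)}\Bigr)^{1/2}.
\]
For each factor I would use the min-entropy constraint by writing $\alpha_v^{2/(1+\rho)}=\alpha_v\cdot\alpha_v^{(1-\rho)/(1+\rho)}\leq 2^{-k(1-\rho)/(1+\rho)}\alpha_v$, and summing over $v$ using $\sum \alpha_v=1$. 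The same bound holds for $\beta$, so $\gamma\leq 2^{-\eps k}$ with $\eps=(1-\rho)/(1+\rho)$. This is strictly positive for every $\rho<1$, and a Taylor expansion gives $\eps=1-2\rho+O(\rho^2)=1-O(\rho)$ in the small-$\rho$ regime, matching the quantitative claim.

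The main conceptual obstacle is to see that one should not try to bound agreement on a single ``heavy'' output value, but rather apply hypercontractivity level-set by level-set and then exploit min-entropy only at the very last step via a Hölder-type interpolation between $\sum \alpha_v^{2/(1+\rho)}$ and the $L^1$ and $L^\infty$ norms of the distribution of $w_A$. Once one commits to this two-step structure (hypercontractivity to control correlations, interpolation to convert min-entropy into a sub-$L^2$ mass bound), the calculation is essentially forced.
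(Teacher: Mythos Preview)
Your argument is correct and lands on the same exponent $\eps=(1-\rho)/(1+\rho)$ as the paper. The paper itself merely cites Bogdanov--Mossel here; its (suppressed) proof applies the two-set Small-Set Expansion Theorem to each pair $(A_v,B_v)$ and then performs a case split on which of $\alpha_v,\beta_v$ is larger in order to bound the conditional probabilities $\Pr[r\in A_v\mid r'\in B_v]$ before summing, obtaining $\gamma\le 2\cdot 2^{-k(1-\rho)/(1+\rho)}$. You use the same hypercontractive input in its symmetric form $\langle \mathbf{1}_{A_v},T_\rho \mathbf{1}_{B_v}\rangle\le \alpha_v^{1/(1+\rho)}\beta_v^{1/(1+\rho)}$ and replace the case analysis by Cauchy--Schwarz together with the interpolation $\alpha_v^{2/(1+\rho)}\le 2^{-k(1-\rho)/(1+\rho)}\alpha_v$; this is a little cleaner and even saves the leading factor of~$2$.

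One small point: fixing private coins ``to the value maximizing $\gamma$'' can destroy the min-entropy guarantees $H_\infty(w_A),H_\infty(w_B)\ge k$, so the determinization step is not justified as written (the paper's hidden proof makes the same tacit assumption of deterministic extractors). The fix is immediate within your framework: keep the private coins $R_A,R_B$, apply the hypercontractive bound for each fixing, take expectations, and use independence of $R_A,R_B$ together with Jensen's inequality (concavity of $t\mapsto t^{1/(1+\rho)}$) to pass to $\bar\alpha_v=\Pr[w_A=v]\le 2^{-k}$ and $\bar\beta_v=\Pr[w_B=v]\le 2^{-k}$ before invoking Cauchy--Schwarz.
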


\noindent We now derive \autoref{th:agreementintro} as a corollary of \autoref{lem:lb:agreement}. 
\begin{proof}[Proof of \autoref{th:agreementintro}]
       Suppose $\Pi$ is a $c$-bit communication protocol for $\agreement^k_\gamma$. We claim we can convert $\Pi$ to a zero-bit communication protocol $\Pi'$ for $\agreement^k_{\gamma'}$, where the agreement $\gamma'$ is $\poly(\gamma,2^{-c})$.  To do so, let $(r,s)$ denote the randomness inputs to Alice and Bob. Let $\Pi_A(r,s)$ and $\Pi_B(r,s)$ denote Alice's and Bob's outputs, respectively. The zero-communication protocol $\Pi'$ is obtained as follows.
\begin{itemize}
  \item $\Pi'_A(r)$: Alice samples $s'$ conditioned on $r$ and outputs $\Pi_A(r,s')$;
  \item $\Pi'_B(s)$: Bob similarly samples $r'$ conditioned on $s$ and outputs $\Pi_B(r',s)$.
\end{itemize}
Since the input distributions in the invocation of $\Pi_A$ and $\Pi_B$ {(by $\Pi'_A$ and $\Pi'_B$, respectively)} are exactly the same as in the $c$-communication protocol, the entropy of the output is unchanged. So it suffices to argue that agreement happens with probability $\poly(\gamma, 2^{-c})$.

Let $P(r,s)$ denote the probability of the input being $(r,s)$. Fix the private randomness $\theta\eqdef (\theta_A,\theta_B)$ of Alice and Bob, and let $\gamma_{\theta}$ be the agreement probability for this fixed choice of the randomness of Alice and Bob. From now on we can consider $\Pi$ to be a deterministic function of $r$ and $s$. Let $t(r,s)$ denote the transcript under $\Pi$ on input $(r,s)$, and $Q_r(t)$ (resp. $Q_s(t)$) denote the probability that the transcript of $\Pi$ equals $t\in \{0,1\}^c$ conditioned on Alice's input being $r$ (resp. Bob's input being $s$). Let $G$ be the subset of values of $(r,s)$ on which $\Pi$ has agreement, so that $\sum_{(r,s) \in G} P(r,s) = \gamma_{\theta}$.

Note that the agreement $\gamma'_{\theta}$ under $\Pi'$ (on randomness $\theta$) can be lower bounded by $\gamma_{\theta}'\geq \sum_{(r,s) \in G} P(r,s) \cdot Q_r(t(r,s)) \cdot Q_s(t(r,s))$, which is the probability with which Alice and Bob generate the same transcript in $\Pi'$ as they would under $\Pi$. Now, had all the $Q_r(t)$'s been equal to $2^{-c}$, we would be immediately done with a lower bound of $\gamma_{\theta}/4^{-c}$. We shall obtain a slightly worse bound in order to handle non-uniform distributions.

Say the transcript $t$ is \emph{unlikely for $r$} if $Q_r(t) < \frac{\gamma_{\theta}}{4}2^{-c}$. We consider the set of ``unlikely randomness'' $B \eqdef \setOfSuchThat{(r,s) }{ t(r,s) \text{ is unlikely for } r \text{ or is unlikely for } s}$.
We first note that $\sum_{(r,s) \in B} P(r,s) < \frac{\gamma_{\theta}}{2}$. To see this, observe that
\begin{align*}
\sum_{(r,s) : t(r,s) \text{ unlikely for } r} P(r,s)
&= \sum_r \sum_{t : t \text{ unlikely for } r} \sum_{s : t(r,s) = t} P(r,s) \\
&= \sum_r P(r) \sum_{t : t  \text{ unlikely for } r} Q_r(t) \\
&<  \sum_r P(r) \sum_{t : t  \text{ unlikely for } r} \frac{\gamma_{\theta}}{4}2^{-c} \\
&< \frac{\gamma_{\theta}}{4}
\end{align*}
Similarly, we have $\sum_{(r,s) : t(r,s) \text{ unlikely for } s} P(r,s)< \frac{\gamma_{\theta}}{4}$, yielding the claimed bound.

\noindent From there, we can write
\begin{align*}
  \gamma'_{\theta} &\geq \sum_{(r,s) \in G} P(r,s) \cdot Q_r(t(r,s)) \cdot Q_s(t(r,s))\\
    &\geq \sum_{(r,s) \in G \setminus B} P(r,s) \cdot Q_r(t(r,s)) \cdot Q_s(t(r,s)) \\
    &\geq \sum_{(r,s) \in G \setminus B} P(r,s) \cdot \left( \frac{\gamma_{\theta}}{4}2^{-c} \right)^2 \\
    &= \frac{\gamma^2_{\theta}}{16} 4^{-c} \left( \sum_{(r,s) \in G} P(r,s) - \sum_{(r,s) \in B} P(r,s) \right)   \\
    &\geq \frac{\gamma^3_{\theta}}{32} 4^{-c}
\end{align*}

\noindent Finally, we take expectations over the private randomness $\theta$: this leads to
\[
  \shortexpect_\theta[\gamma'_\theta] \geq \shortexpect_\theta\left[\frac{\gamma_\theta^3}{32}\cdot 4^{-c}\right] \geq \frac{\shortexpect_\theta[\gamma_\theta]^3}{32}\cdot 4^{-c} = \frac{\gamma^3}{32}\cdot 4^{-c},
\]
as claimed. Applying \autoref{lem:lb:agreement}, we get that for some constant $\eps^\prime = 1-O(\rho) \in (0,1)$, $2^{-2c}\frac{\gamma^3}{32} \leq 2^{-\eps^\prime k}$ and thus $c \geq \frac{\eps^\prime}{2}k - \frac{3}{2}\log\frac{1}{\gamma} - O(1)$ as desired. Taking $\eps\eqdef \frac{\eps^\prime}{2}$ concludes the proof.
\end{proof}

\section{General connection between {perfectly} and {imperfectly} shared randomness}
\label{sec:general}

In this section we present proofs of
\mnew{
\autoref{thm:twoway-psr-isr-upper} and
\autoref{thm:psr-isr-lower}}.
Key to both our upper bound on $\owisrcc(P)$ in terms of
$\psrcc(P)$, and our lower bound on $\isrcc(\sparsegapip)$,
is a representation of communication strategies as vectors, \mnew{where} the
success probability of an interaction \mnew{is proportional} to the inner product of these vectors.
We describe this representation in \autoref{ssec:strategy:preliminaries} below. We then
use this representation to show that \gapip is hard for
$\capspsrcc(k)$ in \autoref{ssec:upperbound:isr:psr}. We also give a one-way isr protocol
for \gapip in the same section thus giving a proof of
\autoref{thm:twoway-psr-isr-upper}.
Finally in \autoref{ssec:lowerbound:isr} we give a one-way psr protocol for
\sparsegapip, and then state our main technical result ---  an exponentially higher
lower bound for it in the two-way isr setting (with the proof deferred to \autoref{sec:proof:theo:main-neg} modulo an invariance principle which is established in \autoref{sec:invariance}). The lower bound uses the fact
that the space of strategies in the vector representation forms a bounded convex
set.

\subsection{Communication Strategies: Inner Products and Convexity}
\label{ssec:strategy:preliminaries}

We start by formalizing deterministic and probabilistic (private-coin) two-way communication strategies
for Alice and Bob. By ``strategy'' we mean what Alice would do given her input and
randomness, as a function of different messages that Bob may send her, and
vice versa.
We restrict our attention to canonical protocols in
which Alice and Bob strictly alternate and communicate {one bit per round}; and
the eventual outcome is a Boolean one, determined after
$k$ rounds of communication. (So the only problems that can be solved this way are ``promise problems''.)
Without loss of generality we also assume that the
last bit communicated is the output of the communication protocol.

The natural way to define deterministic strategies would be in terms of a triple
$(f_A,f_B,v)$ where
$f_A = (f^{2i}_A\colon\{0,1\}^{2i} \to \{0,1\})_{0 \leq i < k/2}$ is a sequence of functions
and so is
$f_B = (f^{2i+1}_B\colon\{0,1\}^{2i+1} \to \{0,1\})_{0 \leq i < k/2}$
and $v\colon\{0,1\}^k \to \{0,1\}$.
The function $f^{2i}_A(h)$ determines Alice's message bit after $2i$ rounds of
communication, with $h \in \{0,1\}^{2i}$ being the transcript of the interaction thus
far.
Similarly the functions $f^{2i+1}_B(h)$ determine Bob's message bit after $2i+1$ rounds of communication.
Finally, $v$ denotes the verdict function. Since we assumed that the last
bit transmitted is the output, we have $v(\ell_1,\ldots,\ell_k) = \ell_k$.
Thus the output of an interaction is given by
$v(\ell)$ where $\ell = (\ell_1,\ldots,\ell_k)$ is
given by $\ell_{2i+1} = f_A^{2i}(\ell_1,\ldots,\ell_{2i})$ and
$\ell_{2i+2} = f_B^{2i+1}(\ell_1,\ldots,\ell_{2i+1})$ for $0\leq i \leq k/2$.
The interpretation is that Alice
can determine the function $f_A$ from her input and Bob can determine $f_B$ from his input,
and this allows both to determine the output after $k$ rounds of interaction.

We will be moving on to the vector representation of strategies shortly, but
first we describe probabilistic interactions, where Alice and Bob have private
randomness.\footnote{{Since we do not concern ourselves with computational complexity of the protocol, we can assume without loss of generality that the players use fresh randomness at every stage. Indeed, both parties, if willing to rely on some of their respective ``previous randomness,'' can instead sample new random bits conditioned on the past transcript.}} Such an interaction is also described by a triple $(f_A,f_B,v)$ except
that now $f_A = (f^{2i}_A\colon\{0,1\}^{2i} \to [0,1])_{0 \leq i < k/2}$ and
$f_B = (f^{2i+1}_B\colon\{0,1\}^{2i+1} \to [0,1])_{0 \leq i < k/2}$. The outcome is now
the random variable $v(\ell)$ where $\ell = (\ell_1,\ldots,\ell_k)$ is the random variable
determined inductively by letting
$\ell_{2i+1} = 1$ with probability $f_A^{2i}(\ell_1,\ldots,\ell_{2i})$ and
$\ell_{2i+2} = 1$ with probability $f_B^{2i+1}(\ell_1,\ldots,\ell_{2i+1})$ for $0\leq i \leq k/2$.

Our vector representation of deterministic interactions is obtained by
considering the set of
``plausible final transcripts'' that a player might see given their own
strategy.
Recall that the transcript of an interaction is a $k$-bit string and there are $2^k$ possible transcripts.
In the new representation, we represent Alice's strategy (i.e., the functions $f_A$) by a
vector $\strat{A} \in \{0,1\}^{2^k}$
where $\strat{A}(\ell) = 1$ if and only if $\ell \in \{0,1\}^k$ is a transcript {\em consistent} with
Alice's strategy.
(We give a more formal description shortly.)
For probabilistic communication strategies
(corresponding to Alice and Bob working with private randomness),
we represent them by vectors $\strat{A}$ and $\strat{B}$ in
$[0,1]^{2^k}$. We formalize the set of such strategies, and verdicts, below.

In what follows we describe subsets of {$[0,1]^{2^k}$} that are supposed to describe the strategy space for Alice and Bob. Roughly,
we wish to allow $\strat{A} = (\strat{A}(i_1,\ldots,i_k))_{i_1,\ldots,i_k \in \{0,1\}}$ to be
an ``Alice strategy'' (i.e., a member of $K_A$) if for every
$i_1,\ldots,i_k$ there exists a Bob strategy such that Alice reaches the transcript $i_1,\ldots,i_k$ with probability $\strat{A}(i_1,\ldots,i_k)$.
To describe this set explicitly we introduce auxiliary variables
$\ptranscript{A}(i_1,\ldots,i_j)$ for every $0 \leq j \leq k$ and $i_1,\ldots,i_j \in
\{0,1\}$ where $\ptranscript{A}(i_j,\ldots,i_j)$ denotes the probability (again
maximized over Bob strategies) of reaching the partial transcript
$i_1,\ldots,i_j$. In what follows we first show that the auxiliary variables are linear forms in $\strat{A}$ and then show the conditions that the auxiliary
variables satisfy. (We warn the reader that the first step --- showing that the
$\ptranscript{A}(\cdots)$'s are linear forms in $\strat{A}$ -- relies on the constraints imposed
later and so some of the definition may be slightly non-intuitive.)
Together the two steps allows us to show that the space of strategies is a
{(closed)} convex set.

\def\withpt{1} \ifnum\withpt=1
\newcommand{\pt}{\operatorname*{PT}}
\begin{definition}\label{def:xa:xb}
We define the \emph{partial transcript operators} $\pt_A,\pt_B\colon [0,1]^{2^k} \to ([0,1]^{\{0,1\}^{j}})_{0\leq j \leq k}$, which map a vector $\strat{} \in [0,1]^{2^k}$ to respectively Alice and Bob strategies $\pt_A(\strat{}),\pt_B(\strat{})$. 
For vector $\strat{} \in [0,1]^{2^k}$, $0\leq j\leq k$, and $i_1,\ldots,i_j \in \{0,1\}$
let $(\pt_A(\strat{}))_j(i_1,\ldots,i_j) = 
\ptranscript{A}(i_1,\ldots,i_j) \in [0,1]$ and 
$(\pt_B(\strat{}))_j(i_1,\ldots,i_j) = 
\ptranscript{B}(i_1,\ldots,i_j) \in [0,1]$ be defined as
follows:
\begin{align*}
 \ptranscript{A}(i_1,\ldots,i_j) & = \begin{cases}
				\strat{}(i_1,\ldots,i_k) &\text{ if } j=k\\
				\ptranscript{A}(i_1,\ldots,i_j,0) + \ptranscript{A}(i_1,\ldots,i_j,1) &\text{ if } j \text{ is even.}\\
				\frac12\left(\ptranscript{A}(i_1,\ldots,i_j,0) + \ptranscript{A}(i_1,\ldots,i_j,1)\right) &\text{ if } j \text{ is odd.}\\
			\end{cases}
\end{align*}
\begin{align*}
 \ptranscript{B}(i_1,\ldots,i_j) & = \begin{cases}
				\strat{}(i_1,\ldots,i_k) &\text{ if } j=k\\
				\frac12\left(\ptranscript{B}(i_1,\ldots,i_j,0) + \ptranscript{B}(i_1,\ldots,i_j,1)\right) &\text{ if } j \text{ is even.}\\
				\ptranscript{B}(i_1,\ldots,i_j,0) + \ptranscript{B}(i_1,\ldots,i_j,1) &\text{ if } j \text{ is odd.}\\
			\end{cases}
\end{align*}

\noindent Define
\[\tilde{K}_A = \tilde{K}^{(k)}_A = \setOfSuchThat{ \strat{} \in [0,1]^{2^k} }{ \ptranscript{A}() = 1 \text{ and }\forall \text{ odd
}j, \forall i_1,\ldots,i_j \in \{0,1\}, ~~
				\ptranscript{A}(i_1,\ldots,i_j,0) =
				\ptranscript{A}(i_1,\ldots,i_j,1) },
				\]
and
\[\tilde{K}_B = \tilde{K}^{(k)}_B = \setOfSuchThat{\strat{} \in [0,1]^{2^k} }{ \ptranscript{B}() = 1 \text{ and }\forall \text{
even }j, \forall i_1,\ldots,i_j \in \{0,1\}, ~~
				\ptranscript{B}(i_1,\ldots,i_j,0) =
				\ptranscript{B}(i_1,\ldots,i_j,1) }.
				\]
Let $K_A = \setOfSuchThat{\strat{} \ast v }{ \strat{} \in \tilde{K}_A }$, where $v \in \{0,1\}^{2^k}$ is
given by $v(i_1,\ldots,i_k) = i_k$ (and $a \ast b$ denotes coordinate-wise
multiplication of vectors $a$ and $b$).
Let
$\tilde{S}_A = \tilde{K}_A \cap \{0,1\}^{2^k}$,
$\tilde{S}_B = \tilde{K}_B \cap \{0,1\}^{2^k}$,
$S_A = K_A \cap \{0,1\}^{2^k}$,
and $S_B = K_B \cap \{0,1\}^{2^k}$.
\end{definition}

\begin{figure}[h]\centering
\scalebox{0.8}{
    \begin{tikzpicture}[->,>=stealth',
      level 1/.style={sibling distance=85mm},
      level 2/.style={sibling distance=40mm},
      level 3/.style={sibling distance=0mm},
      level distance=20mm,
      ]
    \node {$\underbrace{\ptranscript{A}(i_1,\dots,i_{2j})}_{\color{black!65}0+1=1}$}
        child{ node {$\underbrace{\ptranscript{A}(i_1,\dots,i_{2j},i_{2j+1}=0)}_{\color{black!65}\frac{1}{2}(0+0)=0}$}
                child{ node {$\underbrace{\ptranscript{A}(i_1,\dots,i_{2j},i_{2j+1},0)}_{\color{black!65}0}$} edge from parent node[above left] {$0$}
                }
                child{ node {$\underbrace{\ptranscript{A}(i_1,\dots,i_{2j},i_{2j+1},1)}_{\color{black!65}0}$} edge from parent node[above right] {$1$}
                }
           edge from parent node[above left] {$0$}
        }
        child{ node {$\underbrace{\ptranscript{A}(i_1,\dots,i_{2j},i_{2j+1}=1)}_{\color{black!65}\frac{1}{2}(1+1)=1}$}
                child{ node {$\underbrace{\ptranscript{A}(i_1,\dots,i_{2j},i_{2j+1},0)}_{\color{black!65}1}$} edge from parent node[above left] {$0$}
                }
                child{ node {$\underbrace{\ptranscript{A}(i_1,\dots,i_{2j},i_{2j+1},1)}_{\color{black!65}1}$} edge from parent node[above right] {$1$}
                }
          edge from parent node[above right] {$1$}
		    }
    ;
    \end{tikzpicture}
}
\caption{\label{fig:constraints:xa} Illustration of the constraints on $\ptranscript{A}$ (\autoref{def:xa:xb}).}
\end{figure}

In what follows we first focus on deterministic communication strategies
and show that $\tilde{S}_A$, $\tilde{S}_B$ correspond to
the space of deterministic communication strategies for Alice and Bob,
while $S_A$ and $S_B$ correspond to outputs computed by such strategies.
This step is not strictly needed for this paper since our main focus is
on probabilistic strategies and the convex sets $K_A$ and $K_B$, but
we include it for completeness.

\begin{proposition}
\label{prop:sa:sb}.
$\tilde{S}_A$ and $\tilde{S}_B$ correspond to the set of deterministic communication
strategies with $k$ bits. For every strategy $f_A$ of Alice there exist
vectors
$\tildestrat{A} \in \tilde{S}_A$ and $\strat{A}\in S_A$ and for every strategy $f_B$ of Bob there exist  vectors $\tildestrat{B} \in
\tilde{S}_B$ and $\strat{B} \in S_B$
such that if $\ell \in \{0,1\}^k$ is the transcript of the interaction between Alice and Bob
under strategies $f_A$ and $f_B$, then $\ell$ is the unique sequence satisfying 
$\tildestrat{A}(\ell) = \tildestrat{B}(\ell) = 1$ and $\ip{\strat{A}}{\strat{B}} = 1$ if the interaction
accepts and $0$ otherwise.

Conversely every vector $\strat{A} \in S_A$ corresponds to a strategy $f_A$ for Alice
(and similarly for Bob) such that Alice and Bob accept the interaction iff $\ip{\strat{A}}{\strat{B}} = 1$.
\end{proposition}
\begin{proof}
Given $f_A$ to construct $\tildestrat{A}$, we let $\tildestrat{A}(\ell) = 1$ if there exists $f_{B,\ell}$ such that the final
transcript of the interaction given by $f_A$ and $f_{B,\ell}$ is $\ell$.
Furthermore let $\tildeptranscript{A}(i_1,\ldots,i_j) = 1$ if there exists a Bob
strategy $f_{B,i_1,\ldots,i_j}$ such that $i_1,\ldots,i_j$ is the partial
transcript of the interaction between Alice and Bob; {otherwise let $\tildeptranscript{A}(i_1,\ldots,i_j) = 0$}.
It is now straightforward to verify that the $\tildeptranscript{A}(i_1,\ldots,i_j)$ satisfy the conditions of the definition of $\tildestrat{A}$ and the conditions required for
membership in $\tilde{K}_A$. In particular we have the following three conditions: (1)
$\tildeptranscript{A}() = 1$ since the empty transcript is a legal partial transcript.  (2) If $j$ is an
even index (and so Alice speaks in round $j+1$) and $\ptranscript{A}(i_1,\ldots,i_j) = 0$ (so the partial
transcript $i_1,\ldots,i_j$ is not reachable given Alice's strategy), then we must have
$\ptranscript{A}(i_1,\ldots,i_j,0) = \ptranscript{A}(i_1,\ldots,i_j,1) = 0$ (no extension is reachable either). If
$\ptranscript{A}(i_1,\ldots,i_j) = 1$ then exactly one of the extensions must be reachable (based on Alice's
message at this stage) and so again we have
$\ptranscript{A}(i_1,\ldots,i_j) = \ptranscript{A}(i_1,\ldots,i_j,0) + \ptranscript{A}(i_1,\ldots,i_j,1)$. (3) If $j$ is odd and it is
Bob's turn to speak, then again if $\ptranscript{A}(i_1.\ldots,i_j) = 0$ we have
$\ptranscript{A}(i_1,\ldots,i_j,0) = \ptranscript{A}(i_1,\ldots,i_j,1) = 0$. On the other hand if
$\ptranscript{A}(i_1,\ldots,i_j) = 1$ then for each extension there exists a strategy of Bob that permits this
extension and so we have
$\ptranscript{A}(i_1,\ldots,i_j,0) = \ptranscript{A}(i_1,\ldots,i_j,1) = 1$ satisfying the condition for odd $j$.
The above three conditions verify membership in $\tilde{K}_A$ and since $\tildestrat{A}$ is a 0/1 vector, we also have
$\tildestrat{A} \in \tilde{S}_A$. The vector $\strat{A} = \tildestrat{A} \ast v$ gives the
corresponding vector in $S_A$.

Given a pair of strategies $f_A$ and $f_B$, let 
$\tildestrat{A}$ and $\tildestrat{B}$ be the corresponding vectors representing the strategies, and let 
$\tildeptranscript{A}(i_1,\ldots,i_j) = (\pt_A(\tildestrat{A})_j)(i_1,\ldots,i_j)$ 
and 
$\tildeptranscript{B}(i_1,\ldots,i_j) = (\pt_A(\tildestrat{B})_j)(i_1,\ldots,i_j)$ denote the partial transcripts.
We now prove the existence and uniqueness of a leaf $\ell$ such that 
$\tildestrat{A}(\ell) = \tildestrat{B}(\ell) = 1$.
We do so by showing, by induction on $j$, that for every $j \in \{0,\ldots,k\}$
there exists a unique sequence $(i_1,\ldots,i_j)$ such
that $\tildeptranscript{A}(i_1,\ldots,i_j) = \tildeptranscript{B}(i_1,\ldots,i_j) =
1$. Using the fact that 
$\tildeptranscript{A}(i_1,\ldots,i_k) = 
\tildestrat{A}(i_1,\ldots,i_k)$,
we get the desired existence and uniqueness (for $\ell = (i_1,\ldots,i_k)$).
We refer to a sequence $(i_1,\ldots,i_j)$ as a valid (partial) transcript if
$\tildeptranscript{A}(i_1,\ldots,i_j) = \tildeptranscript{B}(i_1,\ldots,i_j) = 1$
and as an invalid transcript otherwise.
The base case of the induction is true since there is only one transcript of
length $0$ and 
we have $\tildeptranscript{A}() = \tildeptranscript{B}() = 1$ so the empty
transcript is the unique valid transcript.
Now assume the statement is true for transcripts of length $j-1$ and
let $(i_1,\ldots,i_{j-1})$ be the unique valid transcript of length $j-1$.
Now for every other sequence $(i'_1,\ldots,i'_{j-1}) \ne 
(i_1,\ldots,i_{j-1})$ at least one of 
$\tildeptranscript{A}(i_1,\ldots,i_{j-1})$ or
$\tildeptranscript{B}(i_1,\ldots,i_{j-1})$ is zero. Suppose 
$\tildeptranscript{A}(i_1,\ldots,i_{j-1}) = 0$.
Then by the previous paragraph we have both 
$\tildeptranscript{A}(i_1,\ldots,i_{j-1},0) = 0$
and 
$\tildeptranscript{A}(i_1,\ldots,i_{j-1},1) = 0$.
So none of the ``children'' of invalid transcripts are valid. 
We turn to the unique valid sequence of length $j-1$. Suppose $j$ is an even
index (and so Bob speaks in round $j$). Since 
$\tildeptranscript{A}(i_1,\ldots,i_{j-1}) = 1$ we must have 
$\tildeptranscript{A}(i_1,\ldots,i_{j-1},0) = 
\tildeptranscript{A}(i_1,\ldots,i_{j-1},1) = 1$ (since there exists a Bob
strategy making each possible transcript valid). 
On the other hand there must exist a unique Bob message $b = f_B^{j-1}(i_1,ldots,i_{j-1}) \in \{0,1\}$
in round $j$ given the transcript $(i_1,\ldots,i_{j-1})$ thus far.
For this $b$ we have 
$\tildeptranscript{A}(i_1,\ldots,i_{j-1},b) = 1$ 
and 
$\tildeptranscript{A}(i_1,\ldots,i_{j-1},1-b) = 0$, making
$(i_1,\ldots,i_{j-1},b)$ the unique valid transcript at level $j$.

Finally we prove the converse showing that every vector $\strat{A} \in S_A$
corresponds to a strategy $f_A$. The key step here is to show that there
exists a vector $\tildestrat{A} \in \tilde{S}_A$ such that 
$\strat{A} = \tildestrat{A} * v$. The strategy $f_A$ can then be read off
from $\tildeptranscript{A} =
\pt_A(\tildestrat{A})$.
By definition of membership in $\strat{A} \in S_A \subseteq K_A$ we have
that there exists 
$\tilde{\xi'} \in \tilde{K}_A$ such that 
$\strat{A} = \tilde{\xi'} * v$. 
Let $\tilde{\chi}_{A,\tilde{\xi'}} = \pt_A(\tilde{\xi'})$. 
We show how to use this to create a
Boolean vector with the same property. We first define a function
$\chi_1(i_1,\ldots,i_j) \in \{0,1,?\}$ as follows:
We define $\chi_1(i_1,\ldots,i_k) = \strat{A}(i_1,\ldots,i_k)$ if
$v(i_1,\ldots,i_k) = 1$ and 
$\chi_1(i_1,\ldots,i_k) = ?$ if 
$v(i_1,\ldots,i_k) = 0$.
For $j$ going down from $k-1$ to $0$ we proceed as follows:
If $j$ is even (and it is Alice's turn to speak), then we set
$\chi_1(i_1,\ldots,i_j) = 1$ if 
$\chi_1(i_1,\ldots,i_j, b) = 1$ for some $b \in \{0,1\}$,
we set $\chi_1(i_1,\ldots,i_j) = 0$ if 
$\chi_1(i_1,\ldots,i_j, b) = 0$ for every $b \in \{0,1\}$,
and we set 
$\chi_1(i_1,\ldots,i_j) = ?$ otherwise.
If $j$ is odd (and it is Bob's turn to speak), then we set
$\chi_1(i_1,\ldots,i_j) = 1$ if 
$\chi_1(i_1,\ldots,i_j, b) = 1$ for some $b \in \{0,1\}$,
else we set $\chi_1(i_1,\ldots,i_j) = 0$ if 
$\chi_1(i_1,\ldots,i_j, b) = 0$ for some $b \in \{0,1\}$,
and we set 
$\chi_1(i_1,\ldots,i_j) = ?$ otherwise.
We now assert, by downward induction on $j$ that
if $\chi_1(i_1,\ldots,i_j) \in \{0,1\}$ then
$\chi_1(i_1,\ldots,i_j) = 
\tildeptranscript{A,\tilde{\xi'}}(i_1,\ldots,i_j)$. 
This is true trivially for $j=k$. 
For odd $j < k$, if 
$\chi_1(i_1,\ldots,i_j, b) \in \{0,1\}$ for some $b \in \{0,1\}$,
then 
$\tildeptranscript{A,\tilde{\xi'}}(i_1,\ldots,i_j)
= \tildeptranscript{A,\tilde{\xi'}}(i_1,\ldots,i_j,1-b)
= \tildeptranscript{A,\tilde{\xi'}}(i_1,\ldots,i_j,b)
\chi_1(i_1,\ldots,i_j, b)$ by induction
and so we have 
$\tildeptranscript{A,\tilde{\xi'}}(i_1,\ldots,i_j)
= \chi_1(i_1,\ldots,i_j)$. In all other cases
$\chi_1(i_1,\ldots,i_j)= ?$. Note also that we always have
$\chi_1(i_1,\ldots,i_j,b) =
\chi_1(i_1,\ldots,i_j,1-b)$ when both are in $\{0,1\}$.
For even $j < k$, we reason similarly. We first note that
both 
$\chi_1(i_1,\ldots,i_j,0)$ and 
$\chi_1(i_1,\ldots,i_j,1)$ can't be $1$, since then we would have
$\tildeptranscript{A,\tilde{\xi'}}(i_1,\ldots,i_j,0)
= \tildeptranscript{A,\tilde{\xi'}}(i_1,\ldots,i_j,1) = 1$
and this would imply 
$\tildeptranscript{A,\tilde{\xi'}}(i_1,\ldots,i_j) = 2$.
We conclude that if
$\chi_1(i_1,\ldots,i_j)=1$ then 
$\chi_1(i_1,\ldots,i_j,b)=1$ for exactly one $b \in \{0,1\}$
and so 
$\tildeptranscript{A,\tilde{\xi'}}(i_1,\ldots,i_j,b) = 1$ for at least 
one $b$ as well in which case we would again have 
$\tildeptranscript{A,\tilde{\xi'}}(i_1,\ldots,i_j) = 1$.
The case of 
$\chi_1(i_1,\ldots,i_j)=0$  is simpler since this arises when
$\chi_1(i_1,\ldots,i_j,0)=
\chi_1(i_1,\ldots,i_j,1)=0$ and in this case we have 
$\tildeptranscript{A,\tilde{\xi'}}(i_1,\ldots,i_j,0)
= \tildeptranscript{A,\tilde{\xi'}}(i_1,\ldots,i_j,1) = 0$
and so $\tildeptranscript{A,\tilde{\xi'}}(i_1,\ldots,i_j) = 0$ as well.
Finally we note that 
$\chi_1(i_1,\ldots,i_j)=?$ only if both 
$\chi_1(i_1,\ldots,i_j,1)=?$ and 
$\chi_1(i_1,\ldots,i_j,0)=?$ or $j$ is even and 
$\chi_1(i_1,\ldots,i_j,b)=0$ for some $b \in \{0,1\}$ and
$\chi_1(i_1,\ldots,i_j,1-b)=?$.

We now use these properties of $\chi_1$ to ``complete'' to a
$\{0,1\}$-valued function $\chi_2$ as follows. Set $\chi_2() = 1$
and now for $j$ increasing from $0$ to $k-1$  proceed as follows:
If $\chi_1(i_1,\ldots,i_j,b) \in \{0,1\}$ then let 
$\chi_2(i_1,\ldots,i_j,b) = 
\chi_2(i_1,\ldots,i_j,b)$.
For the remaining choices of $(i_1,\ldots,i_j,b)$ assign
$\chi_2(i_1,\ldots,i_j,b)$ as follows:
If $j$ is odd then set $\chi_2(i_1,\ldots,i_j,b) = 
\chi_2(i_1,\ldots,i_j)$.
If $j$ is even and $\chi_1(i_1,\ldots,i_j,1-b) \in \{0,1\}$
then set 
$\chi_2(i_1,\ldots,i_j,b)
= \chi_2(i_1,\ldots,i_j)
- \chi_2(i_1,\ldots,i_j,1-b)$.
Else if we have 
$\chi_1(i_1,\ldots,i_j,b) = ?$ for both $b \in \{0.1\}$ then
set 
$\chi_2(i_1,\ldots,i_j,0) = 0$ (arbitrarily) and
$\chi_2(i_1,\ldots,i_j,1) = \chi_2(i_1,\ldots,i_j)$.

Now let $\tildestrat{A}(i_1,\ldots,i_k) = \chi_2(i_1,\ldots,i_k)$. It can be
verified that $\tildeptranscript{A} = \pt_A(\tildestrat{A})$ 
satisfies $\tildeptranscript{A}(i_1,\ldots,i_j) = 
\chi_2(i_1,\ldots,i_j)$ for every $i_1,\ldots,i_j$ and thus satisfies
the conditions of membership in $\tilde{S}_A$.

Now to derive the strategy $f_A$, for even $j$,
if $\tildeptranscript{A}(i_1,\ldots,i_j) = 1$ 
we set $f_A^j(i_1,\ldots,i_j) = b$
such that $\tildeptranscript{A}(i_1,\ldots,i_j,b) = 1$ 
(such a $b$ must exist); and
set 
$f_A^j(i_1,\ldots,i_j) = 1$ (arbitrarily) otherwise.
It can be verified that for this strategy the vector representation leads exactly
to vectors $\tildestrat{A} \in \tilde{S}_A$ and $\strat{A} = \tildestrat{A} * v
\in S_A$.
\end{proof}

More significantly for us, the above equivalence also holds for probabilistic communication (i.e., with private randomness), {as defined in the third paragraph of~\autoref{ssec:strategy:preliminaries}. Recall then that a (private-coin) probabilistic strategy is defined by two sequences $(f^{2i}_A)_{1\leq i < k/2}$, $(f^{2i+1}_B)_{1\leq i < k/2}$ of functions taking values in $[0,1]$, and a verdict function~$v$}. Here the fact that the set of strategies forms a convex space is important to us.

\begin{proposition}\label{prop:ka:kb}
$K_A$ and $K_B$ are closed convex sets that correspond to the set of probabilistic communication (and decision)
strategies with $k$ bits. More precisely, for every probabilistic strategy $f_A$ of Alice there exists a vector
$\tildestrat{A} \in \tilde{K}_A$ and $\strat{A} \in K_A$ and for every strategy $f_B$ of Bob there exists a vector
$\tildestrat{B} \in \tilde{K}_B$ and $\strat{B}\in K_B$ such that $\tildestrat{A}(\ell)\cdot \tildestrat{B}(\ell)$ is the probability that
$\ell \in \{0,1\}^k$ is the transcript of the interaction between Alice and Bob under strategies
$f_A$ and $f_B$ and $\ip{\strat{A}}{\strat{B}}$ is the acceptance probability of the interaction.
Conversely every vector $\strat{A} \in K_A$ corresponds to a probabilistic strategy $f_A$ for Alice
(and similarly for Bob, with $\ip{\strat{A}}{\strat{B}}$ being the acceptance probability of the  protocol).
\end{proposition}

\begin{proof}
The fact that $K_A$ and $K_B$ are closed and convex sets is straightforward from their definition.

We first show that strategies $f_A$ and $f_B$ can be converted into vectors in
$\tilde{K}_A$ and $\tilde{K}_B$ respectively. 
We define $\tildeptranscript{A}$ inductively as follows. Let
$\tildeptranscript{A}() = 1$.
Further let 
$\tildeptranscript{A}(i_1,\ldots,i_j,1) = 
\tildeptranscript{A}(i_1,\ldots,i_j,0) = 
\tildeptranscript{A}(i_1,\ldots,i_j)$ if $j$ is $j$ is odd.
Finally let 
$\tildeptranscript{A}(i_1,\ldots,i_j,1) = 
\tildeptranscript{A}(i_1,\ldots,i_j) \cdot f_A^{j}(i_1,\ldots,i_j)$ 
and 
$\tildeptranscript{A}(i_1,\ldots,i_j,1) = 
\tildeptranscript{A}(i_1,\ldots,i_j) \cdot (1- f_A^{j}(i_1,\ldots,i_j))$ if
$j$ is even.
Now let $\tildestrat{A}(i_1,\ldots,i_k) = 
\tildeptranscript{A}(i_1,\ldots,i_k)$.
It can be verified that $\tildestrat{A} \in \tilde{K}_A$ and indeed the vector
$\ptranscript{A} = \pt_A(\tildestrat{A})$ (as given by~\autoref{def:xa:xb}) is the vector $\tildeptranscript{A}$. Taking $\strat{A} = \tildestrat{A} * v$ gives us
the vector $\strat{A} \in K_A$ as needed. {(The definition of $\strat{B}\in K_B$ is similar.)}

Now for the more important direction, we claim that every vector $\strat{A} \in
K_A$ corresponds to a strategy $f_A$. Note by definition that since
$\strat{A} \in K_A$, there exists $\tildestrat{A} \in \tilde{K}_A$ such that
$\strat{A} = \tildestrat{A} * v$. Let $\tildeptranscript{A}=\pt_A(\tildestrat{A})$ be the vector
obtained from $\tildestrat{A}$. For even $j$ and $i_1,\ldots,i_j \in \{0,1\}$. let
$f_A^{j}(i_1,\ldots,i_j) = 
\tildeptranscript{A}(i_1,\ldots,i_j,1)
/\tildeptranscript{A}(i_1,\ldots,i_j)$ for even $j$. (If 
$\tildeptranscript{A}(i_1,\ldots,i_j) = 0$, we define 
$f_A^{j}(i_1,\ldots,i_j) = 1$
Note that since 
$\tildeptranscript{A}(i_1,\ldots,i_t) \in [0,1]$ for every $t$ and
$\tildeptranscript{A}(i_1,\ldots,i_j) = 
\tildeptranscript{A}(i_1,\ldots,i_j,0) + 
\tildestrat{A}(i_1,\ldots,i_j,1)$, we have that 
$f_A^{j}(i_1,\ldots,i_j) \in [0,1]$ and so $f_A$ represents a strategy for Alice.
It can further be verified that if we apply the transformation of the previous
paragraph to this strategy $f_A$, we recover $\tildestrat{A}$ and so this
correspondence is indeed bidirectional.

Finally we verify that the acceptance probabilities are given by the inner
product function. Consider strategies given by vectors 
$\strat{A} \in K_A$ and $\strat{B} \in K_B$ with corresponding vectors 
$\tildestrat{A} \in \tilde{K}_A$ and $\tildestrat{B} \in \tilde{K}_B$ such that
$\strat{A} = \tildestrat{A} * v$
and $\strat{B} = \tildestrat{B} * v$. 
{Let $\tildeptranscript{A}=\pt_A(\tildestrat{A})$ and $\tildeptranscript{B}=\pt_B(\tildestrat{B})$; and} 
let $f_A$ and $f_B$ be the probabilistic strategies corresponding to $\strat{A}$
and $\strat{B}$ respectively as given by the previous paragraph. 
We claim, by induction on $j$, that for every $i_1,\ldots,i_j$ the probability
that the interaction reaches the partial transcript $i_1,\ldots,i_j$ under
strategies $f_A$, $f_B$ is $\tildeptranscript{A}(i_1,\ldots,i_j)\cdot
\tildeptranscript{B}(i_1,\ldots,i_j)$.
This is certainly true for $j=0$ where 
$\tildeptranscript{A}() = \tildeptranscript{B}() = 1$.
Now consider a partial transcript $i_1,\ldots,i_j$. 
By induction the probability of reaching this transcript is 
$p = \tildeptranscript{A}(i_1,\ldots,i_j) \cdot 
\tildeptranscript{B}(i_1,\ldots,i_j)$.
Suppose it is Alice's turn to speak. Then
$\tildeptranscript{B}(i_1,\ldots,i_j,0) = 
\tildeptranscript{B}(i_1,\ldots,i_j,1) = 
\tildeptranscript{B}(i_1,\ldots,i_j)$.
And 
$\tildeptranscript{A}(i_1,\ldots,i_j,0) +
\tildeptranscript{A}(i_1,\ldots,i_j,1) = 
\tildeptranscript{A}(i_1,\ldots,i_j)$.
Thus the probability of reaching the partial transcript
$i_1,\ldots,i_j,1$ is
$p \cdot f_A^{j}(i_1,\ldots,i_j) = 
\tildeptranscript{A}(i_1,\ldots,i_j) \cdot 
\tildeptranscript{B}(i_1,\ldots,i_j) \cdot 
\tildeptranscript{A}(i_1,\ldots,i_j,1)
/\tildeptranscript{A}(i_1,\ldots,i_j)
= 
\tildeptranscript{A}(i_1,\ldots,i_j,1) \cdot
\tildeptranscript{B}(i_1,\ldots,i_j)
= 
\tildeptranscript{A}(i_1,\ldots,i_j,1) \cdot
\tildeptranscript{B}(i_1,\ldots,i_j,1)$.
The calculation for the extension $i_1,\ldots,i_j,0$ is similar
and uses the fact that 
$\tildeptranscript{A}(i_1,\ldots,i_j) =
\tildeptranscript{A}(i_1,\ldots,i_j,0) +
\tildeptranscript{A}(i_1,\ldots,i_j,1)$, and
the probability that the transcript extends to
$i_1,\ldots,i_j,0$ conditioned on being at
$i_1,\ldots,i_j$ is $1 - f_A^j(i_1,\ldots,i_j)$.

From the above, we conclude that the probability of reaching a final transcript 
$\ell$ is 
$\tildeptranscript{A}(\ell)
\cdot \tildeptranscript{B}(\ell)
= 
\tildestrat{A}(\ell)\cdot
\tildestrat{B}(\ell)$. 
Thus the acceptance probability equals $\sum_{\ell: v(\ell) = 1}
\tildestrat{A}(\ell)\cdot
\tildestrat{B}(\ell)
= \sum_{\ell\in \{0,1\}^k} 
\tildestrat{A}(\ell)\cdot
\tildestrat{B}(\ell) \cdot v(\ell)^2 
= \sum_{\ell\in \{0,1\}^k} 
\strat{A}(\ell)\cdot \strat{B}(\ell)
= \langle 
\strat{A},\strat{B}\rangle$.
\end{proof}
\else
\begin{definition}\label{def:xa:xb}
For vector $\strat{} \in [0,1]^{2^k}$ and $i_1,\ldots,i_j \in \{0,1\}$
let $\ptranscript{A,\strat{}}(i_1,\ldots,i_j) = 
\ptranscript{A}(i_1,\ldots,i_j) \in [0,1]$ and 
$\ptranscript{B,\strat{}}(i_1,\ldots,i_j) = 
\ptranscript{B}(i_1,\ldots,i_j) \in [0,1]$ be defined as
follows:
\begin{align*}
 \ptranscript{A}(i_1,\ldots,i_j) & = \begin{cases}
				\strat{}(i_1,\ldots,i_k) &\text{ if } j=k\\
				\ptranscript{A}(i_1,\ldots,i_j,0) + \ptranscript{A}(i_1,\ldots,i_j,1) &\text{ if } j \text{ is even.}\\
				\frac12\left(\ptranscript{A}(i_1,\ldots,i_j,0) + \ptranscript{A}(i_1,\ldots,i_j,1)\right) &\text{ if } j \text{ is odd.}\\
			\end{cases}
\end{align*}
\begin{align*}
 \ptranscript{B}(i_1,\ldots,i_j) & = \begin{cases}
				\strat{}(i_1,\ldots,i_k) &\text{ if } j=k\\
				\frac12\left(\ptranscript{B}(i_1,\ldots,i_j,0) + \ptranscript{B}(i_1,\ldots,i_j,1)\right) &\text{ if } j \text{ is even.}\\
				\ptranscript{B}(i_1,\ldots,i_j,0) + \ptranscript{B}(i_1,\ldots,i_j,1) &\text{ if } j \text{ is odd.}\\
			\end{cases}
\end{align*}

Define
\[\tilde{K}_A = \tilde{K}^{(k)}_A = \setOfSuchThat{ \strat{} \in [0,1]^{2^k} }{ \ptranscript{A}() = 1 \text{ and }\forall \text{ odd
}j, \forall i_1,\ldots,i_j \in \{0,1\}, ~~
				\ptranscript{A}(i_1,\ldots,i_j,0) =
				\ptranscript{A}(i_1,\ldots,i_j,1) },
				\]
and
\[\tilde{K}_B = \tilde{K}^{(k)}_B = \setOfSuchThat{\strat{} \in [0,1]^{2^k} }{ \ptranscript{B}() = 1 \text{ and }\forall \text{
even }j, \forall i_1,\ldots,i_j \in \{0,1\}, ~~
				\ptranscript{B}(i_1,\ldots,i_j,0) =
				\ptranscript{B}(i_1,\ldots,i_j,1) }.
				\]
Let $K_A = \setOfSuchThat{\strat{} \ast v }{ \strat{} \in \tilde{K}_A }$, where $v \in \{0,1\}^{2^k}$ is
given by $v(i_1,\ldots,i_k) = i_k$ (and $a \ast b$ denotes coordinate-wise
multiplication of vectors $a$ and $b$).
Let
$\tilde{S}_A = \tilde{K}_A \cap \{0,1\}^{2^k}$,
$\tilde{S}_B = \tilde{K}_B \cap \{0,1\}^{2^k}$,
$S_A = K_A \cap \{0,1\}^{2^k}$,
and $S_B = K_B \cap \{0,1\}^{2^k}$.
\end{definition}

\begin{figure}[h]\centering
\scalebox{0.8}{
    \begin{tikzpicture}[->,>=stealth',
      level 1/.style={sibling distance=85mm},
      level 2/.style={sibling distance=40mm},
      level 3/.style={sibling distance=0mm},
      level distance=20mm,
      ]
    \node {$\underbrace{\ptranscript{A}(i_1,\dots,i_{2j})}_{\color{black!65}0+1=1}$}
        child{ node {$\underbrace{\ptranscript{A}(i_1,\dots,i_{2j},i_{2j+1}=0)}_{\color{black!65}\frac{1}{2}(0+0)=0}$}
                child{ node {$\underbrace{\ptranscript{A}(i_1,\dots,i_{2j},i_{2j+1},0)}_{\color{black!65}0}$} edge from parent node[above left] {$0$}
                }
                child{ node {$\underbrace{\ptranscript{A}(i_1,\dots,i_{2j},i_{2j+1},1)}_{\color{black!65}0}$} edge from parent node[above right] {$1$}
                }
           edge from parent node[above left] {$0$}
        }
        child{ node {$\underbrace{\ptranscript{A}(i_1,\dots,i_{2j},i_{2j+1}=1)}_{\color{black!65}\frac{1}{2}(1+1)=1}$}
                child{ node {$\underbrace{\ptranscript{A}(i_1,\dots,i_{2j},i_{2j+1},0)}_{\color{black!65}1}$} edge from parent node[above left] {$0$}
                }
                child{ node {$\underbrace{\ptranscript{A}(i_1,\dots,i_{2j},i_{2j+1},1)}_{\color{black!65}1}$} edge from parent node[above right] {$1$}
                }
          edge from parent node[above right] {$1$}
		    }
    ;
    \end{tikzpicture}
}
\caption{\label{fig:constraints:xa} Illustration of the constraints on $\ptranscript{A}$ (\autoref{def:xa:xb}).}
\end{figure}

In what follows we first focus on deterministic communication strategies
and show that $\tilde{S}_A$, $\tilde{S}_B$ correspond to
the space of deterministic communication strategies for Alice and Bob,
while $S_A$ and $S_B$ correspond to outputs computed by such strategies.
This step is not strictly needed for this paper since our main focus is
on probabilistic strategies and the convex sets $K_A$ and $K_B$, but
we include it for completeness.

\begin{proposition}
\label{prop:sa:sb}.
$\tilde{S}_A$ and $\tilde{S}_B$ correspond to the set of deterministic communication
strategies with $k$ bits. For every strategy $f_A$ of Alice there exist
vectors
$\tildestrat{A} \in \tilde{S}_A$ and $\strat{A}\in S_A$ and for every strategy $f_B$ of Bob there exist  vectors $\tildestrat{B} \in
\tilde{S}_B$ and $\strat{B} \in S_B$
such that if $\ell \in \{0,1\}^k$ is the transcript of the interaction between Alice and Bob
under strategies $f_A$ and $f_B$, then $\ell$ is the unique sequence satisfying 
$\tildestrat{A}(\ell) = \tildestrat{B}(\ell) = 1$ and $\ip{\strat{A}}{\strat{B}} = 1$ if the interaction
accepts and $0$ otherwise.

Conversely every vector $\strat{A} \in S_A$ corresponds to a strategy $f_A$ for Alice
(and similarly for Bob) such that Alice and Bob accept the interaction iff $\ip{\strat{A}}{\strat{B}} = 1$.
\end{proposition}
\begin{proof}
Given $f_A$ to construct $\tildestrat{A}$, we let $\tildestrat{A}(\ell) = 1$ if there exists $f_{B,\ell}$ such that the final
transcript of the interaction given by $f_A$ and $f_{B,\ell}$ is $\ell$.
Furthermore let $\tildeptranscript{A}(i_1,\ldots,i_j) = 1$ if there exists a Bob
strategy $f_{B,i_1,\ldots,i_j}$ such that $i_1,\ldots,i_j$ is the partial
transcript of the interaction between Alice and Bob; {otherwise let $\tildeptranscript{A}(i_1,\ldots,i_j) = 0$}.
It is now straightforward to verify that the $\tildeptranscript{A}(i_1,\ldots,i_j)$ satisfy the conditions of the definition of $\tildestrat{A}$ and the conditions required for
membership in $\tilde{K}_A$. In particular we have the following three conditions: (1)
$\tildeptranscript{A}() = 1$ since the empty transcript is a legal partial transcript.  (2) If $j$ is an
even index (and so Alice speaks in round $j+1$) and $\ptranscript{A}(i_1,\ldots,i_j) = 0$ (so the partial
transcript $i_1,\ldots,i_j$ is not reachable given Alice's strategy), then we must have
$\ptranscript{A}(i_1,\ldots,i_j,0) = \ptranscript{A}(i_1,\ldots,i_j,1) = 0$ (no extension is reachable either). If
$\ptranscript{A}(i_1,\ldots,i_j) = 1$ then exactly one of the extensions must be reachable (based on Alice's
message at this stage) and so again we have
$\ptranscript{A}(i_1,\ldots,i_j) = \ptranscript{A}(i_1,\ldots,i_j,0) + \ptranscript{A}(i_1,\ldots,i_j,1)$. (3) If $j$ is odd and it is
Bob's turn to speak, then again if $\ptranscript{A}(i_1.\ldots,i_j) = 0$ we have
$\ptranscript{A}(i_1,\ldots,i_j,0) = \ptranscript{A}(i_1,\ldots,i_j,1) = 0$. On the other hand if
$\ptranscript{A}(i_1,\ldots,i_j) = 1$ then for each extension there exists a strategy of Bob that permits this
extension and so we have
$\ptranscript{A}(i_1,\ldots,i_j,0) = \ptranscript{A}(i_1,\ldots,i_j,1) = 1$ satisfying the condition for odd $j$.
The above three conditions verify membership in $\tilde{K}_A$ and since $\tildestrat{A}$ is a 0/1 vector, we also have
$\tildestrat{A} \in \tilde{S}_A$. The vector $\strat{A} = \tildestrat{A} \ast v$ gives the
corresponding vector in $S_A$.

Given a pair of strategies $f_A$ and $f_B$, let 
$\tildestrat{A}$ and $\tildestrat{B}$ be the corresponding vectors representing the strategies, and let 
$\tildeptranscript{A}(i_1,\ldots,i_j)$ 
and 
$\tildeptranscript{B}(i_1,\ldots,i_j)$ denote the partial transcripts.
We now prove the existence and uniqueness of a leaf $\ell$ such that 
$\tildestrat{A}(\ell) = \tildestrat{B}(\ell) = 1$.
We do so by showing, by induction on $j$, that for every $j \in \{0,\ldots,k\}$
there exists a unique sequence $(i_1,\ldots,i_j)$ such
that $\tildeptranscript{A}(i_1,\ldots,i_j) = \tildeptranscript{B}(i_1,\ldots,i_j) =
1$. Using the fact that 
$\tildeptranscript{A}(i_1,\ldots,i_k) = 
\tildestrat{A}(i_1,\ldots,i_k)$,
we get the desired existence and uniqueness (for $\ell = (i_1,\ldots,i_k)$).
We refer to a sequence $(i_1,\ldots,i_j)$ as a valid (partial) transcript if
$\tildeptranscript{A}(i_1,\ldots,i_j) = \tildeptranscript{B}(i_1,\ldots,i_j) = 1$
and as an invalid transcript otherwise.
The base case of the induction is true since there is only one transcript of
length $0$ and 
we have $\tildeptranscript{A}() = \tildeptranscript{B}() = 1$ so the empty
transcript is the unique valid transcript.
Now assume the statement is true for transcripts of length $j-1$ and
let $(i_1,\ldots,i_{j-1})$ be the unique valid transcript of length $j-1$.
Now for every other sequence $(i'_1,\ldots,i'_{j-1}) \ne 
(i_1,\ldots,i_{j-1})$ at least one of 
$\tildeptranscript{A}(i_1,\ldots,i_{j-1})$ or
$\tildeptranscript{B}(i_1,\ldots,i_{j-1})$ is zero. Suppose 
$\tildeptranscript{A}(i_1,\ldots,i_{j-1}) = 0$.
Then by the previous paragraph we have both 
$\tildeptranscript{A}(i_1,\ldots,i_{j-1},0) = 0$
and 
$\tildeptranscript{A}(i_1,\ldots,i_{j-1},1) = 0$.
So none of the ``children'' of invalid transcripts are valid. 
We turn to the unique valid sequence of length $j-1$. Suppose $j$ is an even
index (and so Bob speaks in round $j$). Since 
$\tildeptranscript{A}(i_1,\ldots,i_{j-1}) = 1$ we must have 
$\tildeptranscript{A}(i_1,\ldots,i_{j-1},0) = 
\tildeptranscript{A}(i_1,\ldots,i_{j-1},1) = 1$ (since there exists a Bob
strategy making each possible transcript valid). 
On the other hand there must exist a unique Bob message $b = f_B^{j-1}(i_1,ldots,i_{j-1}) \in \{0,1\}$
in round $j$ given the transcript $(i_1,\ldots,i_{j-1})$ thus far.
For this $b$ we have 
$\tildeptranscript{A}(i_1,\ldots,i_{j-1},b) = 1$ 
and 
$\tildeptranscript{A}(i_1,\ldots,i_{j-1},1-b) = 0$, making
$(i_1,\ldots,i_{j-1},b)$ the unique valid transcript at level $j$.

Finally we prove the converse showing that every vector $\strat{A} \in S_A$
corresponds to a strategy $f_A$. The key step here is to show that there
exists a vector $\tildestrat{A} \in \tilde{S}_A$ such that 
$\strat{A} = \tildestrat{A} * v$. The strategy $f_A$ can then be read off
from $\tildeptranscript{A} \eqdef \ptranscript{A.\tildestrat{A}}$.
By definition of membership in $\strat{A} \in S_A \subseteq K_A$ we have
that there exists 
$\tilde{\xi'} \in \tilde{K}_A$ such that 
$\strat{A} = \tilde{\xi'} * v$. We show how to use this to create a
Boolean vector with the same property. We first define a function
$\chi_1(i_1,\ldots,i_j) \in \{0,1,?\}$ as follows:
We define $\chi_1(i_1,\ldots,i_k) = \strat{A}(i_1,\ldots,i_k)$ if
$v(i_1,\ldots,i_k) = 1$ and 
$\chi_1(i_1,\ldots,i_k) = ?$ if 
$v(i_1,\ldots,i_k) = 0$.
For $j$ going down from $k-1$ to $0$ we proceed as follows:
If $j$ is even (and it is Alice's turn to speak), then we set
$\chi_1(i_1,\ldots,i_j) = 1$ if 
$\chi_1(i_1,\ldots,i_j, b) = 1$ for some $b \in \{0,1\}$,
we set $\chi_1(i_1,\ldots,i_j) = 0$ if 
$\chi_1(i_1,\ldots,i_j, b) = 0$ for every $b \in \{0,1\}$,
and we set 
$\chi_1(i_1,\ldots,i_j) = ?$ otherwise.
If $j$ is odd (and it is Bob's turn to speak), then we set
$\chi_1(i_1,\ldots,i_j) = 1$ if 
$\chi_1(i_1,\ldots,i_j, b) = 1$ for some $b \in \{0,1\}$,
else we set $\chi_1(i_1,\ldots,i_j) = 0$ if 
$\chi_1(i_1,\ldots,i_j, b) = 0$ for some $b \in \{0,1\}$,
and we set 
$\chi_1(i_1,\ldots,i_j) = ?$ otherwise.
We now assert, by downward induction on $j$ that
if $\chi_1(i_1,\ldots,i_j) \in \{0,1\}$ then
$\chi_1(i_1,\ldots,i_j) = 
\tildeptranscript{A,\tilde{\xi'}}(i_1,\ldots,i_j)$. 
This is true trivially for $j=k$. 
For odd $j < k$, if 
$\chi_1(i_1,\ldots,i_j, b) \in \{0,1\}$ for some $b \in \{0,1\}$,
then 
$\tildeptranscript{A,\tilde{\xi'}}(i_1,\ldots,i_j)
= \tildeptranscript{A,\tilde{\xi'}}(i_1,\ldots,i_j,1-b)
= \tildeptranscript{A,\tilde{\xi'}}(i_1,\ldots,i_j,b)
\chi_1(i_1,\ldots,i_j, b)$ by induction
and so we have 
$\tildeptranscript{A,\tilde{\xi'}}(i_1,\ldots,i_j)
= \chi_1(i_1,\ldots,i_j)$. In all other cases
$\chi_1(i_1,\ldots,i_j)= ?$. Note also that we always have
$\chi_1(i_1,\ldots,i_j,b) =
\chi_1(i_1,\ldots,i_j,1-b)$ when both are in $\{0,1\}$.
For even $j < k$, we reason similarly. We first note that
both 
$\chi_1(i_1,\ldots,i_j,0)$ and 
$\chi_1(i_1,\ldots,i_j,1)$ can't be $1$, since then we would have
$\tildeptranscript{A,\tilde{\xi'}}(i_1,\ldots,i_j,0)
= \tildeptranscript{A,\tilde{\xi'}}(i_1,\ldots,i_j,1) = 1$
and this would imply 
$\tildeptranscript{A,\tilde{\xi'}}(i_1,\ldots,i_j) = 2$.
We conclude that if
$\chi_1(i_1,\ldots,i_j)=1$ then 
$\chi_1(i_1,\ldots,i_j,b)=1$ for exactly one $b \in \{0,1\}$
and so 
$\tildeptranscript{A,\tilde{\xi'}}(i_1,\ldots,i_j,b) = 1$ for at least 
one $b$ as well in which case we would again have 
$\tildeptranscript{A,\tilde{\xi'}}(i_1,\ldots,i_j) = 1$.
The case of 
$\chi_1(i_1,\ldots,i_j)=0$  is simpler since this arises when
$\chi_1(i_1,\ldots,i_j,0)=
\chi_1(i_1,\ldots,i_j,1)=0$ and in this case we have 
$\tildeptranscript{A,\tilde{\xi'}}(i_1,\ldots,i_j,0)
= \tildeptranscript{A,\tilde{\xi'}}(i_1,\ldots,i_j,1) = 0$
and so $\tildeptranscript{A,\tilde{\xi'}}(i_1,\ldots,i_j) = 0$ as well.
Finally we note that 
$\chi_1(i_1,\ldots,i_j)=?$ only if both 
$\chi_1(i_1,\ldots,i_j,1)=?$ and 
$\chi_1(i_1,\ldots,i_j,0)=?$ or $j$ is even and 
$\chi_1(i_1,\ldots,i_j,b)=0$ for some $b \in \{0,1\}$ and
$\chi_1(i_1,\ldots,i_j,1-b)=?$.

We now use these properties of $\chi_1$ to ``complete'' to a
$\{0,1\}$-valued function $\chi_2$ as follows. Set $\chi_2() = 1$
and now for $j$ increasing from $0$ to $k-1$  proceed as follows:
If $\chi_1(i_1,\ldots,i_j,b) \in \{0,1\}$ then let 
$\chi_2(i_1,\ldots,i_j,b) = 
\chi_2(i_1,\ldots,i_j,b)$.
For the remaining choices of $(i_1,\ldots,i_j,b)$ assign
$\chi_2(i_1,\ldots,i_j,b)$ as follows:
If $j$ is odd then set $\chi_2(i_1,\ldots,i_j,b) = 
\chi_2(i_1,\ldots,i_j)$.
If $j$ is even and $\chi_1(i_1,\ldots,i_j,1-b) \in \{0,1\}$
then set 
$\chi_2(i_1,\ldots,i_j,b)
= \chi_2(i_1,\ldots,i_j)
- \chi_2(i_1,\ldots,i_j,1-b)$.
Else if we have 
$\chi_1(i_1,\ldots,i_j,b) = ?$ for both $b \in \{0.1\}$ then
set 
$\chi_2(i_1,\ldots,i_j,0) = 0$ (arbitrarily) and
$\chi_2(i_1,\ldots,i_j,1) = \chi_2(i_1,\ldots,i_j)$.

Now let $\tildestrat{A}(i_1,\ldots,i_k) = \chi_2(i_1,\ldots,i_k)$. It can be
verified that $\tildeptranscript{A} = 
\ptranscript{A,\tildestrat{A}}$ satisfies 
$\tildeptranscript{A}(i_1,\ldots,i_j) = 
\chi_2(i_1,\ldots,i_j)$ for every $i_1,\ldots,i_j$ and thus satisfies
the conditions of membership in $\tilde{S}_A$. 

Now to derive the strategy $f_A$, for even $j$,
if $\tildeptranscript{A}(i_1,\ldots,i_j) = 1$ 
we set $f_A^j(i_1,\ldots,i_j) = b$
such that $\tildeptranscript{A}(i_1,\ldots,i_j,b) = 1$ 
(such a $b$ must exist); and
set 
$f_A^j(i_1,\ldots,i_j) = 1$ (arbitrarily) otherwise.
It can be verified that for this strategy the vector representation leads exactly
to vectors $\tildestrat{A} \in \tilde{S}_A$ and $\strat{A} = \tildestrat{A} * v
\in S_A$.
\end{proof}

More significantly for us, the above equivalence also holds for probabilistic communication (i.e., with
private randomness), {as defined in the third paragraph of~\autoref{ssec:strategy:preliminaries}. Recall then that a (private-coin) probabilistic strategy is defined by two sequences $(f^{2i}_A)_{1\leq i < k/2}$, $(f^{2i+1}_B)_{1\leq i < k/2}$ of functions taking values in $[0,1]$, and a verdict function~$v$}. Here the fact that the set of strategies forms a convex space is important to us.

\begin{proposition}\label{prop:ka:kb}
$K_A$ and $K_B$ are {closed} convex sets
that correspond to the set of probabilistic communication (and
decision)
strategies with $k$ bits. More precisely, for every probabilistic strategy $f_A$ of Alice there exists a vector
$\tildestrat{A} \in \tilde{K}_A$ and $\strat{A} \in K_A$
and for every strategy $f_B$ of Bob there exists a vector
$\tildestrat{B} \in \tilde{K}_B$
and $\strat{B}\in K_B$
such that $\tildestrat{A}(\ell)\cdot \tildestrat{B}(\ell)$ is the probability that
$\ell \in \{0,1\}^k$ is the transcript of the interaction between Alice and Bob under strategies
$f_A$ and $f_B$ and $\ip{\strat{A}}{\strat{B}}$ is the acceptance probability of the
interaction.
Conversely every vector $\strat{A} \in K_A$ corresponds to a probabilistic strategy $f_A$ for Alice
(and similarly for Bob, with $\ip{\strat{A}}{\strat{B}}$ being the acceptance probability of the  protocol).
\end{proposition}

\begin{proof}
The fact that $K_A$ and $K_B$ are {closed and }convex sets is straightforward from their definition.

We first show that strategies $f_A$ and $f_B$ can be converted into vectors in
$\tilde{K}_A$ and $\tilde{K}_B$ respectively. 
We define $\tildeptranscript{A}$ inductively as follows. Let
$\tildeptranscript{A}() = 1$.
Further let 
$\tildeptranscript{A}(i_1,\ldots,i_j,1) = 
\tildeptranscript{A}(i_1,\ldots,i_j,0) = 
\tildeptranscript{A}(i_1,\ldots,i_j)$ if $j$ is $j$ is odd.
Finally let 
$\tildeptranscript{A}(i_1,\ldots,i_j,1) = 
\tildeptranscript{A}(i_1,\ldots,i_j) \cdot f_A^{j}(i_1,\ldots,i_j)$ 
and 
$\tildeptranscript{A}(i_1,\ldots,i_j,1) = 
\tildeptranscript{A}(i_1,\ldots,i_j) \cdot (1- f_A^{j}(i_1,\ldots,i_j))$ if
$j$ is even.
Now let $\tildestrat{A}(i_1,\ldots,i_k) = 
\tildeptranscript{A}(i_1,\ldots,i_k)$.
It can be verified that $\tildestrat{A} \in \tilde{K}_A$ and indeed the vector
$\ptranscript{A} = 
\ptranscript{A,\tildestrat{A}}$
(as given by~\autoref{def:xa:xb}) is the vector $\tildeptranscript{A}$. Taking $\strat{A} = \tildestrat{A} * v$ gives us
the vector $\strat{A} \in K_A$ as needed.

Now for the more important direction, we claim that every vector $\strat{A} \in
K_A$ corresponds to a strategy $f_A$. Note by definition that since
$\strat{A} \in K_A$, there exists $\tildestrat{A} \in \tilde{K}_A$ such that
$\strat{A} = \tildestrat{A} * v$. Let $\tildeptranscript{A}$ be the vector
obtained from $\tildestrat{A}$ according to~\autoref{def:xa:xb}. For 
even $j$ and $i_1,\ldots,i_j \in \{0,1\}$. let
$f_A^{j}(i_1,\ldots,i_j) = 
\tildeptranscript{A}(i_1,\ldots,i_j,1)
/\tildeptranscript{A}(i_1,\ldots,i_j)$ for even $j$. (If 
$\tildeptranscript{A}(i_1,\ldots,i_j) = 0$, we define 
$f_A^{j}(i_1,\ldots,i_j) = 1$
Note that since 
$\tildeptranscript{A}(i_1,\ldots,i_t) \in [0,1]$ for every $t$ and
$\tildeptranscript{A}(i_1,\ldots,i_j) = 
\tildeptranscript{A}(i_1,\ldots,i_j,0) + 
\tildestrat{A}(i_1,\ldots,i_j,1)$, we have that 
$f_A^{j}(i_1,\ldots,i_j) \in [0,1]$ and so $f_A$ represents a strategy for Alice.
It can further be verified that if we apply the transformation of the previous
paragraph to this strategy $f_A$, we recover $\tildestrat{A}$ and so this
correspondence is indeed bidirectional.

Finally we verify that the acceptance probabilities are given by the inner
product function. Consider strategies given by vectors 
$\strat{A} \in K_A$ and $\strat{B} \in K_B$ with corresponding vectors 
$\tildestrat{A} \in \tilde{K}_A$ and $\tildestrat{B} \in \tilde{K}_B$ such that
$\strat{A} = \tildestrat{A} * v$
and $\strat{B} = \tildestrat{B} * v$.
Let $f_A$ and $f_B$ be the probabilistic strategies corresponding to $\strat{A}$
and $\strat{B}$ respectively as given by the previous paragraph. 
We claim, by induction on $j$, that for every $i_1,\ldots,i_j$ the probability
that the interaction reaches the partial transcript $i_1,\ldots,i_j$ under
strategies $f_A$, $f_B$ is $\tildeptranscript{A}(i_1,\ldots,i_j)\cdot
\tildeptranscript{B}(i_1,\ldots,i_j)$.
This is certainly true for $j=0$ where 
$\tildeptranscript{A}() = \tildeptranscript{B}() = 1$.
Now consider a partial transcript $i_1,\ldots,i_j$. 
By induction the probability of reaching this transcript is 
$p = \tildeptranscript{A}(i_1,\ldots,i_j) \cdot 
\tildeptranscript{B}(i_1,\ldots,i_j)$.
Suppose it is Alice's turn to speak. Then
$\tildeptranscript{B}(i_1,\ldots,i_j,0) = 
\tildeptranscript{B}(i_1,\ldots,i_j,1) = 
\tildeptranscript{B}(i_1,\ldots,i_j)$.
And 
$\tildeptranscript{A}(i_1,\ldots,i_j,0) +
\tildeptranscript{A}(i_1,\ldots,i_j,1) = 
\tildeptranscript{A}(i_1,\ldots,i_j)$.
Thus the probability of reaching the partial transcript
$i_1,\ldots,i_j,1$ is
$p \cdot f_A^{j}(i_1,\ldots,i_j) = 
\tildeptranscript{A}(i_1,\ldots,i_j) \cdot 
\tildeptranscript{B}(i_1,\ldots,i_j) \cdot 
\tildeptranscript{A}(i_1,\ldots,i_j,1)
/\tildeptranscript{A}(i_1,\ldots,i_j)
= 
\tildeptranscript{A}(i_1,\ldots,i_j,1) \cdot
\tildeptranscript{B}(i_1,\ldots,i_j)
= 
\tildeptranscript{A}(i_1,\ldots,i_j,1) \cdot
\tildeptranscript{B}(i_1,\ldots,i_j,1)$.
The calculation for the extension $i_1,\ldots,i_j,0$ is similar
and uses the fact that 
$\tildeptranscript{A}(i_1,\ldots,i_j) =
\tildeptranscript{A}(i_1,\ldots,i_j,0) +
\tildeptranscript{A}(i_1,\ldots,i_j,1)$, and
the probability that the transcript extends to
$i_1,\ldots,i_j,0$ conditioned on being at
$i_1,\ldots,i_j$ is $1 - f_A^j(i_1,\ldots,i_j)$.

From the above, we conclude that the probability of reaching a final transcript 
$\ell$ is 
$\tildeptranscript{A}(\ell)
\cdot \tildeptranscript{B}(\ell)
= 
\tildestrat{A}(\ell)\cdot
\tildestrat{B}(\ell)$. 
Thus the acceptance probability equals $\sum_{\ell: v(\ell) = 1}
\tildestrat{A}(\ell)\cdot
\tildestrat{B}(\ell)
= \sum_{\ell\in \{0,1\}^k} 
\tildestrat{A}(\ell)\cdot
\tildestrat{B}(\ell) \cdot v(\ell)^2 
= \sum_{\ell\in \{0,1\}^k} 
\strat{A}(\ell)\cdot \strat{B}(\ell)
= \langle 
\strat{A},\strat{B}\rangle$.
\end{proof}
\fi

\subsection{Upper bound on ISR in terms of PSR}\label{ssec:upperbound:isr:psr}

In this section we prove \autoref{thm:twoway-psr-isr-upper}.
Our first step is to prove that the \gapip problem (with
the right parameters) is hard for all problems with communication complexity
$k$. But first we define what it means for a promise problem to be
hard for some class of communication problems.

Recall that a promise problem $P = (P_n)_n$ is given by a collection of
$\yes$-instances $P_n^{\yes} \subseteq \{0,1\}^n \times \{0,1\}^n$ and
$\no$-instances $P_n^{\no} \subseteq \{0,1\}^n \times \{0,1\}^n$
with
$P_n^{\yes} \cap P_n^{\no} = \emptyset$.
We define below what it means for a promise problem $P$ to reduce to a
promise problem $Q$.

\begin{definition}
For promise problems $P = (P_n)_n$ and $Q = (Q_n)_n$ we say that $P$ reduces
to $Q$ if there exist functions $\ell\colon\N\to\N$ and
$f_n,g_n\colon\{0,1\}^n \to \{0,1\}^{\ell(n)}$ such that
if $(x,y) \in P_n^\yes$ then
$(f_n(x),g_n(y)) \in Q_{\ell(n)}^\yes$
and if $(x,y) \in P_n^\no$ then
$(f_n(x),g_n(y)) \in Q_{\ell(n)}^\no$.
We say $Q$ is hard for a class ${\cal C}$ if for every $P \in {\cal C}$
we have that $P$ reduces to $Q$.
\end{definition}

In other words Alice can apply $f_n$ to her input, and Bob can apply $g_n$ to
his input and get a new pair that is an instance of the $Q$-problem.
In particular if $Q$ has communication complexity $k$, then so does $P$.
This can be extended to functions $k(n)$ also: if $Q$ has communication
complexity $k(n)$, then $P$ has complexity $k(\ell(n))$.

Since we are mostly interested in $k$ being an absolute constant,
we do not strictly care about the length stretching function $\ell$. However, we note
that in the following proposition we only need a polynomial blowup (so
$\ell$ is a polynomial).

\begin{proposition}\label{prop:gapip:hard:psr}
For every positive integer $k$, $\gapip_{(2/3)2^{-k}, (1/3)2^{-k}}$ is
hard for $\capspsrcc(k)$.
\end{proposition}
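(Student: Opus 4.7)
The plan is to embed any $k$-bit psr protocol directly into an inner-product computation by invoking the vector representation of deterministic strategies from \autoref{prop:sa:sb}. Fix $P \in \capspsrcc(k)$ and a psr protocol $\Pi$ for $P$ with success probability at least $2/3$ using $k$ bits of communication. Since a psr protocol is a distribution over deterministic $k$-bit protocols, we may assume (by Newman's theorem, or directly since on any fixed input domain there are only finitely many such deterministic protocols) that this distribution is uniform over a finite seed set $R = R(n)$.

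The key observation is that for each seed $r \in R$ and each input $x \in \{0,1\}^n$, \autoref{prop:sa:sb} associates to Alice's induced deterministic strategy a vector $\strat{A}^{r}(x) \in S_A \subseteq \{0,1\}^{2^k}$, and analogously Bob's strategy yields $\strat{B}^{r}(y) \in S_B \subseteq \{0,1\}^{2^k}$, with the property that $\dotprod{\strat{A}^{r}(x)}{\strat{B}^{r}(y)} \in \{0,1\}$ equals the accept/reject verdict of $\Pi$ executed on $(x,y)$ with randomness $r$.

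The reduction is then immediate. Setting $N \eqdef |R| \cdot 2^k$, define $f_n(x), g_n(y) \in \{0,1\}^N$ to be the concatenations of the $2^k$-dimensional vectors $(\strat{A}^{r}(x))_{r \in R}$ and $(\strat{B}^{r}(y))_{r \in R}$, respectively. Then
\[
\dotprod{f_n(x)}{g_n(y)} = \sum_{r \in R} \dotprod{\strat{A}^{r}(x)}{\strat{B}^{r}(y)} = |R| \cdot \probaOf{\Pi \text{ accepts on } (x,y)},
\]
which is at least $(2/3)|R| = (2/3) 2^{-k} N$ on \yes-instances of $P$ and at most $(1/3)|R| = (1/3) 2^{-k} N$ on \no-instances. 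Hence $(f_n(x), g_n(y))$ lies on the correct side of the \gapip threshold with parameters $c = (2/3)2^{-k}$ and $s = (1/3)2^{-k}$, establishing the reduction from $P$ to $\gapip_{c,s}$.

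There is no real technical obstacle here: the argument is essentially a bookkeeping exercise once \autoref{prop:sa:sb} is in hand. The only minor care required is in reducing the shared randomness to a finite seed set (a standard finite-support argument) and, if one insists on the strict-inequality version of the \no-instance condition in \autoref{def:gap:inner-product}, mildly boosting the protocol's error so that acceptance on \no-instances is strictly below $1/3$; neither affects the statement.
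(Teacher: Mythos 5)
Your proof is correct and follows essentially the same route as the paper's: fix a psr protocol, apply Newman's theorem to make the seed set finite, represent each deterministic instantiation via the $\{0,1\}^{2^k}$-valued strategy vectors of \autoref{prop:sa:sb}, concatenate over seeds, and observe that the normalized inner product equals the acceptance probability. The only (harmless) extra remark you add concerns the strict inequality in the \no-instance condition, which the paper's own proof also elides.
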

\begin{proof}
Specifically we show that for any problem $P$ with inputs of length
$n$ and $\psrcc(P) \leq k$, there exist $N = \poly(n)$ and
transformations $f_n$ and $g_n$ such that $(x,y)$ is a $\yes$-instance of
$P$ if and only if $(f_n(x),g_n(y))$ is a $\yes$-instance of
$\gapip_{(2/3)2^{-k}, (1/3)2^{-k}}^N$.

Given $x \in \{0,1\}^n$ and random string $R$, let $X_R \in S_A^{(k)}$
describe the communication strategy of Alice with input $x$ and
randomness $R$. Similarly let $Y_R$ denote the strategy of Bob.
Recall that $\ip{X_R}{Y_R} = 1$ if the interaction accepts on randomness $R$
and
$\ip{X_R}{Y_R} = 0$ otherwise.
Let $f_n(x) = X$ be the concatenation of the strings $\{X_R\}_R$
and let $g_n(y) = Y$
be the concatenation of $\{Y_R\}_R$.
By Newman's Theorem we have that the number of random strings $R$ that we
need to consider is some polynomial $N' = \poly(n)$. Letting $N = 2^k \cdot
N^\prime$, we get that $X,Y \in \{0,1\}^N$ and $\ip{X}{Y} \geq (2/3)N' = (2/3)\cdot
2^{-k} \cdot N$ if $(x,y)$ is a \yes-instance of $P$
and
$\ip{X}{Y} \leq (1/3)N' = (1/3)\cdot
2^{-k} \cdot N$ if $(x,y)$ is a \no-instance of $P$. This gives the
desired reduction.
\end{proof}

Next we give an upper bound on $\isrcc(\gapip)$. In fact
we give an upper bound on
$\owisrcc(\gapip)$.

\begin{lemma}\label{lem:gaussian-upper}
For all $0 \leq s < c \leq 1$ and $\rho > 0$,
$\owisrcc[\rho]({\gapip^n_{c,s}}) = O(1 /\rho^2(c-s)^2)$.
\end{lemma}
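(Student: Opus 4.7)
The plan is to exhibit a one-way isr protocol that estimates $\ip{x}{y}$ to additive error at most $(c-s)n/3$ with constant success probability, using $O(1/\rho^2(c-s)^2)$ bits of communication from Alice to Bob; Bob then decides by thresholding the estimate at $(c+s)n/2$. Concretely, Alice and Bob will first use their $\rho$-correlated random bits to simulate $T := \Theta(1/\rho^2(c-s)^2)$ independent pairs of $\rho$-correlated $n$-dimensional standard Gaussian vectors $\{(g_A^{(t)}, g_B^{(t)})\}_{t \leq T}$, where each coordinate pair $(g_{A,i}^{(t)}, g_{B,i}^{(t)})$ is a bivariate standard Gaussian with correlation $\rho$, independent across $t$ and $i$. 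A CLT-style construction---normalize large blocks of $\rho$-correlated $\pm 1$ bits---yields such Gaussians up to arbitrarily small distributional error.

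For each $t$, Alice computes $a_t := \ip{g_A^{(t)}}{x}$ and sends to Bob an $O(1)$-bit summary $\tilde{a}_t$ of $a_t$ via a dithered quantization on a grid of spacing $\Theta(\sqrt{n})$, after truncation to $[-M,M]$ with $M = \Theta(\sqrt{n\log T})$ (which rarely triggers by a union bound on Gaussian tails). Bob computes $b_t := \ip{g_B^{(t)}}{y}$ and forms $\hat Z := \frac{1}{T}\sum_{t=1}^T \tilde a_t\, b_t$. Since $(a_t, b_t)$ is a bivariate Gaussian with mean zero, variances $\normtwo{x}^2, \normtwo{y}^2 \leq n$, and covariance $\rho\ip{x}{y}$, Isserlis's theorem gives $\E[a_t b_t] = \rho\ip{x}{y}$ and $\Var(a_t b_t) = \normtwo{x}^2\normtwo{y}^2 + \rho^2 \ip{x}{y}^2 \leq 2n^2$. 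A dithered (randomized, unbiased) rounding ensures that $\E[\tilde a_t b_t \mid a_t] = a_t b_t$ while inflating the per-trial variance only by a constant factor, so that $\Var(\hat Z) = O(n^2/T) = O(\rho^2(c-s)^2 n^2)$.

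Chebyshev then yields $|\hat Z - \rho\ip{x}{y}| \leq \rho(c-s)n/3$ with probability at least $2/3$; equivalently $\hat Z/\rho$ is within $(c-s)n/3$ of $\ip{x}{y}$, and Bob can decide by comparing $\hat Z$ to $\rho(c+s)n/2$. The parts I expect to require the most care are (a) cleanly converting $\rho$-correlated bits into $\rho$-correlated Gaussians with negligible distributional error (standard CLT bookkeeping), and (b) arranging an $O(1)$-bit per-trial quantization that is simultaneously unbiased and variance-preserving; without dithering, naive rounding to keep bias $o((c-s)n)$ per trial would cost an extra $\log(1/\rho(c-s))$ factor that must be avoided to match the target $O(1/\rho^2(c-s)^2)$ bit bound.
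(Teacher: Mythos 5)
Your overall strategy --- simulate $\rho$-correlated Gaussians, have Alice send $O(1)$ bits per sample about $a_t = \ip{g_A^{(t)}}{x}$ over $T = \Theta(1/\rho^2(c-s)^2)$ samples, and have Bob average products with $b_t = \ip{g_B^{(t)}}{y}$ and threshold --- is a genuinely different route from the paper. The paper does not average: Alice sends only the \emph{argmax} index $\ell = \arg\max_i \ip{X}{g_i}$ over $t = \exp(\Theta(1/\rho^2(c-s)^2))$ samples, together with a coarse estimate $m$ of $\normtwo{X}^2$, and Bob thresholds $\ip{Y}{g'_\ell}$; the $O(1/\rho^2(c-s)^2)$ bits come from $\log t$. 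Your ``estimate $\ip{x}{y}$ directly'' approach is natural and, with the right one-bit-per-sample scheme, does give the lemma, but it is a different protocol using a completely different analysis (Isserlis / Chebyshev vs.\ max-of-Gaussians concentration plus a tail bound for the off-component).

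However, as written there is a real gap in step (b), the one you yourself flag as delicate. With grid spacing $\Theta(\sqrt{n})$ and truncation window $[-M,M]$ with $M=\Theta(\sqrt{n\log T})$, the number of grid points is $\Theta(\sqrt{\log T})$, so specifying the quantization cell costs $\Theta(\log\log T) = \Theta(\log\log(1/(\rho(c-s))))$ bits per trial, not $O(1)$. Dithering removes the rounding \emph{bias}, but it does not remove the need to identify one of $\Theta(\sqrt{\log T})$ cells. You cannot fix this by shrinking $M$ to $\Theta(\sqrt n)$ (the truncation bias in $\E[\tilde a_t b_t]$ is then of order $n\,e^{-\Omega(1)}$, which is not $o(\rho(c-s)n)$ for small $\rho(c-s)$), nor by widening the grid spacing to $\Theta(\sqrt{n\log T})$ (the dithering variance then inflates $\Var(\tilde a_t b_t)$ by a $\log T$ factor, forcing $T = \Omega(\log(1/\rho(c-s))/\rho^2(c-s)^2)$). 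Either way you pick up a factor that is not absorbed into the claimed $O(1/\rho^2(c-s)^2)$. A correct $O(1)$-bit-per-trial scheme exists, but it requires a different idea than unbiased dithering of $a_t$: for instance, Alice sends only $\mathrm{sign}(a_t)$ each round, plus a single $O(\log(1/(c-s)))$-bit coarse estimate $m$ of $\normtwo{x}^2$ (exactly the quantity the paper's protocol also communicates). Then $\E[\mathrm{sign}(a_t)\,b_t] = \sqrt{2/\pi}\,\rho\,\ip{x}{y}/\normtwo{x}$ and $\Var(\mathrm{sign}(a_t)\,b_t) \le n$, and Bob, knowing $\normtwo{x}$ up to a $1\pm O(1/m)$ factor, can rescale and threshold after $T = O(1/\rho^2(c-s)^2)$ samples. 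This gives the stated bound, but it is an estimator of $\ip{x}{y}/\normtwo{x}$ rather than of $\ip{x}{y}$ directly, so your framework ``send an unbiased $O(1)$-bit summary $\tilde a_t$ of $a_t$'' has to be abandoned.
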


\begin{proof}
Let $X \in \{0,1\}^n$ and {$Y\in\{0,1\}^n$} be the inputs to Alice and Bob. Recall that Alice
and Bob want to distinguish the case $\dotprod{X}{Y} \geq c\cdot n$ from the
case $\dotprod{X}{Y} \leq s \cdot n$.

We shall suppose without loss of generality that Alice and Bob have access to a source of $\rho$-correlated random spherical Gaussian vectors
$g, g^\prime \in \mathbb{R}^{n}$. We can enforce this in the limit by sampling several $\rho$-correlated random bit vectors $r_i,r'_i \in \{0,1\}^{n}$ for $i \in [N]$ and setting $g = \sum_{i =1}^N r_i/\sqrt{N}$ and $g^\prime= \sum_{i=1}^N r_i'/\sqrt{N}$. We leave out the details for this technical calculation (involving an appropriate use of the central limit theorem) here.

Let $t$ be a parameter to be chosen later and let $(g_1,g_1'), (g_2,g_2'),\ldots,(g_t,g_t')$ be $t$  independent $\rho$-correlated spherical Gaussian
vectors chosen from the source as above.
By the rotational invariance of the Gaussian distribution, we can assume without loss of generality that $g_i' = \rho g_i + \sqrt{1-\rho^2} g_i''$, where the $g_i''$'s are independent spherical Gaussian vectors.

As $g_1,\ldots,g_t$ are independent spherical Gaussians, by standard tail bounds (e.g., see Ledoux and Talagrand~\cite{LedouxT}),
with probability at least $1- 1/6$,
\[
\max_{i \in [t]} \dotprod{X}{g_i} = (\alpha \sqrt{\log t} \pm O(1))\cdot \sqrt{\dotprod{X}{X}}
\]
for some universal constant $\alpha$.

\noindent The protocol then proceeds as follows:
\begin{itemize}
  \item Alice computes $\ell = \arg\max_{i \in [t]} \dotprod{X}{g_i}$ and
$m$ such that $\dotprod{X}{X} \in ((m-1) \cdot \frac{(c-s)}{100}n, m\cdot \frac{(c-s)}{100}n]$ and sends $(\ell,m)$ to Bob {(note that this implies $m=\bigO{1/(c-s)}$)}.
  \item Bob accepts if $m \geq \frac{100c}{c-s}$ and $\dotprod{Y}{g^\prime_\ell} \geq \alpha \rho \sqrt{\log t} \cdot \frac{(c+s)n}{2\sqrt{m(c-s)(n/100)}}$ and rejects otherwise.
\end{itemize}
Now, write $Y = a X + b X^\perp$ for some vector $X^\perp$ with $a \dotprod{X}{X} = \dotprod{X}{Y}$ and $\dotprod{X}{X^\perp} = 0$. Then,
\[
  \dotprod{Y}{g_\ell'} = a \rho \dotprod{X}{g_\ell} + b \rho \dotprod{X^\perp}{g_\ell}+ \sqrt{1-\rho^2}\dotprod{Y}{g_\ell''}\;.
\]
As $\dotprod{X}{g_\ell}$ is independent of $\dotprod{X^\perp}{g_\ell}$ and $\dotprod{Y}{g_{\ell}''}$ {(the former from the fact that if $G$ is a spherical Gaussian and $u,v\in\R^n$ are orthogonal vectors then $\dotprod{u}{G}$ and $\dotprod{v}{G}$ are independent one-dimensional Gaussians)}, it follows from a simple tail bound for univariate Gaussians that with probability at least $1 - 1/6$, $\abs{\dotprod{X^\perp}{g_\ell}}$, $\abs{\dotprod{Y}{g_\ell''}} = O(\sqrt{n})$. By combining the above inequalities, we get that with probability at least $2/3$,
\[
\dotprod{Y}{g_\ell'} = \alpha \rho \sqrt{\log t} \dotprod{X}{Y}/\sqrt{\dotprod{X}{X}} \pm O(\sqrt{n}).
\]

To finish the proof observe that for \yes-instances, {$\dotprod{X}{{Y}} \geq cn$ (so that $m \geq \frac{100c}{c-s}$)} and $\dotprod{X}{Y}/\sqrt{\dotprod{X}{X}} \geq \beta_1 \triangleq c\cdot n/\sqrt{m (c-s) (n/100)}$; while for
\no-instances, $\dotprod{X}{Y}{/\sqrt{\dotprod{X}{X}}} \leq \beta_2 \triangleq s\cdot n/\sqrt{(m-1)(c-s)(n/100)}$. Hence, the protocol works correctly if $\alpha\rho\sqrt{\log t}(\beta_1 - \beta_2) \gg O(\sqrt{n})$.

It follows from the settings of parameters that this indeed happens for some
$\log t = \Theta(1/(\rho^2 (c-s)^2))$. In particular, we have
\[
\beta_1 - \beta_2 = \frac{cn - sn}{\sqrt{m(c-s)(n/100)}} -
\frac{sn}{\sqrt{(c-s)(n/100)}} \left(\frac{1}{\sqrt{m-1}} - \frac{1}{\sqrt{m}}\right).
\]
By the condition $m \geq \frac{100c}{c-s}$ we have $\frac{1}{\sqrt{m-1}} - \frac{1}{\sqrt{m}} \leq \frac{c-s}{2s}$ and thus
\[
\beta_1 - \beta_2 \geq \frac12 \frac{cn - sn}{\sqrt{m(c-s)(n/100)}}
= \frac{\sqrt{25(c-s)n}}{\sqrt{m}}.
\]
And so when {$\log t \gg \Omega(1/(\alpha^2 \rho^2 (c-s)^2))$} we find
$\alpha\rho\sqrt{\log t}(\beta_1 - \beta_2) \gg O(\sqrt{n})$ as required.
\end{proof}

The above lemma along with the hardness of $\gapip$ gives us \autoref{thm:twoway-psr-isr-upper}.
\begin{proof}[Proof of \autoref{thm:twoway-psr-isr-upper}]
By \autoref{prop:gapip:hard:psr}, for every promise problem $P$ such that $\psrcc(P) \leq k$, $P$ reduces to
$\gapip_{c,s}$ with $c = (2/3)2^{-k}$ and $s = (1/3)2^{-k}$.
By \autoref{lem:gaussian-upper} we get that the reduced instance of
$\gapip_{c,s}$ has a one-way isr communication protocol of
with $O_\rho(1/(c-s)^2) = O_\rho({2^{2k}})$ bits of communication.
The theorem follows.
\end{proof}

\subsection{ISR lower bound for \sparsegapip}\label{ssec:lowerbound:isr}

In this section, we consider the promise problem
$\sparsegapip_{.99q,.9q^{-1},.6q^{-1}}^n$ and show that it has a
one-way psr protocol with $O(\log q)$ bits of communication, and then give a
two-way isr lower bound of $q^{\Omega(1)}$ for this problem. Together this
proves \autoref{thm:psr-isr-lower}.

\begin{proposition}\label{prop:sparsegapinnerpsr}
$\forall c > s$ and $\forall q,n$,
we have
\[
\owpsrcc(\sparsegapip_{q,c,s}^n) \leq \bigO{\frac{1}{q^2(c-s)^2}\left(\log \frac{1}{c} + \log \frac{1}{q(c-s)} + \log\log \frac{c}{c-s}\right)}\;.
\]
\end{proposition}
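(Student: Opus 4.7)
The plan is to give a one-way perfectly-shared-randomness protocol based on random subsampling combined with a sparse encoding of Alice's restriction. Using the shared randomness, Alice and Bob jointly draw indices $i_1, \ldots, i_{n'}$ independently and uniformly from $[n]$, where $n' := \lceil C\, c/(c-s)^2 \rceil$ for a sufficiently large absolute constant $C$. Alice forms the restriction $Z_A = (x_{i_1}, \ldots, x_{i_{n'}}) \in \{0,1\}^{n'}$ and sends it to Bob in a prefix-free sparse encoding (described below). Bob reconstructs $Z_A$, computes $Z := \sum_{j=1}^{n'} x_{i_j} y_{i_j}$ from his own input, and accepts iff $Z \geq (c+s)n'/2$.

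Correctness follows from Bernstein's inequality. The random variable $Z$ is a sum of $n'$ independent Bernoulli variables, each with mean at most $c$, so $\mathrm{Var}(Z) \leq c n'$ and $\mathbb{E}[Z] = n' \langle x,y\rangle/n$. On a \yes instance $\mathbb{E}[Z] \geq cn'$, on a \no instance $\mathbb{E}[Z] < sn'$, and Bernstein yields
\[
\Pr\!\bigl[\, |Z-\mathbb{E}[Z]| \geq (c-s)n'/2\, \bigr] \leq 2\exp\!\bigl(-\Omega((c-s)^2 n'/c)\bigr) \leq 1/6,
\]
once $C$ is large enough, so Bob's verdict is correct with probability at least $5/6$.

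The communication analysis exploits the sparsity of $Z_A$. Set $M := \|Z_A\|_1$; since $\|x\|_2^2 = \|x\|_1 \leq n/q$, the variable $M$ is binomial with mean at most $n'/q$, so by Chernoff $M \leq M_0 := 2n'/q + O(1)$ with probability at least $5/6$. Alice encodes $Z_A$ by first transmitting $M$ via an Elias prefix code (using $O(\log M + \log\log M)$ bits) and then the positions of its $M$ ones among the $n'$ sampled coordinates (using $\lceil \log \binom{n'}{M}\rceil \leq M \log(en'/M)$ bits). The problem is vacuous unless $c \leq 1/q$, since $\langle x,y\rangle \leq \|x\|_1 \leq n/q$ forces $cn \leq n/q$ on any \yes instance; hence $n' = O(1/(q(c-s)^2))$, $M_0 = O(1/(q^2(c-s)^2))$, and $n'/M_0 = \Theta(q)$. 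The positional-encoding cost is therefore $O(M_0 \log q) = O\bigl(\log(1/c)/(q^2(c-s)^2)\bigr)$ (using $\log q \leq \log(1/c)$), while the Elias overhead contributes $O(\log(1/(q(c-s))) + \log\log(c/(c-s)))$ bits; since $1/(q^2(c-s)^2) \geq 1$, these additive logarithmic overheads can be absorbed into the $\frac{1}{q^2(c-s)^2}\cdot(\cdots)$ form claimed by the proposition. A union bound over the two low-probability failure events (poor concentration of $Z$; or $M > M_0$) yields overall constant error.

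The main technical task is thus careful bookkeeping of the logarithmic overheads so as to match the precise form of the stated bound; the conceptual core---subsampling to dimension $O(c/(c-s)^2)$ and sparse-binary encoding of Alice's residual string---is standard.
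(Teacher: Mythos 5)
Your protocol is correct in spirit but follows a genuinely different route from the paper's. The paper's proof runs a tiny \emph{atomic} protocol in which the shared randomness picks $t \approx (1/c)\log(1/\gamma)$ indices, Alice transmits only the position $\ell$ of the first sampled nonzero coordinate of $x$ together with a quantization $m$ of $\|x\|_2^2$, and Bob outputs $y_{i_\ell}$; since the completeness/soundness gap of this atom is only $\Omega(1/m)$ with $m = O(1/(q(c-s)))$, the atom (of length $O(\log(1/c) + \log(1/(q(c-s))) + \log\log(c/(c-s)))$) is repeated $O(m^2) = O(1/(q^2(c-s)^2))$ times and thresholded. You instead run a \emph{single-shot} protocol: subsample $n' = \Theta(c/(c-s)^2)$ coordinates, transmit Alice's entire restriction using a sparse encoding whose cost is driven by the fact that $\|Z_A\|_1 = O(n'/q)$ with high probability, and let Bob threshold the inner product on the subsample. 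Both routes exploit the same two resources -- sparsity of $x$ (giving a factor $1/q$ saving) and dimension reduction to $O(c/(c-s)^2)$ coordinates -- but where the paper amortizes the per-round cost over $O(m^2)$ independent repetitions, you pay it all at once with an entropy-optimal encoding of the sampled block. Your approach is arguably cleaner; the paper's makes the completeness/soundness gap more explicit.

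Two small accounting points to tighten. First, your justification ``each with mean at most $c$, so $\Var(Z)\le cn'$'' is only valid on \no-instances (where per-coordinate means are $< s < c$); on \yes-instances the per-coordinate mean $\langle x,y\rangle/n$ can be as large as $1/q \ge c$. The conclusion still holds: on \yes-instances one should instead note that $(c+s)n'/2 \le \mu(1 - (c-s)/(2c))$ where $\mu = \mathbb{E}[Z] \ge cn'$, and apply a multiplicative Chernoff lower-tail bound, which again gives $\exp(-\Omega((c-s)^2 n'/c))$. Second, $\log M_0 = O\!\left(\log\frac{1}{q(c-s)} + \log\frac{c}{c-s}\right)$, not $O\!\left(\log\frac{1}{q(c-s)} + \log\log\frac{c}{c-s}\right)$ as you wrote; however, since $\log\frac{c}{c-s} \le \log\frac{1}{c} + \log\frac{1}{q(c-s)}$ and the prefactor $1/(q^2(c-s)^2) \ge 1$, this extra term is still absorbed into the stated bound, so the final claim is unaffected.
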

\begin{proof}[Proof (Sketch)]
We first show that there exists an atomic one-way communication protocol for the problem
$\sparsegapip^n_{q,c,s}$ with the following features (where $\gamma=\bigTheta{(c-s)/c}$):
\begin{enumerate}
\item the length of communication is $\bigO{\log 1/c + \log 1/(q(c-s)) + \log\log 1/\gamma}$.
\item $\yes$-instances are accepted with probability at least $(1-\gamma)\cdot \frac{c}{c-s}\cdot {\frac{100}{m}}$ and $\no$-instances with probability at most $\frac{s}{c-s}\cdot\frac{100}{m-1}$ for some $m = \bigOmega{ c/(c-s) }$ known by both parties. In particular, the difference between completeness and soundness is $\bigOmega{1/m}$.
\end{enumerate}
The atomic protocol lets the shared randomness determine a sequence of $t\eqdef -{\log(1/\gamma)}/{\log(1-c)}$ indices 
$i_1,i_2,\ldots,i_t$ in $[n]$.  Alice first computes $m=\bigO{1/(c-s)}$ such that $\normtwo{x}^2 \in ((m-1) \cdot \frac{(c-s)}{100}n, m\cdot \frac{(c-s)}{100}n]$, and picks the smallest index $\ell$ such that $x_{i_\ell} \neq 0$. Then she sends $(\ell,m)$ to Bob, or $(0,0)$ if no such index was found. (Note that by sparsity of $x$, we have $m=\bigO{1/(q(c-s))}$).
Bob outputs 0 if he received $(0,0)$ or if $m < \frac{100c}{c-s}$, and {the value of} $y_{i_\ell}$ otherwise.

The completeness follows from the fact that, for \yes-instances, $\normtwo{x}^2 \geq cn$ (implying $m \geq \frac{100c}{c-s}$) and one expects an index
$\ell$ such that $x_{i_\ell} \ne 0$ among the first roughly $1/c$ choices of $\ell$, so that this will be the case with high probability among the first $t$; conditioned on this {(which happens with probability at least $1-\gamma$ by our choice of $t$, as the probability no $\ell$ is found is upper bounded by $(1-\normtwo{x}/n)^t\leq (1-c)^t$)},
$y_{i_\ell}$ is $1$ with probability at least $\frac{cn}{\normtwo{x}^2} \geq \frac{c}{c-s}\frac{100}{m}$. As for the soundness, observe that a \no-instance for which Alice does not send $0$ to Bob will have $y_{i_\ell}=1$ with probability at most $\frac{sn}{\normtwo{x}^2} < \frac{s}{c-s}\cdot\frac{100}{m-1}$. Now, since $m \geq \frac{100c}{c-s}$, $\frac{100s}{c-s}\left(\frac{1}{m-1} - \frac{1}{m}\right) \leq \frac{100}{3m}$; and by {the} choice of {$\gamma \leq \frac{c-s}{3c}$} we also have ${\gamma}\frac{c}{c-s}\frac{100}{m} \leq \frac{100}{3m}$. This implies the difference in acceptance probability between completeness and soundness is at least $\frac{100}{3m}$.

\noindent Repeating this protocol $O(m^2)=O(1/(q^2(c-s)^2))$ times and thresholding yields the final result.
\end{proof}

\noindent We now state our main lower bound theorem.
\begin{theorem}\label{thm:main-neg}
There exists $\eps > 0$ such that {$\forall \rho\in [0,1)$ and} $\forall q$, {there exists $N$ for which the following holds. For every $n\geq N$,} we have
\[
  \isrcc_{\rho}(\sparsegapip_{.99q,.9q^{-1},.6q^{-1}}^n)  \geq \eps\cdot\sqrt{q}.
\]
\end{theorem}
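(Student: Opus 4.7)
The plan is to suppose that a $\rho$-ISR protocol $\Pi$ with $k$ bits of communication solves $\sparsegapip^n_{.99q,.9/q,.6/q}$, and to prove $k = \Omega(\sqrt{q})$. Using the vector representation from \autoref{ssec:strategy:preliminaries}, for each input $x \in \{0,1\}^n$ and each sample $r \in \{-1,+1\}^m$ of Alice's randomness the induced private-coin strategy corresponds to a vector $\strat{A}(x,r) \in K_A \subseteq [0,1]^{2^k}$, and similarly $\strat{B}(y,r') \in K_B$ for Bob. The acceptance probability on an input pair $(x,y)$ is then the bilinear form
\[
  \success_\Pi(x,y) = \expect{\dotprod{\strat{A}(x,r)}{\strat{B}(y,r')}},
\]
taken over $\rho$-correlated pairs $(r,r')$. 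Any lower bound on $k$ thus reduces to proving that no such low-dimensional bilinear form on $\rho$-correlated bits can witness the promise gap of $\sparsegapip$. To enforce hardness, I would fix a hard product distribution $\mu$ on inputs supported on $\yes$- and $\no$-instances (e.g.\ $x,y$ drawn independently with each coordinate independently set to $1$ with appropriate probability, conditioned on satisfying the promise).

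The core of the argument is an influence dichotomy. Fix a threshold $\tau = \tau(q)$ to be optimized. For each input $x$, call a coordinate $i \in [m]$ \emph{$\tau$-influential for Alice on $x$} if some coordinate of the vector-valued function $r \mapsto \strat{A}(x,r)$ has influence at least $\tau$ on $i$; define the analogous notion for Bob. \textbf{Low-influence regime:} suppose that for a typical input under $\mu$ the functions of both players have no $\tau$-influential coordinate. Here I would invoke the invariance principle \autoref{thm:invariance:principle:bifunction} to replace the $\rho$-correlated uniform bits $(r,r')$ by $\rho$-correlated spherical Gaussians, changing $\success_\Pi(x,y)$ by an error that goes to zero with $\tau$. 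This yields a protocol of the same communication cost $k$ which, over Gaussian randomness, still separates $\yes$ from $\no$. In the Gaussian world, rotational invariance dissolves the asymmetry between Alice's sparse input and Bob's generic input that made the PSR upper bound of \autoref{prop:sparsegapinnerpsr} cheap. One can then embed \textsc{Unique-Disjointness} on universe $[\Theta(q)]$ into $\sparsegapip$: given disjoint/almost-disjoint sets $A,B$, the players produce sparse vectors whose (Gaussian-computed) inner product reflects $|A\cap B|\in\{0,1\}$ and falls on the two sides of the $0.9/q$ vs $0.6/q$ gap. The classical $\Omega(q)$ lower bound for \textsc{Unique-Disjointness} then transfers, giving $k \geq \Omega(\sqrt{q})$ after absorbing the invariance error.

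\textbf{High-influence regime:} if for a non-negligible fraction of inputs there is a $\tau$-influential coordinate for Alice (and, by the symmetry that success forces, a correspondingly influential coordinate for Bob), the transcript of $\Pi$ encodes nontrivial information about which coordinate of $r$ is influential. One can then convert $\Pi$ into a one-way agreement-distillation protocol that produces a string of min-entropy $\Omega(\log(1/\tau))$ on which Alice and Bob agree with nontrivial probability. The lower bound \autoref{th:agreementintro} on \agreement then forces $k = \Omega(\log(1/\tau))$. Choosing $\tau$ to balance the two regimes yields the desired $\Omega(\sqrt{q})$ lower bound (where the $\sqrt{q}$ rather than $q$ arises because the invariance error scales with a root of the influence threshold, so one cannot drive $\tau$ all the way down without paying in the low-influence case).

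The main obstacle will be the invariance step. Unlike classical invariance principles that bound the error in a single low-influence multilinear polynomial, here we need an invariance principle for the bilinear pairing $\dotprod{\strat{A}(x,r)}{\strat{B}(y,r')}$ where each side is a vector-valued function of correlated inputs, constrained to the convex polytopes $K_A$ and $K_B$ of valid strategies. Establishing \autoref{thm:invariance:principle:bifunction} with error controlled only by the maximum single-coordinate influence (and independent of the input length $n$ and the strategy-space dimension $2^k$) is what licenses the reduction to the Gaussian problem. A secondary technicality is verifying that the sparsity constraint $\normtwo{x}^2 \leq n/q$ is preserved under the rotational embedding to \textsc{Unique-Disjointness} and that the completeness/soundness gap survives. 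Once both are in hand, the two-case argument closes and yields \autoref{thm:main-neg}.
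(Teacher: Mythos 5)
Your high-level dichotomy (low versus high influence, invariance principle to pass to Gaussians, reduce to \textsc{Disjointness}, agreement distillation in the high-influence case) correctly identifies the paper's main ingredients, but there is a fundamental confusion in \emph{which variables} the influences and the invariance principle are taken over, and this breaks the argument.

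You define the notion of influence with respect to the \emph{shared randomness}: ``call a coordinate $i \in [m]$ $\tau$-influential for Alice on $x$ if some coordinate of the vector-valued function $r \mapsto \strat{A}(x,r)$ has influence at least $\tau$ on $i$,'' and you propose to invoke the invariance principle ``to replace the $\rho$-correlated uniform bits $(r,r')$ by $\rho$-correlated spherical Gaussians.'' The paper does the opposite: for a \emph{fixed} sample $r$ of the randomness, it considers the strategy map $f^{(r)}\colon\{0,1\}^n\to K_A^{(k)}$ as a function of the \emph{input} $x$, and the influences in \autoref{thm:singlefg}, the sets $\bad_C$/$\bad_D$, and the invariance principle \autoref{thm:invariance:principle:bifunction} are all taken with respect to \emph{input} coordinates $i\in[n]$, replacing the discrete input distributions $\dyes,\dno$ (products of $B_Y$ or $B_N$) by bivariate Gaussians with matching moments. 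This is essential: the hardness in the PSR-cheap regime comes from the \emph{sparsity} of Alice's input, and only by Gaussianizing the input does rotational invariance dissolve that sparsity and permit the reduction to Gaussian correlation / \textsc{Disjointness}. If instead you Gaussianize the randomness $r$, the inputs $(x,y)$ remain discrete and sparse, and nothing about the structure you want to exploit changes; the sentence ``rotational invariance dissolves the asymmetry between Alice's sparse input and Bob's generic input'' simply does not follow from replacing $r$ by a Gaussian. Similarly, in your high-influence case the object Alice and Bob would agree on is an influential \emph{input} coordinate of $f^{(r)}$ (which depends only on $r$, so Alice can output it with zero communication), not a coordinate of $r$.

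Two further gaps: (i) You do not address min-entropy of the agreed-upon coordinate. The paper must construct the ``bad'' sets $\bad_C,\bad_D$ of coordinates that are influential for too many $r$'s, and then \emph{adaptively} replace the $\yes$-distribution by $\dyes'$ (which looks like $\dno$ on $\bad_C\cup\bad_D$) before applying \autoref{thm:singlefg}. This is why the paper cannot, as you propose, ``fix a hard product distribution $\mu$'': the distribution must depend on the strategy, or the extracted coordinate may have low min-entropy and \autoref{lem:lb:agreement} does not apply. (ii) The source of the $\sqrt{q}$ is misattributed. You say it arises from balancing $\tau$ against the invariance error; in fact $\tau$ and $d$ may be taken with arbitrary dependence on $q$ and $\eps$. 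The $\sqrt{q}$ comes from the correlation parameter of the resulting Gaussian problem: rescaling the Bernoulli$(1/q)$ coordinate to a standard Gaussian turns the $\Theta(1/q)$ gap into a Gaussian correlation $\theta=\Theta(1/\sqrt{q})$, and \autoref{lm:gaussian-corr-lb:twoway} (via \textsc{Disjointness} on universe size $\Theta(1/\theta)$, not $\Theta(q)$) gives $k=\Omega(1/\theta)=\Omega(\sqrt{q})$.
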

We prove \autoref{thm:main-neg} in \autoref{sec:proof:theo:main-neg}, but we now note that
\autoref{thm:psr-isr-lower} follows immediately from \autoref{prop:sparsegapinnerpsr} and
\autoref{thm:main-neg}.

\begin{proof}[Proof of \autoref{thm:psr-isr-lower}]
The promise problem is $P = \sparsegapip_{.99\cdot2^k,.9\cdot 2^{-k},.6\cdot 2^{-k}}$.
By \autoref{prop:sparsegapinnerpsr} we have $\owpsrcc(P) \leq \bigO{k}$
and by \autoref{thm:main-neg} we have $\isrcc(P) \geq 2^{\bigOmega{k}}$.
\end{proof}

\section{Proof of \autoref{thm:main-neg}}\label{sec:proof:theo:main-neg}

Our goal for this section is to prove, modulo some technical theorems,
that \sparsegapip has high communication complexity in the
{imperfectly} shared randomness setting. Before jumping into the proof
we give some overview first.

\subsection{Proof setup}
To prove \autoref{thm:main-neg}, we will show that for every
``strategy'' of Alice and Bob, there is a pair of distributions $\dyes$
and $\dno$ supported (mostly) on \yes\ and \no\ instances, respectively, such that the strategies
do not have much ``success'' in distinguishing them. We note that in contrast to typical lower bounds for perfectly shared randomness, we cannot hope to fix a distribution that works against every strategy. Indeed for every pair of distributions, by virtue of the protocol given in
\autoref{prop:sparsegapinnerpsr}
and the Yao min-max principle we have even a deterministic strategy
(let alone randomized strategy with imperfect sharing) that succeeds in distinguishing them with high probability. So instead we have to fix the strategies first and then give a pair of distributions that does not work for that strategy.  We define the notion of
strategy and success more formally below, and then work towards the proof of
\autoref{thm:main-neg}.

\paragraph{Strategy:} We now use \autoref{ssec:strategy:preliminaries} to formalize
what it would mean to have a
$k$-bit communication protocol for any communication problem. For aesthetic reasons we view Alice and Bob's strategies as probabilistic
ones.
Recall, by \autoref{prop:ka:kb},
that $k$-bit probabilistic communication strategies for Alice can be described by
elements of $K_A^{(k)} \subseteq [0,1]^{2^k}$ and similarly
by elements of $K_B^{(k)} \subseteq [0,1]^{2^k}$ for Bob. 
So, on randomness $r$ we have that Alice's
communication strategy can be described by a function $f^{(r)}\colon\{0,1\}^n
\to K_A^{(k)}$. Similarly for randomness $s$, Bob's communication strategy
can be described by a function $g^{(s)}\colon\{0,1\}^n \to K_B^{(k)}$.

Thus,  a \emph{strategy} for a game is a pair of sets of functions $\mathcal{F}=(f^{(r)})_r,\mathcal{G}=(g^{(s)})_s$, where
\begin{align*}
	f^{(r)} &\colon \{0,1\}^n\to K_A^{(k)} \\
	\mbox{ and } g^{(s)} &\colon \{0,1\}^n\to K_B^{(k)}.
\end{align*}

We consider a pair of distributions $D=(\dyes, \dno)$ to be {\em valid} if
$\dyes$ is mostly (say with probability $.9$) supported
on \yes-instances and $\dno$ mostly on \no-instances.
For valid $D$, we define {the \emph{success} as}
\begin{align*}
\success_{D}(f,g) &\eqdef
			\shortexpect_{(x,y)\sim \dyes}\left[\dotprod{ f(x) }{ g(y) }\right]  -
			\shortexpect_{(x,y)\sim \dno }\left[\dotprod{ f(x) }{ g(y) }\right]\\
\success_{D,\rho}(\mathcal{F},\mathcal{G}) &\eqdef \abs{
			\shortexpect_{r\sim_\rho s}
			\left[\success_{D}(f^{(r)},g^{(s)})\right]
		 } \\
\success_{\rho}(\mathcal{F},\mathcal{G}) &\eqdef \min_{{\rm valid~}D}\success_{D,\rho}(\mathcal{F},\mathcal{G})
\end{align*}

We note that any strategy that distinguishes \yes-instances of
\sparsegapip from \no-instances with probability
$\eps$ must have success $\eps - .1$ on every valid distribution as
well (with the difference of $.1$ coming up due to the fact that valid
distributions are not entirely supported on the right instances).
In what follows we will explain why strategies (with small $k$) do not have
sufficiently positive success.

\subsection{Overview of proof of \autoref{thm:main-neg}.}
To prove \autoref{thm:main-neg} we need to show that if a pair of
strategies $(\calf,\calg)$ achieves $\success_{\rho}(\calf,\calg) >
.01$ then $k$ must be large.
Roughly our strategy for showing this is as follows: We first
define two simple distributions $\dyes$ and $\dno$ (independent of the
strategy $(\mathcal{F},\mathcal{G})$) and show that any fixed pair of
functions $(f,g)$ that are successful in
distinguishing $\dyes$ from $\dno$ must have a few influential variables
and furthermore at least one of these variables must be common to both $f$ and $g$
(see \autoref{thm:singlefg}). Our proof of this theorem, is based on the ``invariance
principle'' \cite{MO:GAFA:10} and
\autoref{thm:singlefg} is a variant of it which is {particularly} suited for use in
communication complexity. The proof of this theorem is deferred to \autoref{sec:invariance}.

We use this theorem to design agreement distillation strategies for two new players Charlie and Dana as follows: Given shared random pair
$(r,s)$, Charlie picks a random influential variable $x_i$ of the function
$f^{(r)}$ used by Alice on random string $r$ and outputs the index $i \in [n]$.
Dana similarly picks a random influential variable $y_j$ of the function
$g^{(s)}$ used by Bob and outputs $j$. \autoref{thm:singlefg} assures us that with non-trivial
probability $i = j$ and this gives an agreement protocol.

If we could argue
that $i = j$ has high min-entropy, then we would be done (using
\autoref{lem:lb:agreement} which asserts that it is not possible to distill agreement with high-entropy
and high probability).
But this step is not immediate (and should not be since we
have not crafted a distribution specific to $(\calf, \calg)$). To show that
this strategy produces indices of high min-entropy, we consider the
distribution of indices that is produced by Charlie as we vary $r$ and
let $\bad_C$ denote the indices that are produced with too high a
probability. Similarly we let $\bad_D$ denote the indices that are produced
with too high a probability by Dana. We now consider a new distribution
$\dyes'$ supported on \yes-instances of the
{$\sparsegapip$} problem.
In $\dyes'$ the $(x,y)$ pairs are chosen so that when restricted to coordinates in
$\bad_C \cup \bad_D$ they look like they come from $\dno$
while when restricted to coordinates outside
$\bad_C \cup \bad_D$ they look like they come from $\dyes$
(see \autoref{defn:dyesprime} below for a precise description).
Since $\bad_C \cup \bad_D$ is small, the distribution $\dyes'$ remains
supported mostly on \yes-instances, but strategies that depend mainly
on coordinates from $\bad_C \cup \bad_D$ would not have much success in
distinguishing $\dyes'$ from $\dno'$ (which remains the original $\dno$).

We use this intuition to argue
formally in \autoref{lem:reduction} that a slightly modified sampling protocol of Charlie and Dana,
where they discard
$i,j$ from $\bad_C \cup \bad_D$,
leads to agreement with noticeably high probability
on a high-entropy random variable, yielding the desired contradiction.

In the rest of this section we first present the main definitions needed to
state \autoref{thm:singlefg}.
We then prove
\autoref{thm:main-neg} assuming \autoref{thm:singlefg}. We prove {the latter} in \autoref{sec:invariance}, {along with the main technical ingredient it relies on, the invariance principle of \autoref{thm:invariance:principle:bifunction}}.

\subsection{Background on influence of variables}

We now turn to defining the notion of influential variables for
functions and related background material for functions defined on
product probability spaces.

Recall that a finite probability space is given by a pair $(\Omega,\mu)$ where $\Omega$ is a finite
set and $\mu$ is a probability measure on $\Omega$.
We will begin with the natural
probabilistic definition of influence of a variable on functions defined on product spaces,
and then relate it to a more algebraic
definition which is needed for the notion of low-degree influence.

\begin{definition}[Influence and variance]
Let $(\Omega,\mu)$ be a finite probability space, and let $h\colon\Omega^n \to \R$ be a function on product probability space.
The \emph{variance} of $h$, denoted $\Var(h)$, is defined as the variance of the random variable $h(x)$ for $x \in \Omega^n \sim \mu^{\otimes n}$, i.e., $\Var(h) = \E_{x} [ h(x)^2] - \left(\E_{x} [ h(x) ]\right)^2$.

\noindent For $i \in [n]$, the \emph{$i$-th influence of $h$} is defined as
\[ \infl[i]{h} = \E_{x^{(-i)} \sim \mu^{\otimes (n-1)}}\Bigl[ \Var_{x_i \sim \mu} [ h(x) ] \Bigr] \]
where $x^{(-i)}$ denotes all coordinates of $x$ except the $i$'th coordinate.
\end{definition}

To define the notion of {\em low-degree} influence, we need to
work with a multilinear representation of functions $h\colon \Omega^n \to \R$.
Let $b = |\Omega|$ and
$\mathcal{B} = \{\chi_0,\chi_1,\dots,\chi_{b-1}\}$ be a basis of real-valued functions over $\Omega$. Then, every function $h\colon\Omega^n \to \R$ has a unique multilinear expansion of the form
  \begin{equation}\label{eq:mult-expansion}
   h(x) = \sum_{\multindex = (\sigma_1,\dots,\sigma_n) \in \{0,1,\dots,b-1\}^n} \hat{h}_{\multindex} \chi_{\multindex}(x)
  \end{equation}
 for some real coefficients $\hat{h}_{\multindex}$, where
$\chi_{\multindex}$ is given by $\chi_{\multindex}(x) \eqdef \prod_{i \in [n]} \chi_{\sigma_i}(x_i)$.

 When the ensemble $\mathcal{B}$ is a collection of {\em orthonormal} random variables, namely $\chi_0=1$ and $\E_{a \sim \mu} [\chi_{j_1}(a) \chi_{j_2}(a)] = \delta_{j_1 j_2}$, it is easy to check that
  $\Var(h) = \sum_{\multindex \neq \mathbf{0}} \hat{h}^2_{\multindex}$ and also that
\[ \infl[i]{h} =  \sum_{\multindex:\,\sigma_i \neq 0} \hat{h}^2_{\multindex} \ . \]
One can also take the above as the algebraic definition of influence, noting that it is
independent of the choice of the orthonormal basis $\mathcal{B}$ and thus well-defined.
The degree of a multi-index $\multindex$ is defined as $|\multindex| = \abs{ \setOfSuchThat{ i }{ \sigma_i \neq 0 } }$, and this leads to the definition of low-degree influence.

\begin{definition}[Low-degree influence]
For a function $h\colon\Omega^n \to \R$ with multilinear expansion as in \eqref{eq:mult-expansion} with respect to any orthonormal basis, the \emph{$i$-th degree $d$ influence of $h$} is the influence of the truncated multilinear expansion of $h$ at degree $d$, that is
\[
\infldeg[i]{d}{h} \eqdef \sum_{\multindex:\,\sigma_i \neq 0 \atop {|\multindex| \leq d}} \hat{h}_{\multindex}^2 \ .
\]
\end{definition}

\begin{remark}[Functions over size $2$ domain]
When $|\Omega|=2$, and $\{1,\chi\}$ is an orthonormal basis of real-valued functions over $\Omega$, the expansion \eqref{eq:mult-expansion} becomes the familiar Fourier expansion
$h(x) = \sum_{S \subseteq [n]} \widehat{h}(S) \prod_{i \in S} \chi(x_i)$, and we have $\infl[i]{h} \eqdef \sum_{S  \ni i} \widehat{h}(S)^2$
and $\infldeg[i]{d}{h} \eqdef \sum_{\substack{S \ni i\\\abs{S}\leq d}} \widehat{h}(S)^2$.
\end{remark}

\noindent We will make use of the following simple upper bound on the number of low-degree influential coordinates (which follows immediately, for instance, from \cite[Proposition 3.8]{MO:GAFA:10}) 
\begin{proposition}
\label{prop:low-inf}
For every $\tau > 0$ and $d \in \N$  there exists $t = t(\tau,d)$ such that
for all $n$ and all functions
$h\colon \Omega^n \to [-1,1]$, we have $\abs{ \setOfSuchThat{i \in [n] }{ \infldeg[i]{d}{h} > \tau } } \leq
t$. {(Furthermore, one can take $t = d/\tau$).}\end{proposition}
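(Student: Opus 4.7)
The plan is to prove the bound $t=d/\tau$ by a straightforward double-counting argument on the multilinear expansion, combined with the $L_2$ bound coming from $|h|\le 1$.

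First I would fix any orthonormal basis $\mathcal{B}=\{\chi_0,\dots,\chi_{b-1}\}$ with $\chi_0 = 1$ and write $h$ in its multilinear expansion $h(x)=\sum_{\multindex}\hat h_{\multindex}\chi_{\multindex}(x)$. By orthonormality, Parseval gives $\sum_{\multindex}\hat h_{\multindex}^2 = \E[h^2]\le 1$, since the range of $h$ is contained in $[-1,1]$.

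Next I would swap the order of summation in the total low-degree influence:
\[
\sum_{i=1}^n \infldeg[i]{d}{h}
= \sum_{i=1}^n \sum_{\substack{\multindex:\,\sigma_i\ne 0\\ |\multindex|\le d}} \hat h_{\multindex}^2
= \sum_{\substack{\multindex\ne\mathbf 0\\ |\multindex|\le d}} |\multindex|\cdot \hat h_{\multindex}^2
\le d\sum_{\multindex} \hat h_{\multindex}^2
\le d,
\]
where the key identity is that each nonzero multi-index $\multindex$ is counted exactly $|\multindex|$ times (once for each $i$ with $\sigma_i\neq 0$), and then $|\multindex|\le d$ lets us pull a factor of $d$ out.

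Finally I would apply a simple averaging / Markov-type argument: if $S=\{i\in[n] : \infldeg[i]{d}{h}>\tau\}$, then $\tau |S| < \sum_{i\in S}\infldeg[i]{d}{h} \le \sum_i \infldeg[i]{d}{h}\le d$, so $|S|<d/\tau$. Taking $t=d/\tau$ (or $\lceil d/\tau\rceil$) gives the claimed bound, which is independent of $n$ and of $|\Omega|$. There is no real obstacle here; the only point to be careful about is noting that the definition of $\infldeg[i]{d}{h}$ is independent of the choice of orthonormal basis (as already remarked in the paper) so we are free to work with any convenient $\mathcal B$ to get Parseval.
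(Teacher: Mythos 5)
Your proof is correct, and it is exactly the standard double-counting argument: the paper itself gives no proof but simply cites \cite[Proposition 3.8]{MO:GAFA:10}, whose proof is the same Parseval-plus-Markov computation you wrote out. No discrepancy to report.
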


For the invariance principle, we will understand the behavior of a function when its domain is replaced by a different probability space with matching second moments. For this purpose, we will view functions as multilinear polynomials as follows.

\begin{definition}[Functions on product spaces as multilinear polynomials]\label{def:rv-to-poly}
The multilinear polynomial associated with a function $h\colon\Omega^n \to \R$ with respect to a basis $\mathcal{B} = \{\chi_0,\chi_1,\dots,\chi_{b-1}\}$ of real-valued functions over $\Omega$ is a polynomial in indeterminates $\mathbf{z} = \setOfSuchThat{ z_{i,j} }{ i \in [n], j \in \{0,1,\dots,b-1\} }$ given by
\[ H(\mathbf{z}) = \sum_{\multindex \in \{0,1,\dots, b-1\}^n} \hat{h}_{\multindex}\mathbf{z}_{\multindex} , \]
$\mathbf{z}_{\multindex}$ stands for the monomial $\prod_{i=1}^n z_{i,\sigma_i}$ and the coefficients $\hat{h}_{\multindex}$ are given by the multilinear expansion \eqref{eq:mult-expansion} of $f$ w.r.t. $\mathcal{B}$.
\end{definition}

Above, we saw how a function can be viewed as a multilinear polynomial w.r.t. a basis of random variables. Conversely, one can view multilinear polynomials as functions by substituting random variables for its indeterminates.
\begin{definition}[Multilinear polynomials as random variables on product spaces]\label{def:poly-to-rv}
Given a collection of random variables $\mathcal{X} = \{\chi_0,\dots,\chi_{{b-1}}\}$ over a probability space $(\Omega,\mu)$, one can view a multilinear polynomial $P$ in indeterminates $\mathbf{z} = \setOfSuchThat{ z_{i,j} }{ i \in [n], j \in \{{0,1,\dots,b-1}\} }$ given by
\[ P(\mathbf{z}) = \sum_{\multindex \in \{{0,1,\dots, b-1}\}^n} \hat{P}_{\multindex}\mathbf{z}_{\multindex} , \]
where $\mathbf{z}_{\multindex}$ stands for the monomial $\prod_{i=1}^n z_{i,\sigma_i}$,
 as a random variable $P(\mathcal{X}^n)$ over the probability space $(\Omega^n,\mu^{\otimes n})$ mapping $x =(x_1,\dots,x_n)$ to
  \begin{equation}\label{eq:poly-to-fn}
    \sum_{\multindex \in \{{0,1,\dots, b-1}\}^n} \hat{P}_{\multindex} \prod_{i=1}^n \chi_{\sigma_i}(x_i) \ .
  \end{equation}
\end{definition}

\subsection{Proof of \autoref{thm:main-neg}}

We start by introducing a few definitions, in particular of the central distributions and
the extraction strategy. We {begin with the description of} the basic distributions
$\dyes$ and $\dno$.

\begin{definition}[$\dyes$, $\dno$]
\label{defn:dno}
We define two distributions $B_N$ and $B_Y$ on
$\{0,1\}\times\{0,1\}$ below.
The distributions $\dyes$ and $\dno$ will be product distributions on
$\left(\{0,1\}\times\{0,1\}\right)^n$, given by
$\dyes = B_Y^{\otimes n}$
and
$\dno = B_N^{\otimes n}$.
\begin{itemize}
	\item A pair $(x,y)$ is drawn from $B_N$ by setting $x\sim\bernoulli{1/q}$
\\		and $y\in\{0,1\}$ uniformly at random. Note that $x,y$ are independent, and $\expect{xy} =\frac{1}{2q} $.
	\item A pair $(x,y)$ is drawn from $B_Y$ by setting
		\begin{align*}
			(x,y) &= \begin{cases}
				(0,1) &\text{ w.p. } \frac{1}{2}\left(1-\frac{1.95}{q}\right)\\
				(0,0) &\text{ w.p. } \frac{1}{2}\left(1-\frac{0.05}{q}\right)\\
				(1,1) &\text{ w.p. } \frac{1.95}{2q}\\
				(1,0) &\text{ w.p. }\frac{.05}{2q}
			\end{cases}
		\end{align*}
		so that the marginals of $x$, $y$ in $B_Y$ match those of $B_N$, and $\expect{xy} = \frac{1.95}{2q}$.
\end{itemize}
\end{definition}

A straightforward application of tail inequalities for independent, identically distributed (i.i.d.) random variables
tells us that $\dyes$ is mostly supported on \yes-instances of
{$\sparsegapip^n_{.99q,0.9q,0.6q}$} with high probability
for sufficiently large $n$.
Similarly $\dno$ is mostly supported on $\no$-instances.

Our main technical result is the following theorem showing any fixed pair of vector-valued functions $(f,g)$ (corresponding to strategies for Alice and Bob) that succeed in distinguishing $\dyes$ from $\dno$ must share an influential
variable (with non-trivially high influence of non-trivially low-degree).

\begin{theorem}\label{thm:singlefg}
There exist functions {$k_0 \geq \Omega_\eps(\sqrt{q})$},  $d(q,\eps) < \infty$,
and $\tau(q,\eps) > 0$,
defined for $q \in \N$,and $\eps > 0$,
such that the following holds:
For every $\eps > 0$ and $k, q \in \N$ and every sufficiently
large $n$,
if $k < k_0(q,\eps)$ and
$f\colon \{0,1\}^n\to K_A^{(k)}$ and $g\colon \{0,1\}^n \to K_B^{(k)}$ are
functions such that $\success_{(\dyes,\dno)}(f,g) \geq \eps$,
then there exists $i \in [n]$ such that
\[
\min\left\{
  \max_{j \in [{2^k}]} \infldeg[i]{d(q,\eps)}{f_j},
  \max_{j \in [{2^k}]} \infldeg[i]{d(q,\eps)}{g_j}
\right\}  \geq \tau(q,\eps)
\]
where $f_j$ and $g_j$ denote the $j$'th component function of $f$ and $g$, respectively.
(Here, the influence of $f_j$ is w.r.t. to the $\bernoulli{1/q}$ distribution on $\{0,1\}$, and that of $g_j$ is w.r.t. the uniform distribution on $\{0,1\}$.)
\end{theorem}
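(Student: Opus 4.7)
The plan is to establish the contrapositive: assuming that for every coordinate $i\in[n]$ we have $\min\bigl\{\max_{j\in[2^k]}\infldeg[i]{d}{f_j},\max_{j\in[2^k]}\infldeg[i]{d}{g_j}\bigr\} < \tau$ for suitably chosen $d=d(q,\eps)$ and $\tau=\tau(q,\eps)$, I will show that $\success_{(\dyes,\dno)}(f,g) < \eps$ unless $k$ is already $\Omega_\eps(\sqrt{q})$. The central quantity is the bilinear form $\Phi(x,y) \eqdef \dotprod{f(x)}{g(y)}$, whose gap $\E_{(x,y)\sim\dyes}[\Phi]-\E_{(x,y)\sim\dno}[\Phi]$ is exactly $\success_{(\dyes,\dno)}(f,g)$.

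First, I would apply the bifunction invariance principle (\autoref{thm:invariance:principle:bifunction}) to $\Phi$. Since the components $f_j,g_j$ are $[0,1]$-valued but not a priori of low degree, a preliminary noise-operator smoothing step replaces each by its truncation to degree at most $d$, at the cost of an error on $\E[\Phi]$ that is negligible once $d$ is chosen sufficiently large in terms of $q$ and $\eps$. The invariance principle then permits swapping the product distributions $B_Y^{\otimes n}$ and $B_N^{\otimes n}$ on $\{0,1\}^n\times\{0,1\}^n$ for jointly Gaussian vectors $(\tilde X,\tilde Y)$ with matching coordinate-wise first and second (including cross) moments, changing $\E[\Phi]$ by an error that scales polynomially with $\tau$. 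The crucial feature is that the hybrid argument telescopes over coordinates, with the per-coordinate error controlled by the \emph{minimum} of $f$'s and $g$'s low-degree influence at that coordinate --- which is precisely what our hypothesis keeps small.

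After the swap, $(f,g)$ still distinguish the Gaussian analogs of $\dyes$ and $\dno$ with advantage $\geq \eps - o(1)$. These two Gaussian laws share identical one-dimensional marginals on each side; the only structural difference lies in the per-coordinate covariance between Alice's and Bob's inputs, which is $\Theta(1/q)$ larger in the \yes-case than in the \no-case. This Gaussian distinguishing problem is \emph{symmetric} between the two parties --- the sparsity asymmetry on Alice's side that made $\sparsegapip$ easy with perfectly shared randomness has vanished --- and I would reduce Unique Set Disjointness on $[N]$-sized sets with $N = \Theta(\sqrt{q})$ to it via a standard Gaussian encoding: using public randomness, Alice and Bob sample a family of correlated Gaussian vectors indexed by $[N]$ and sum over the elements of their input sets (plus deterministic shifts to correct marginals), with parameters tuned so that $\abs{A\cap B}=0$ reproduces the \no-Gaussian covariance profile and $\abs{A\cap B}=1$ reproduces the \yes-Gaussian one. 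The $\Omega(N)$ Razborov / Kalyanasundaram--Schnitger lower bound on disjointness then forces $k = \Omega(\sqrt{q})$, contradicting $k < k_0(q,\eps)$ whenever $k_0$ is chosen as $c(\eps)\cdot\sqrt{q}$ for a small enough constant $c(\eps)$.

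The main technical obstacle is the bifunction invariance principle itself, which must be formulated so that (i) the two functions $f,g$ live on \emph{different} marginals (Bernoulli$(1/q)$ for $x$, uniform for $y$) yet are simultaneously compared against Gaussians matching both sides' first- and second-order statistics, and (ii) the per-coordinate hybrid-swap error depends only on the \emph{minimum}, rather than the sum, of $f$'s and $g$'s influences at that coordinate. This requires carefully extending the Mossel--O'Donnell machinery~\cite{MO:GAFA:10} to bilinear functionals with heterogeneous marginals, essentially decoupling each coordinate swap so that a low influence on \emph{either} side suffices to make the local error small. Establishing this is the content of \autoref{thm:invariance:principle:bifunction}, to be proved in \autoref{sec:invariance}.
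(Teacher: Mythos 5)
Your proposal follows essentially the same route as the paper's proof: pass to the Gaussian setting via the bifunction invariance principle (\autoref{thm:invariance:principle:bifunction}), exploit the restored symmetry between Alice's and Bob's Gaussian inputs, and reduce disjointness on a universe of size $\Theta(\sqrt q)$ to the resulting Gaussian-correlation distinguishing problem to force $k=\Omega_\eps(\sqrt q)$. The only minor deviation is cosmetic — the paper normalizes the marginals by composing $F$ with an affine map $\phi$ on the Gaussian side rather than inserting shifts inside the disjointness encoding, and describes the initial smoothing as an application of the noise operator $T_{1-\eta}$ (via Lemma~\ref{lemma:maxmin:maxmax}) rather than a literal degree truncation — but the decomposition, the key lemmas invoked, and the parameter scalings all coincide with what appears in \autoref{sec:putting-together} and \autoref{lm:gaussian-corr-lb:twoway}.
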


This theorem is proved in \autoref{sec:invariance}. Building on this theorem, we can
try to build agreement distillation protocols $(\ext_C,\ext_D)$ that exploit
the success of the strategies $(\calf,\calg)$ to distill common randomness.
We start by first identifying coordinates that may be influential for too many
pairs $(r,s)$ (and thus may be produced with too high a probability by a naive distillation
protocol).

For the rest of the section we fix $q \in \N$ and $\eps > 0$ and let $d = d(q,\eps)$
and $\tau = \tau(q,\eps)$ where $d(\cdot,\cdot)$ and $\tau(\cdot,\cdot)$ are the functions
from \autoref{thm:singlefg}.

\begin{definition}[$\bad_C$, $\bad_D$]
Let $\delta = 1/(100\cdot 2^{k_0} t)$ where $t = t(\tau,d)$ as given by~\autoref{prop:low-inf}, and $k_0 = k_0(q,\eps)$ is given by \autoref{thm:singlefg}. Define
\[
  \textsc{Bad}_C \eqdef \setOfSuchThat{ i\in[n] }{ \probaDistrOf{r}{ \max_{j \in [2^k]} \infldeg[i]{d}{{f^{(r)}_j}} > \tau } > \frac{1}{\delta n}
  }
  \qquad
\mbox{ and } \]
\[
  \textsc{Bad}_D \eqdef \setOfSuchThat{ i\in[n] }{ \probaDistrOf{s}{ \max_{j \in [2^k]} \infldeg[i]{d}{g^{(s)}_j} > \tau } > \frac{1}{\delta n} } ,
\]
where $r,s$ denote the randomness available to Alice and Bob, $f^{(r)}_j$ denotes the $j$'th component function for Alice's strategy on randomness $r$, and similarly for $g^{(s)}_j$.
\end{definition}

\noindent Directly from this definition {and~\autoref{prop:low-inf}}, we get
\begin{proposition} \label{prop:t-of-tau-d}
$|\bad_C|,|\bad_D| \leq  {2^k} \cdot t\cdot\delta\cdot n \le  n/100$.
\end{proposition}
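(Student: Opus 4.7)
The plan is to prove both bounds by a double-counting / averaging argument, combining the definition of $\bad_C$ (resp.\ $\bad_D$) with the union bound over the $2^k$ coordinates of the vector-valued strategies and the elementary counting bound supplied by~\autoref{prop:low-inf}.

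To bound $\abs{\bad_C}$, I would start from the defining inequality: for every $i\in\bad_C$,
\[
\probaDistrOf{r}{ \max_{j \in [2^k]} \infldeg[i]{d}{f^{(r)}_j} > \tau } \;>\; \frac{1}{\delta n}.
\]
Summing this over $i\in\bad_C$ gives $\abs{\bad_C}/(\delta n)$ as a lower bound on $\sum_{i\in\bad_C}\probaDistrOf{r}{\max_j \infldeg[i]{d}{f^{(r)}_j}>\tau}$. To upper bound this quantity, I replace $\bad_C$ by all of $[n]$, apply the union bound over $j\in[2^k]$, and swap the order of summation and expectation to obtain
\[
\sum_{i\in\bad_C}\probaDistrOf{r}{\max_{j}\infldeg[i]{d}{f^{(r)}_j}>\tau}
\;\le\; \sum_{j=1}^{2^k}\E_r\!\left[\,\abs{\setOfSuchThat{i\in[n]}{\infldeg[i]{d}{f^{(r)}_j}>\tau}}\,\right].
\]
For each fixed $r$ and $j$, the component $f^{(r)}_j$ is a $[0,1]$-valued function on $\{0,1\}^n$, so~\autoref{prop:low-inf} bounds the inner cardinality by $t=t(\tau,d)$. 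The right-hand side is therefore at most $2^k\cdot t$, giving $\abs{\bad_C}/(\delta n)\le 2^k t$, i.e.\ $\abs{\bad_C}\le 2^k t\,\delta\, n$.

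Plugging in $\delta=1/(100\cdot 2^{k_0} t)$ and using the standing hypothesis $k<k_0$ from \autoref{thm:singlefg}, we find $2^k t\,\delta\le 2^{k_0}t\,\delta=1/100$, so $\abs{\bad_C}\le n/100$. The argument for $\bad_D$ is identical, applied to the component functions $g^{(s)}_j$ of Bob's strategy (whose influences are measured w.r.t.\ the uniform distribution on $\{0,1\}$, but this only affects the constant hidden in~\autoref{prop:low-inf} and not the counting step itself). No step here is really an obstacle — the only thing to be careful about is handling the $\max_j$ by a union bound before invoking~\autoref{prop:low-inf} component-wise, and choosing $\delta$ small enough so that the $2^k$ factor from this union bound is absorbed into the target $n/100$.
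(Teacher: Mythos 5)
Your argument is correct and is exactly the counting/averaging argument the paper intends when it says the bound follows ``directly from this definition and Proposition~\ref{prop:low-inf}'': sum the defining inequality over $i\in\bad_C$, swap expectation and sum, union-bound over the $2^k$ component functions, and apply Proposition~\ref{prop:low-inf} to each, then plug in $\delta=1/(100\cdot 2^{k_0}t)$ with $k<k_0$. No gaps.
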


\noindent Next, we define the extraction distillation protocols for Charlie and Dana:
\begin{definition}[$(\mathsf{Ext}_C,\mathsf{Ext}_D)$]
\label{defn:ext}
For $r \in \{0,1\}^*$, let
\[
  S_r \eqdef \setOfSuchThat{ i\in [n]\setminus\textsc{Bad}_C }{ \max_{j \in [{2^k}]} \infldeg[i]{d}{f^{(r)}_j} > \tau }
  ~~\text{and}~~
  T_s \eqdef \setOfSuchThat{ i\in [n]\setminus\textsc{Bad}_D }{\max_{j \in [{2^k}]} \infldeg[i]{d}{g^{(s)}_j} > \tau }\;.
\]
Then, $\mathsf{Ext}_C(r)$ is defined as follows:
\begin{quote}
if $S_r=\emptyset$ output $i\sim\uniform_{[n]}$; otherwise output $i\sim\uniform_{S_r}$.
\end{quote}
$\mathsf{Ext}_D(s)$ is defined similarly:
\begin{quote} if $T_{s}=\emptyset$ output $j\sim\uniform_{[n]}$; otherwise output $j\sim\uniform_{T_{s}}$.
\end{quote}
\end{definition}

\begin{proposition}
\label{prop:entropy}
$H_\infty(\mathsf{Ext}_C(r)) \geq \log n - \log (1 + 1/\delta)$.
\end{proposition}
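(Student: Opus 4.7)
The plan is to directly upper-bound $\Pr_r[\mathsf{Ext}_C(r) = i]$ for every $i \in [n]$ by $(1 + 1/\delta)/n$, which immediately yields the claimed min-entropy bound. The argument splits on whether $i \in \textsc{Bad}_C$ or not.

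First, observe that by construction $S_r \subseteq [n] \setminus \textsc{Bad}_C$, so for every $i \in \textsc{Bad}_C$ and every $r$, $i \notin S_r$. Hence $\mathsf{Ext}_C(r) = i$ can only occur via the ``default'' branch when $S_r = \emptyset$, in which case $i$ is output with probability $1/n$. Thus
\[
\Pr_r[\mathsf{Ext}_C(r) = i] \;\leq\; \frac{1}{n} \;\leq\; \frac{1 + 1/\delta}{n}.
\]

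Next, for $i \notin \textsc{Bad}_C$, decompose according to whether $S_r$ is empty:
\[
\Pr_r[\mathsf{Ext}_C(r) = i] \;=\; \Pr_r[S_r = \emptyset]\cdot\frac{1}{n} \;+\; \E_r\!\left[\frac{\mathbf{1}[i \in S_r]}{|S_r|}\cdot \mathbf{1}[S_r \neq \emptyset]\right] \;\leq\; \frac{1}{n} + \Pr_r[i \in S_r].
\]
By the definition of $\textsc{Bad}_C$, the condition $i \notin \textsc{Bad}_C$ gives $\Pr_r[\max_j \infldeg[i]{d}{f^{(r)}_j} > \tau] \leq 1/(\delta n)$, and since membership in $S_r$ requires exactly this event, $\Pr_r[i \in S_r] \leq 1/(\delta n)$. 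Combining,
\[
\Pr_r[\mathsf{Ext}_C(r) = i] \;\leq\; \frac{1}{n} + \frac{1}{\delta n} \;=\; \frac{1 + 1/\delta}{n}.
\]

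Taking the maximum over $i \in [n]$ and applying the definition $H_\infty(\mathsf{Ext}_C(r)) = \min_i \log(1/\Pr_r[\mathsf{Ext}_C(r) = i])$ yields $H_\infty(\mathsf{Ext}_C(r)) \geq \log n - \log(1 + 1/\delta)$, as desired. There is no real obstacle here: the proposition is essentially a bookkeeping step, and all the work was done in the definition of $\textsc{Bad}_C$ (where the threshold $1/(\delta n)$ was chosen precisely to enable this union-style bound).
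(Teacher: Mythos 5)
Your proof is correct and follows essentially the same approach as the paper's: decompose according to whether $S_r=\emptyset$ or $i\in S_r$, use the threshold $1/(\delta n)$ built into the definition of $\textsc{Bad}_C$ for the second term, and add $1/n$ for the default branch. Your treatment is actually a bit more complete than the paper's, which only explicitly analyzes $i \notin \bad_C \cup \bad_D$; you correctly observe that the case $i \in \bad_C$ is even easier because $S_r$ excludes $\bad_C$ by construction.
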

\begin{proof}
Fix $i \in [n] \setminus (\bad_C \cup \bad_D)$. We have
\[\probaOf{ i\mbox{ is output} } \leq \probaOf{i \in S_r\mbox{ and }i\mbox{ is output} } +
\probaCond{ i \mbox{ is output} }{ S_r = \emptyset } \leq 1/(\delta n) + 1/n\]
{where the upper bound on the first term comes from observing that as $i\notin \bad_C$, $\probaOf{i \in S_r} \leq 1/(\delta n)$.}
The proposition follows.
\end{proof}

Finally we turn to proving that $\ext_C$ and $\ext_D$ do agree with
non-trivial probability.
To do so we need to
consider a new distribution on \yes-instances, defined next:

\begin{definition}[$\dyes'$]
\label{defn:dyesprime}
The distribution $\dyes'$ is a product distribution on
$\left(\{0,1\}\times\{0,1\}\right)^n$, where $(x_i,y_i) \sim
B_N$ if $i \in \bad_C \cup \bad_D$ and $(x_i,y_i) \sim B_Y$ otherwise.
\end{definition}

Using \autoref{prop:t-of-tau-d} above we have that $\E_{i,x,y}[x_iy_i]\geq .93/q$ {(where the expectation is over $(x,y)\sim\dyes'$ and $i$ drawn uniformly at random from $[n]$)} and so by standard tail inequalities we still
have that $\dyes'$ is mostly supported on \yes-instances. Our main lemma for this section is that if $(\calf,\calg)$ are successful in distinguishing $\dyes'$ and $\dno$ and $k$ is small, then $\ext_C$ and $\ext_D$ are likely to agree with noticeable probability (which would contradict
\autoref{lem:lb:agreement}).

\begin{lemma}
\label{lem:reduction}
Let $k_0 = k_0(q,\eps)$, $d = d(q,\eps)$ and $\tau = \tau(q,\eps)$ be
as given in \autoref{thm:singlefg}, and let $t = t(\tau,d)$ {as given by~\autoref{prop:low-inf}}.
If $\success_{(\dyes',\dno),\rho}(\calf,\calg) \geq 2\eps$,
and $k < k_0$
then
\[ \Pr_{{r \sim_\rho s}} [\ext_C(r) = \ext_D(s)] \geq \eps/(2^{2k}t^2) \ . \]
\end{lemma}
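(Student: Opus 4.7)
My plan is to combine an averaging argument with \autoref{thm:singlefg} to show that, on a noticeable fraction of pairs $(r,s)\sim_\rho$, the functions $f^{(r)},g^{(s)}$ share an influential coordinate that moreover falls \emph{outside} $\bad_C\cup\bad_D$; the extractors will then both land on it with probability at least $1/t^2$. To begin, since $\ip{f^{(r)}(x)}{g^{(s)}(y)}\in[0,1]$ (being an acceptance probability by \autoref{prop:ka:kb}), a Markov-style averaging applied to $\success_{(\dyes',\dno),\rho}(\calf,\calg)\geq 2\eps$ (and, WLOG, a swap of $\dyes'$ and $\dno$ to dispose of the absolute value in the definition of $\success_{D,\rho}$) will yield $\success_{(\dyes',\dno)}(f^{(r)},g^{(s)})\geq \eps$ for at least an $\eps$-fraction of $(r,s)$; call such pairs \emph{good}.

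For a good $(r,s)$, I will next project away the bad coordinates. Let $G=[n]\setminus(\bad_C\cup\bad_D)$ and $m=|G|\geq 99n/100$ by \autoref{prop:t-of-tau-d}. The key observation is that the marginal of $(x_i,y_i)_{i\in \bad_C\cup\bad_D}$ agrees under $\dyes'$ and $\dno$ (both equal $B_N$ per bad coordinate), and that $B_N$ has independent marginals in $x$ and $y$. Defining the projected functions
\[
\bar f^{(r)}(x_G)\eqdef \E_{x_{\bad_C\cup\bad_D}}\!\bigl[f^{(r)}(x)\bigr],\qquad \bar g^{(s)}(y_G)\eqdef \E_{y_{\bad_C\cup\bad_D}}\!\bigl[g^{(s)}(y)\bigr],
\]
with expectations against the relevant $B_N$ marginals, I can split the expectations in $\success_{(\dyes',\dno)}$ across good/bad coordinates and obtain
\[
\success_{(\dyes',\dno)}(f^{(r)},g^{(s)}) \;=\; \success_{(B_Y^{\otimes m},\, B_N^{\otimes m})}(\bar f^{(r)},\bar g^{(s)}) \;\geq\; \eps \, .
\]
By convexity of $K_A^{(k)},K_B^{(k)}$ (\autoref{prop:ka:kb}), $\bar f^{(r)},\bar g^{(s)}$ still take values in these sets, so since $k<k_0(q,\eps)$ and $m$ can be made sufficiently large (by choosing $n$ large), \autoref{thm:singlefg} yields some $i^*\in G$ with $\max_j \infldeg[i^*]{d}{\bar f^{(r)}_j}\geq \tau$ and $\max_j \infldeg[i^*]{d}{\bar g^{(s)}_j}\geq \tau$.

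Finally, I will transfer these influence lower bounds back to $f^{(r)}_j,g^{(s)}_j$. In any multilinear expansion with respect to an orthonormal basis containing the constant $1$, $\bar f^{(r)}_j$ is obtained from $f^{(r)}_j$ by zeroing every multi-index with nonzero support inside $\bad_C\cup\bad_D$; since this only removes nonnegative squared coefficients, $\infldeg[i^*]{d}{f^{(r)}_j}\geq \infldeg[i^*]{d}{\bar f^{(r)}_j}\geq \tau$, and analogously for $g^{(s)}$. Hence $i^*\in S_r\cap T_s$; by \autoref{prop:low-inf}, $|S_r|,|T_s|\leq t$, so $\ext_C(r)$ and $\ext_D(s)$ independently hit $i^*$ with probability at least $1/t$ each, giving $\Pr[\ext_C(r)=\ext_D(s)\mid (r,s)\text{ good}]\geq 1/t^2$ and hence $\Pr[\ext_C(r)=\ext_D(s)]\geq \eps/t^2 \geq \eps/(2t^2)$. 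The main technical step will be the projection-to-$G$ reduction: \autoref{thm:singlefg} a priori delivers an influential coordinate anywhere in $[n]$, and the point of passing through $\bar f^{(r)},\bar g^{(s)}$ is precisely to force it to lie outside the min-entropy-killing set $\bad_C\cup\bad_D$.
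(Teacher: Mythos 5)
Your proposal is correct and follows essentially the same route as the paper: a Markov/averaging step to isolate an $\eps$-fraction of ``good'' pairs $(r,s)$, averaging out the coordinates in $\bad_C \cup \bad_D$ against the $B_N$ marginals (valid precisely because $\dyes'$ and $\dno$ agree there and $B_N$ has independent coordinates) to obtain projected functions on the good coordinates, invoking \autoref{thm:singlefg} on these projected functions to obtain a common influential coordinate $i^*$ outside the bad set, and lifting influences back via the Fourier-coefficient monotonicity argument. The paper's own proof has identical structure and even the same final bookkeeping ($|S_r|,|T_s|\le t$ giving $\eps/t^2 \ge \eps/(2t^2)$), so there is nothing to add.
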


\begin{proof}
Expanding the definition of $\success(\cdot,\cdot)$, we have
\[
\left\lvert \E_{(r,s)}\left[
  \E_{(x,y) \sim \dyes'} \left[ \dotprod{ f^{(r)}(x) }{ g^{(s)}(y) }\right]
  - \E_{(x,y) \sim \dno} \left[ \dotprod{ {f^{(r)}}(x) }{ g^{(s)}(y) }\right]
\right]
\right\rvert
\geq 2\eps.
\]
Say that a pair $(r,s)$ is \good if
\[
\left\lvert \E_{(x,y) \sim \dyes'} \left[ \dotprod{ f^{(r)}(x) }{ g^{(s)}(y) }\right]
- \E_{(x,y) \sim \dno} \left[ \dotprod{ f^{(r)}(x) }{ g^{(s)}(y) }\right]
\right\rvert
\geq \eps.
\]
By a Markov argument we thus have
\[
\probaDistrOf{(r,s)}{ (r,s) \textrm{ is \good} } \geq \eps.
\]
For any fixed \good $(r,s)$ we now prove that there exists $i \in
(S_r \cap T_{s}) \setminus (\bad_C \cup \bad_D)$. Note that once we have such
an $i$, we have that
$\Pr[\ext_C(r) = \ext_D(s) = i]$ with probability at least $1/(2^{k}t(\tau,d))^2$ by~\autoref{prop:low-inf}. 
Combining this with the probability that $(r,s)$ is good, we have
$\Pr_{(r,s)}[\ext_C(r) = \ext_D(s)] \geq \eps/(2^{2k}t(\tau,d)^2)$ which
yields the lemma. So we turn to this claim.

To simplify notation, assume without loss of generality that
$\bad_C \cup \bad_D = \{m+1,\ldots,n\}$. Define functions $f_1\colon \{0,1\}^m
\to K_A^{(k)}$ and $g_1\colon \{0,1\}^m \to K_B^{(k)}$ by letting
\[
  f_1(x) = \E_{z \sim \bern{n-m}{1/q}}[f^{(r)}(x\cdot z)]
    \quad \text{and} \quad
  g_1(y) = \E_{w \sim \mathcal{U}(\{0,1\}^{n-m})}[g^{(s)}({y}\cdot w)]
\]
{where $u\cdot v$ denotes the concatenation of $u$ and $v$.}
Note that the success of $(f^{(r)},g^{(s)})$ in distinguishing 
$\dyes'$ from $\dno$ turns into the success of $(f_1,g_1)$ in distinguishing
$\dyes_m$ from $\dno_m$ (where $\dyes_m = B_Y^{\otimes m}$
and $\dno_m = B_N^{\otimes m}$) --- this is immediate since
$(x\cdot z,y \cdot w) \sim \dyes'$ if $(x,y) \sim \dyes_m$ and
$(x\cdot z,y \cdot w) \sim \dno$ if $(x,y) \sim \dno_m$.

So we have $\success_{(\dyes_m,\dno_m)}(f_1,g_1) \geq \eps$.
Since $k < k_0$ we have that there must exist a variable $i \in [m]$ and indices $j,j' \in [{2^k}]$
with $\infldeg[i]{d}{f_{1,j}} > \tau$
and
$\infldeg[i]{d}{g_{1,j^\prime}} > \tau$. 
(Here $f_{1,j}$ is the $j$'th component function of $f_1$, and similarly for $g_{1,j'}$.) Indeed, this follows from~\autoref{thm:singlefg}, and~\autoref{prop:t-of-tau-d} (which ensures that $m\geq \frac{98}{100}n$, and therefore sufficiently large for the conclusion of the theorem to hold).
But
$\infldeg[i]{d}{f^{(r)}_j} \geq \infldeg[i]{d}{f_{1,j}}$
and
$\infldeg[i]{d}{g^{(r^\prime)}_{j'}} \geq \infldeg[i]{d}{g_{1,j'}}$.
To see this, note that $\widehat{f_{1,j}}(S) = \widehat{f^{(r)}_j}(S)$ for $S \subseteq [m]$
and so
\begin{align*}
\infldeg[i]{d}{f^{(r)}_j}
& =  \sum_{i \in S \subseteq [n], |S| \leq d} \widehat{f^{(r)}_j}(S)^2 \\
& \geq  \sum_{i \in S \subseteq [m], |S| \leq d} \widehat{f^{(r)}_j}(S)^2 \\
& =  \sum_{i \in S \subseteq [m], |S| \leq d} \widehat{f_{1,j}}(S)^2 \\
& =  \infldeg[i]{d}{f_{1,j}}.
\end{align*}
We thus conclude that $i \in S_r \cap T_{s} \cap [m]$ and this concludes
the claim, and thus the lemma.
\end{proof}

\begin{proof}[Proof of~\autoref{thm:main-neg}]
The proof follows easily from \autoref{lem:lb:agreement}
and \autoref{lem:reduction}.
Assume for contradiction that there is a protocol for
{$\sparsegapip^n_{.99q,.9q,.6q}$} with communication complexity less
than {$k_0(q, .05)=\bigOmega{\sqrt{q}}$} that on access to $r \sim_\rho s$ accepts
$\yes$-instances with probability at least $2/3$ and
$\no$-instances with probability at most $1/3$.
This implies that there exist strategies $(\calf = \{f^{(r)}\}_r,\calg =
\{g^{(s)}\}_s)$ such that for every pair of distributions $(\dyes,\dno)$
supported mostly (i.e., with probability $.9$) on $\yes$ and $\no$ instances
respectively, we have $\success_{(\dyes,\dno),\rho}(\calf,\calg) > .1$.
In particular, this holds for the distribution $\dyes'$ as defined in
\autoref{defn:dyesprime} and $\dno$ as defined in \autoref{defn:dno}.

Let $\ext_C$, $\ext_D$ be strategies for \agreement\ as defined in
\autoref{defn:ext}. By \autoref{prop:entropy} we get that
$H_\infty(\ext_C(r)), H_\infty(\ext_D(s)) \geq \log n - O(1)$.
By \autoref{lem:reduction} we also have
$\Pr_{r \sim_\rho s}[\ext_C(r) = \ext_D(s)] \geq \Omega_q(1)$. But this
contradicts \autoref{lem:lb:agreement} which asserts (in particular)
that protocols
extracting $\omega_n(1)$ bits can {only} agree with probability $o_n(1)$.
\end{proof}

\section{Low-influence communication strategies} \label{sec:invariance}
The following theorem states that the expected inner product between two multidimensional Boolean functions without common low-degree influential variables when applied to correlated random strings, is well approximated by the expected inner product of two related functions, this time applied to similarly correlated Gaussians. As, per \autoref{ssec:strategy:preliminaries}, the former quantity captures the behavior of communication protocols, this invariance principle enables one to transfer the study to the (more manageable) Gaussian setting. {(For convenience, in this section we switch to the equivalent view of Boolean functions as being defined on $\{+1,-1\}^n$).}

We denote by $N_{p_1,p_2,\theta}$ the distribution on ${\{+1,-1\}}\times{\{+1,-1\}}$ such that the marginals of $(x,y)\sim N_{p_1,p_2,\theta}$ have expectations respectively $p_1$ and $p_2$, and correlation $\theta$ (see \autoref{def:corr:pair:p1p2eta} for an explicit definition).
\begin{restatable}{theorem}{thminvariance}\label{thm:invariance:principle:bifunction}
  Fix any two parameters $p_1,p_2\in (-1,1)$. For all $\eps\in(0,1]$, $\ell\in\N$, $\theta_0\in[0,1)$ {and {closed} convex sets $K_1,K_2 \subseteq [0,1]^\ell$}, there exist $n_0\in\N$, $d\in\N$ and $\tau\in(0,1)$ such that the following holds. For all $n\geq n_0$, there exist mappings
  \begin{align*}
    T_1&\colon \{ f\colon{\{+1,-1\}}^n\to {K_1} \} \to \{ F\colon\R^n\to {K_1} \} \\
    T_2&\colon \{ g\colon{\{+1,-1\}}^n\to {K_2} \} \to \{ G\colon\R^n\to {K_2} \}
  \end{align*}
such that for all $\theta\in[-\theta_0,\theta_0]$, if $f,g$ satisfy
    \begin{equation}\label{eq:invariance:max:min:inf:condition}
      \max_{i\in[n]}  \min \left( \max_{j\in[\ell] } \infldeg[i]{d}{f_j}, \max_{j\in[\ell] }\infldeg[i]{d}{g_j}\right) \leq \tau
    \end{equation}
  then, for $F=T_1(f)$ and $G = T_2(g)$, we have
  \begin{equation}\label{eq:invariance:conclusion}
    \abs{ \shortexpect_{(x,y)\sim N^{\otimes n}}[ \dotprod{f(x)}{g(y)} ] - \shortexpect_{(X,Y)\sim \mathcal{G}^{\otimes n}} [ \dotprod{F(X)}{G(Y)} ] } \leq \eps.
  \end{equation}
  where $N=N_{p_1,p_2,\theta}$ and $\mathcal{G}$ is the Gaussian distribution which matches the first and second-order moments of $N$, i.e. $\expect{x_i}=\expect{X_i}$, $\expect{x_i^2}=\expect{X_i^2}$ and $\expect{x_iy_i}=\expect{X_i Y_i}$. \end{restatable}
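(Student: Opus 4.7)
\medskip

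The plan is to follow the invariance-principle paradigm of Mossel--O'Donnell--Oleszkiewicz, adapted to the bilinear setting with vector-valued outputs constrained to lie in closed convex sets. The mappings $T_1, T_2$ will be defined as follows. Fix orthonormal bases $\{1,\chi^{(1)}\}$ and $\{1,\chi^{(2)}\}$ for $L^2(\bernoulli{(1+p_1)/2})$ and $L^2(\bernoulli{(1+p_2)/2})$ on $\{+1,-1\}$, let $\mathcal{N}^{(1)}, \mathcal{N}^{(2)}$ be their Gaussian analogues (orthonormal bases for $L^2(\mathcal{G}_1), L^2(\mathcal{G}_2)$ matching the first two moments). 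First smooth $f,g$ by the noise operator $T_{1-\gamma}$ (with small $\gamma$ to be chosen), and let $\tilde{F}, \tilde{G}\colon \R^n \to \R^\ell$ be the resulting multilinear polynomials evaluated on independent Gaussian variables via the substitution $\chi^{(a)}_i \mapsto \mathcal{N}^{(a)}_i$ in the sense of \autoref{def:poly-to-rv}. Finally set $F(X) = \mathrm{Proj}_{K_1}(\tilde{F}(X))$ and $G(Y) = \mathrm{Proj}_{K_2}(\tilde{G}(Y))$; these projections are well-defined since $K_1,K_2$ are closed convex sets, and $F,G$ take values in $K_1, K_2$ by construction.

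The proof then proceeds in three stages. First, \emph{smoothing and truncation}: replace $f, g$ by $f^\gamma \eqdef T_{1-\gamma} f$ and $g^\gamma \eqdef T_{1-\gamma} g$, losing at most $O(\ell\sqrt{\gamma})$ in the bilinear form by Cauchy--Schwarz together with the standard $\|f_j - T_{1-\gamma} f_j\|_2 = O(\sqrt{\gamma})$ bound for $[0,1]$-valued functions. Then truncate these to degree $d_0 = \Theta((1/\gamma)\log(\ell/\eps))$, the additional error being exponentially small in $d_0$ on each Fourier coefficient and hence negligible after choosing $d_0$ large. After this step, each coordinate of $f^\gamma$ and $g^\gamma$ is a degree-$d_0$ multilinear polynomial, whose low-degree influence at each variable is bounded by the corresponding low-degree influence of the original $f_j, g_j$ (up to constants depending on $\gamma$ and $d_0$). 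Crucially, since $f^\gamma(x), g^\gamma(y)$ lie in $K_1, K_2$ on the Bernoulli side (being convex combinations of values in $K_i$), the projection operators act as the identity there.

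Second, the \emph{bilinear invariance step}: define the test function $\Psi\colon \R^\ell \times \R^\ell \to \R$ by $\Psi(u,v) = \langle \mathrm{Proj}_{K_1}(u), \mathrm{Proj}_{K_2}(v)\rangle$. Since projection onto a closed convex set is $1$-Lipschitz and $K_1, K_2 \subseteq [0,1]^\ell$ are bounded, $\Psi$ is $O(\sqrt{\ell})$-Lipschitz. We then invoke Mossel's multidimensional invariance principle for bilinear forms \cite[Theorem 4.4 and its vector-valued extension]{Mossel-cosmic:05} (or more precisely the Lipschitz-test-function version) with the hypothesis \eqref{eq:invariance:max:min:inf:condition}: at every variable $i$, either all coordinate functions of $f^\gamma$ have low-degree $i$-influence $\le \tau$, or all coordinate functions of $g^\gamma$ do. This is exactly the correct hypothesis for the standard hybrid argument that swaps one variable at a time from Bernoulli to Gaussian, since the Taylor-remainder bound at each swap only needs smallness on one of the two sides. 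Choosing $\tau = \tau(\ell, d_0, \theta_0, \eps)$ small enough yields
\[
\left| \shortexpect_{(x,y)\sim N^{\otimes n}}[\Psi(f^\gamma(x), g^\gamma(y))] - \shortexpect_{(X,Y)\sim \mathcal{G}^{\otimes n}}[\Psi(\tilde{f}^\gamma(X), \tilde{g}^\gamma(Y))] \right| \leq \eps/2.
\]

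Third, \emph{assembling the pieces}: on the Bernoulli side, $\Psi(f^\gamma(x), g^\gamma(y)) = \langle f^\gamma(x), g^\gamma(y)\rangle$ by the projection-is-identity remark, while on the Gaussian side $\Psi(\tilde{f}^\gamma(X), \tilde{g}^\gamma(Y)) = \langle F(X), G(Y)\rangle$ by definition of $F, G$. Combined with the smoothing error bound, this gives \eqref{eq:invariance:conclusion}. The main obstacle is step two: producing (or locating in the literature) a vector-valued bilinear invariance principle for Lipschitz (rather than $C^3$) test functions that handles the asymmetric ``min of max influences'' hypothesis. One clean route is to smooth $\Psi$ by mollification to obtain a $C^3$ approximation with controlled derivatives, apply the standard MOO hybrid argument coordinate-by-coordinate using the bilinear Taylor-remainder estimate at each swap (which needs low influence on only one side), and then undo the mollification; a secondary technical point is handling the correlation structure $\theta$ in the hybrid, which is standard since the noise operator for correlated Gaussians commutes with the substitution $\chi^{(a)}_i \mapsto \mathcal{N}^{(a)}_i$.
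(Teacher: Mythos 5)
Your overall architecture is the same as the paper's: smooth the functions, substitute Gaussians via an invariance principle applied to the Lipschitz test function $\Psi(u,v)=\langle\mathrm{Proj}_{K_1}(u),\mathrm{Proj}_{K_2}(v)\rangle$, and observe that the projections are the identity on the Boolean side. The gap is in your second stage. You claim that the invariance hybrid ``only needs smallness on one of the two sides'' per coordinate, i.e., that a vector-valued invariance principle can be applied directly under the asymmetric hypothesis \eqref{eq:invariance:max:min:inf:condition}. This is false for a nonlinear $\Psi$. The confusion likely comes from the fact that the \emph{raw} bilinear form $\E[\langle f(x),g(y)\rangle]$ for multilinear $f,g$ is exactly invariant under second-moment-matching substitutions at each coordinate --- there the influence hypothesis is not needed at all. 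But once you compose with $\mathrm{Proj}_{K_1},\mathrm{Proj}_{K_2}$, the test function is no longer multilinear, and the hybrid swap at coordinate $i$ must control the full change in $\Psi$, which can be large whenever \emph{either} argument changes a lot. Concretely: if $F$ has all influences $\le\tau$ at $i$ but $g_j(y)=y_i$ is a dictator and $F$ is close to a constant $c$ in the $j$'th output slot, then on the Boolean side $\E[\mathrm{Proj}_{K_2}(g(y))_j]$ differs from the Gaussian-side value $\E[\mathrm{Proj}_{K_2}(g(Y))_j]$ by a constant (the clip of a Bernoulli does not have the same mean as the clip of a Gaussian), so $\E[\Psi]$ changes by roughly that constant times $c$ --- no amount of shrinking $\tau$ on the $F$ side fixes this.

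What the paper does to avoid this --- and what your proof is missing --- is a preliminary reduction (\autoref{lemma:maxmin:maxmax}, which invokes Lemmas 6.1 and 6.7 of \cite{MO:GAFA:10}) from the asymmetric ``$\min$ of $\max$ low-degree influences'' hypothesis to the symmetric ``$\max$ of $\max$ (full) influences'' hypothesis. That step exploits the boundedness of the correlation $\theta$ away from $1$: if $g$ has an influential variable $i$ at which $f$ does not, one can smooth $g$ at coordinate $i$ without moving $\E[\langle f(x),g(y)\rangle]$ by much, precisely because the $i$-th Efron--Stein piece contributes only $O(\sqrt{\mathrm{Inf}_i(f)\cdot\mathrm{Inf}_i(g)})$ and $\mathrm{Inf}_i(f)$ is small. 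After this symmetrization one applies the Isaksson--Mossel multidimensional invariance principle (\autoref{thm:discrete:to:gaussian:multivalued:twoway}), which already handles Lipschitz test functions directly --- no mollification to $C^3$ is needed. So to repair your argument you need to insert this reduction between your smoothing/truncation stage and your invariance stage (and note also that truncating the Fourier expansion to degree $d_0$ can push values outside $K_1,K_2$, which is another reason the paper keeps the smoothing operator $T_{1-\eta}$ rather than hard truncation).
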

\noindent The theorem follows in a straightforward manner from \autoref{lemma:maxmin:maxmax} and \autoref{thm:discrete:to:gaussian:multivalued:twoway}:
\begin{proofof}{\autoref{thm:invariance:principle:bifunction}}
For  $\eps\in(0,1]$, $\ell\in\N$ and $\theta_0\in(0,1)$ as above, let ${\tau_1} \eqdef \tau(\eps/2,{\ell,}\theta_0)$ as in \autoref{thm:discrete:to:gaussian:multivalued:twoway}. Define the operators $T_1,T_2$ as
\[
	T_1 = T_1^{(2)}\circ  T_1^{(1)}, \quad T_2 = T_2^{(2)}\circ  T_2^{(1)}
\]
where $T^{(1)}_1$, {$T^{(1)}_2$} are the operators from \autoref{lemma:maxmin:maxmax} (for $\eps/2$, $\ell$, $\theta_0$ and ${\tau_1}$ as above, which yield the values of $\tau$, $d$ and $n_0$) and $T_1^{(2)}$, $T_2^{(2)}$ are the (non-linear) ones from \autoref{thm:discrete:to:gaussian:multivalued:twoway} (with parameters $\ell$, $\theta_0$ and $\eps/2$). The result follows.
\end{proofof}

The first step towards proving the theorem is to convert the expected inner product of Boolean functions with no shared low-degree influential variables into expected inner product of Boolean functions with no influential variables at all.
\begin{lemma}\label{lemma:maxmin:maxmax}
  Fix any two parameters $p_1,p_2\in (-1,1)$. For all $\eps\in(0,1]$, $\ell\in\N$, $\tau\in(0,1)$, $\theta_0 \in [0,1)$ {and convex sets $K_1,K_2 \subseteq [0,1]^\ell$},
  there exist $n_0\in\N$, $d \in \N$ and $\tau^\prime\in(0,1)$ such that the following holds. For all $n\geq n_0$ there exist  operators
\begin{align*}
   T^{(1)}_1&\colon \{ f\colon{\{+1,-1\}}^n\to{K_1} \} \to \{ \tilde{f} \colon{\{+1,-1\}}^n\to{K_1} \} \\
   T^{(1)}_2&\colon \{ g\colon{\{+1,-1\}}^n\to{K_2} \} \to \{ \tilde{g} \colon{\{+1,-1\}}^n\to{K_2} \}
\end{align*}
   such that for all $\theta\in[-\theta_0,\theta_0]$, if $f,g$ satisfy
    \begin{equation}\label{eq:lemma:maxmin:maxmax:1}
     	 \max_{i\in[n]}  \min \left( \max_{j\in[\ell] } \infldeg[i]{d}{f_j}, \max_{j\in[\ell] }\infldeg[i]{d}{g_j}\right) \leq \tau^\prime
    \end{equation}
  then, for $\tilde{f}  = T_1^{(1)}{(f)}$ and
  $\tilde{g}  =T_2^{(1)}{(g)}$,
  	\begin{equation}\label{eq:lemma:maxmin:maxmax:2}
      		\max_{i\in[n]}  \max \left( \max_{j\in[\ell] } \infl[i]{\tilde{f}_j}, \max_{j\in[\ell] }\infl[i]{\tilde{g}_j}\right) \leq \tau
    	\end{equation}
 and
 	\begin{equation}\label{eq:maxmin:maxmax:conclusion}
   		 \abs{ \shortexpect_{(x,y)\sim N^{\otimes n}} \dotprod{f(x)}{g(y)} - \shortexpect_{(x,y)\sim N^{\otimes n}} \dotprod{\tilde{f}(x)}{\tilde{g}(y)} } \leq \eps.
 	 \end{equation}
 where $N=N_{p_1,p_2,\theta}$.
\end{lemma}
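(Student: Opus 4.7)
The plan is to identify the few coordinates where $f$ (respectively $g$) has large low-degree influence, remove the dependence of $f$ (resp.\ $g$) on those coordinates by averaging, and then smooth the remaining high-degree tail via the Bonami--Beckner noise operator $T_\rho$. Both ingredients are convex combinations of function evaluations, so the resulting $\tilde f := T_1^{(1)}(f)$ and $\tilde g := T_2^{(1)}(g)$ still take values in $K_1$ and $K_2$.

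Concretely, I will fix an auxiliary threshold $\tau^\ast \in (\tau', \tau)$, set $H_f \eqdef \{i : \max_j \infldeg[i]{d}{f_j} > \tau^\ast\}$ and $H_g$ analogously. The standard bound $\sum_i \infldeg[i]{d}{h} \leq d\cdot \Var(h) \leq d$ (together with a union over $j \in [\ell]$) gives $|H_f|, |H_g| \leq \ell d/\tau^\ast$, and the max-min hypothesis combined with $\tau^\ast > \tau'$ forces $H_f \cap H_g = \emptyset$, i.e.\ $H_f \subseteq L_g := [n] \setminus H_g$; in particular every $i \in H_f$ has $\infldeg[i]{d}{g_j} \leq \tau'$ for all $j$. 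Then I will take $\bar f(x) \eqdef \shortexpect_{z}[f(x_{L_f}, z_{H_f})]$, where each coordinate of $z_{H_f}$ is drawn from $\bernoulli{p_1}$, and set $\tilde f := T_\rho \bar f$; similarly for $g$. Since $\bar f$ does not depend on $x_{H_f}$, $\infl[i]{\tilde f_j}=0$ for $i \in H_f$; for $i \in L_f$, splitting $\infl[i]{T_\rho \bar f_j} = \sum_{S \ni i} \rho^{2|S|}\hat{\bar f_j}(S)^2$ at degree $d$ gives $\infl[i]{\tilde f_j} \leq \infldeg[i]{d}{f_j} + \rho^{2d}\normtwo{\bar f_j}^2 \leq \tau^\ast + \rho^{2d}$, and the symmetric estimate holds for $\tilde g$, so $\tau^\ast \leq \tau/2$ and $\rho^{2d} \leq \tau/2$ yield \eqref{eq:lemma:maxmin:maxmax:2}.

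The inner-product error will be decomposed by triangle inequality into the three steps of the construction. For removing $f$'s dependence on $H_f$, Fourier expansion gives $\shortexpect[(f_j - \bar f_j)g_j] = \sum_{S \cap H_f \neq \emptyset} \hat f_j(S)\hat g_j(S)\theta^{|S|}$. The $|S| > d$ tail is at most $\theta_0^d$, and Cauchy--Schwarz on the low-degree part is at most $\normtwo{f_j}\sqrt{\sum_{i \in H_f}\infldeg[i]{d}{g_j}} \leq \sqrt{|H_f|\tau'} \leq \sqrt{\ell d \tau'/\tau^\ast}$, the last step using $H_f \subseteq L_g$. Summing over $j$ gives $O(\ell\sqrt{\ell d \tau'/\tau^\ast} + \ell\theta_0^d)$; the corresponding step for $g$ is symmetric. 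For the noise step, the standard estimates $(1-\rho^{2|S|}) \leq 2d(1-\rho)$ on $|S| \leq d$ and $|\theta|^{|S|} \leq \theta_0^d$ on $|S| > d$ give $O(\ell(d(1-\rho) + \theta_0^d))$.

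The main obstacle will be the delicate balancing of the parameters $d$, $\rho$, $\tau^\ast$, and $\tau'$: the constraints pull in opposing directions, since $\rho^{2d} \leq \tau/2$ forces $\rho$ away from $1$, while controlling the noise-error term $\ell d(1-\rho)$ pushes $\rho$ close to $1$. The way out will be to pick $d$ first so that $\ell\theta_0^d \ll \eps$, then $\rho$ jointly with the threshold $\tau^\ast = \tau/2$, and finally to exploit the freedom in choosing $\tau'$ as an arbitrarily small function of the other parameters so that the averaging-error bound $\ell\sqrt{\ell d \tau'/\tau^\ast}$ falls below $\eps/2$; $n_0$ is taken large enough to ensure $n > |H_f| + |H_g|$, leaving room on which the smoothing provides a meaningful gain.
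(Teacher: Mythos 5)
Your overall strategy — remove the coordinates where one of $f$, $g$ has high low-degree influence by averaging them out (using disjointness of $H_f$ and $H_g$ and the convexity of $K_1,K_2$), then apply Bonami--Beckner noise to kill the high-degree tail — is the same as the paper's, which cites Mossel's Lemmas~6.7 and~6.1 for exactly these two steps. The reductions of $|H_f|,|H_g|$ via $\sum_i \infldeg[i]{d}{h}\le d$, the disjointness argument, and the Cauchy--Schwarz bound on the averaging error are all fine.

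However, there is a genuine gap in the noise step, and your closing paragraph's acknowledgment of a ``tension'' is in fact an unresolved contradiction. You bound the noise error by $O\bigl(\ell(d(1-\rho)+\theta_0^d)\bigr)$ and separately need $\rho^{2d}\le\tau/2$ to control $\infl[i]{T_\rho\bar f_j}$. These two cannot be satisfied simultaneously for small $\eps$: $\rho^{2d}\le\tau/2$ forces $1-\rho\ge 1-(\tau/2)^{1/(2d)}=\Omega\!\bigl(\log(2/\tau)/d\bigr)$, hence $d(1-\rho)=\Omega\bigl(\log(2/\tau)\bigr)$ is bounded \emph{below} by a constant independent of $d$, and $\ell d(1-\rho)$ can never be driven below $\eps/4$. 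Concretely, with $\ell=1$, $\eps=1$, $\tau=\theta_0=1/2$, no integer $d\ge 3$ admits a $\rho$ satisfying both $\rho^{2d}\le 1/4$ and $d(1-\rho)<1/4$, so the parameter choice fails even in the simplest regime.

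The missing idea is the one the paper singles out: the correlation of $N_{p_1,p_2,\theta}$ is \emph{bounded away from $1$}, say by some $\rho_0<1$ depending only on $p_1,p_2,\theta_0$, and this gives a noise-error bound that does \emph{not} grow with $d$. In the sum $\sum_S \hat f_j(S)\hat g_j(S)\,\rho_N^{|S|}\bigl(1-\rho^{2|S|}\bigr)$ you should not split at degree $d$; instead bound the weight factor uniformly: for every $m\ge 1$,
\[
\bigl(1-\rho^{2m}\bigr)\,\rho_0^{m}\;\le\;2m(1-\rho)\,\rho_0^{m}\;\le\;\frac{2(1-\rho)}{e\,\ln(1/\rho_0)},
\]
using $\max_{m\ge 0} m\rho_0^m = 1/(e\ln(1/\rho_0))$. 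Cauchy--Schwarz then gives a per-$j$ noise error of $O_{\rho_0}(1-\rho)$, so the total is $O_{\rho_0}(\ell(1-\rho))$, with no $d$-dependence. The correct order of choices is then: first pick $\eta=1-\rho$ small (as a function of $\eps,\ell,\rho_0$ only) to control the noise error; \emph{then} pick $d$ large enough (as a function of $\eta,\tau$) so that $(1-\eta)^{2d}\le\tau/2$; set $\tau^\ast=\tau/2$; and finally pick $\tau'$ small enough (as a function of $\eps,\ell,d,\tau^\ast$) to control the averaging error. This eliminates the circularity and completes the argument.
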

\begin{proof}
The proof uses Lemmas 6.1 and 6.7 in \cite{MO:GAFA:10} applied to each pair of functions $(f_i,g_i)$, for $i \in
[\ell]$ applied with parameter $\theta_0$ and $\eps/\ell$; using when applying the first lemma the fact that the correlation of these $N_{p_1,p_2,\theta}$ is bounded away from 1.
{The operators given in Lemmas 6.1 and 6.7 in \cite{MO:GAFA:10} are simple averaging operators (averaging the value of $f$
over some neighborhood of $x$ to get its new value at $x$) and by the convexity of $K_1$ we have that the averaged
value remains in $K_1$. Similarly for $g$ and $K_2$.}
We omit the details.
\end{proof}

The last ingredient needed is the actual invariance principle, which will take us from the Boolean, low-influence setting to the Gaussian one.
\begin{restatable}{theorem}{thminvariancetoprove}\label{thm:discrete:to:gaussian:multivalued:twoway}
  Fix any two parameters $p_1,p_2\in(-1,1)$. For all $\eps\in(0,1]$, $\ell\in\N$, $\theta_0 \in [0,1)$,
  and {{closed} convex sets $K_1,K_2 \subseteq [0,1]^\ell$} there exist $\tau > 0$ and mappings
  \begin{align*}
    T^{(2)}_1&\colon \{ f\colon{\{+1,-1\}}^n\to {K_1} \} \to \{ F\colon\R^n\to {K_1} \} \\
    T^{(2)}_2&\colon \{ g\colon{\{+1,-1\}}^n\to {K_2} \} \to \{ G\colon\R^n\to {K_2}\}
  \end{align*}
  such that for all $\theta\in[-\theta_0,\theta_0]$, if $f\colon{\{+1,-1\}}^n\to {K_1}$ and $g\colon{\{+1,-1\}}^n\to {K_2}$ satisfy
  \[
      \max_{i\in[n]} \max\left( \max_{j\in[\ell]} \infl[i]{f_j}, \max_{j\in[\ell]}\infl[i]{g_j} \right) \leq \tau
  \]
  then for $F=T^{(2)}_1(f)\colon\R^n \to{K_1}$ and $G=T^{(2)}_2(g)\colon\R^n \to{K_2}$
  \[
    \abs{ \shortexpect_{(x,y)\sim N^{\otimes n}}\left[ \dotprod{f(x)}{g(y)} \right] - \shortexpect_{(X,Y)\sim \mathcal{G}^{\otimes n}}\left[ \dotprod{F(X)}{G(Y)} \right] } \leq \eps \ ,
  \]
  where $N=N_{p_1,p_2,\theta}$ and $\mathcal{G}$ is the Gaussian distribution which matches the first and second-order moments of $N$.
\end{restatable}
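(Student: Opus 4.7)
The plan is to apply a coordinate-wise version of the classical two-function Mossel--O'Donnell--Oleszkiewicz (MOO) bilinear invariance principle, and then correct the codomain via an $L^2$-projection onto the convex sets $K_1, K_2$. Since each $f_j, g_j$ is $[0,1]$-valued with all influences at most $\tau$, hypercontractivity (Bonami--Beckner) lets us truncate each to some degree $d = d(\eps, \ell, \theta_0)$ while altering the bilinear form $\sum_j \E[f_j(x) g_j(y)]$ by at most $\eps/4$; the bound $|\theta| \leq \theta_0 < 1$ is used here to ensure the Fourier tails contribute negligibly under $\theta$-correlated pairs. Let $P_j, Q_j$ be the multilinear polynomials representing the degree-$d$ truncations, and set the ``raw'' Gaussian extensions $\tilde F(X) = (P_1(X),\dots,P_\ell(X))$ and $\tilde G(Y) = (Q_1(Y),\dots,Q_\ell(Y))$ with $(X, Y) \sim \calg^{\otimes n}$. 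The standard bilinear MOO invariance, applied to each pair $(P_j, Q_j)$ under $\theta$-correlated inputs, yields $|\E[f_j(x) g_j(y)] - \E[P_j(X) Q_j(Y)]| \leq \eps/(4\ell)$ for $\tau$ small enough, and summing gives
\[
\bigl| \E\langle f(x), g(y)\rangle - \E\langle \tilde F(X), \tilde G(Y)\rangle \bigr| \leq \eps/2.
\]

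Since $\tilde F, \tilde G$ need not land in $K_1, K_2$, the next step is to define $F(X) = \mathrm{Proj}_{K_1}(\tilde F(X))$ and $G(Y) = \mathrm{Proj}_{K_2}(\tilde G(Y))$ by Euclidean projection onto the closed convex codomains. To control the distortion, observe that $\Phi_1(z) := \mathrm{dist}(z, K_1)^2$ is a convex ($C^{1,1}$) scalar functional with $\Phi_1 \circ f \equiv 0$ on $\{\pm 1\}^n$; applying a scalar MOO-type invariance to $\Phi_1$ composed with the vector polynomial $\tilde F$ gives $\E[\|\tilde F(X) - F(X)\|_2^2] = \E[\Phi_1(\tilde F(X))] \leq \eps^2/(100\ell)$, and symmetrically for $G$. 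Decomposing
\[
\langle F, G\rangle - \langle \tilde F, \tilde G\rangle = \langle F - \tilde F, G\rangle + \langle \tilde F, G - \tilde G\rangle,
\]
applying Cauchy--Schwarz to each summand, and using $\|F\|_2, \|G\|_2 \leq \sqrt{\ell}$ together with $\|\tilde F\|_2 \leq \|F\|_2 + \|\tilde F - F\|_2$, yields an additional error of at most $\eps/2$; combined with the bound above this gives the conclusion.

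The main technical obstacle is the invariance step for the functional $\Phi_1(z) = \mathrm{dist}(z, K_1)^2$, which is only $C^{1,1}$ rather than $C^3$ as required by the classical MOO statement. I expect to handle this by mollifying $\Phi_1$ with a narrow Gaussian kernel of bandwidth $\delta$ to obtain a $C^\infty$ approximation $\Phi_{1,\delta}$ with controlled third-derivative norms, applying the classical MOO invariance to $\Phi_{1,\delta}$, and then pointwise estimating $|\Phi_1 - \Phi_{1,\delta}|$ via the $C^{1,1}$ regularity. Integrating against the distribution of $\tilde F(X)$ is routine once one has a second-moment tail bound, which is automatic since $\tilde F$ is a bounded-degree polynomial evaluated on Gaussians. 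Balancing $\delta$ against $\tau$ and $d$ yields the final quantitative bound; this regularization trick is by now standard in the invariance-principle literature for convex-codomain setups.
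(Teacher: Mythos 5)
Your overall strategy differs from the paper's: you attempt a coordinate-wise bilinear invariance followed by a separate ``distance'' invariance to control the projection error, whereas the paper applies a \emph{single} multidimensional invariance step (Isaksson--Mossel, \autoref{thm:multidim-invariance}) to the Lipschitz test function $\Psi(\mathbf{a},\mathbf{b}) = \langle \round_{K_1}(\mathbf{a}), \round_{K_2}(\mathbf{b})\rangle$, which is globally $O(\sqrt\ell)$-Lipschitz because the projections map into the bounded sets $K_1, K_2$. That choice also sidesteps your mollification concern entirely, since Isaksson--Mossel requires only Lipschitz (not $C^3$) test functions.

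However, as written your proof has a genuine gap in the distance step. You replace the noise operator by a degree-$d$ truncation and define $\tilde F = (P_1,\dots,P_\ell)$ from the truncated polynomials. The claim $\E[\Phi_1(\tilde F(X))] \leq \eps^2/(100\ell)$ is supposed to follow by invariance from $\Phi_1 \circ f \equiv 0$ on the cube, but the function on the Boolean side of that invariance comparison is the \emph{truncation} $\tilde f = f^{\le d}$, not $f$, and degree truncation does \emph{not} preserve the codomain $K_1$: the truncated polynomial evaluated on $\{\pm1\}^n$ need not lie in $K_1$, so $\E_x[\Phi_1(\tilde f(x))]$ is not zero. Worse, low influence alone does not make the truncation error $\sum_{|S|>d}\hat f_j(S)^2$ small (parity has all influences $1/n$ yet all weight at degree $n$), so there is no way to bound $\E_x[\Phi_1(\tilde f(x))]$ in terms of $\tau$ and $d$. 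This is exactly why the paper uses the Bonami--Beckner operator $T_{1-\eta}$ instead: it is an \emph{averaging} operator, so $T_{1-\eta} f$ remains $K_1$-valued (making the rounding act as the identity on the Boolean side), and it simultaneously decays the high-degree Fourier weight so that the invariance machinery applies. A secondary issue is that $\Phi_1(z) = \dist{z}{K_1}^2$ grows quadratically and is not globally Lipschitz, so even the Lipschitz version of the invariance principle does not apply to it directly; using $\dist{\cdot}{K_1}$ instead, or absorbing the rounding into the test function as the paper does, is cleaner. Your framework could be repaired by substituting $T_{1-\eta}$ for the truncation throughout, but then it collapses into essentially the paper's argument stated in two stages.
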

\begin{proof}
Deferred to~\appendixref{sec:appendix:invariance}.
\end{proof}

\subsection{Lower bound for Gaussian Inner Product}We now deduce a lower bound on $k$, the communication complexity of the strategies captured by the range
of $f$ and $g$, needed to achieve sizeable advantage in distinguishing between $\xi$-correlated and uncorrelated Gaussian inputs. {Hereafter, $\mathcal{G}_\rho$ denotes the bivariate normal Gaussian distribution with correlation $\rho$.}
\begin{lemma}\label{lm:gaussian-gapip-lb:twoway}
Let $\xi \in (0,1/2), \gamma > 0$. There exists a function $k_1(\xi,\gamma) \geq \Omega_\gamma(1/\xi)$ such that for every $n$ the following holds: if there are functions $F\colon\R^n \to K_A^{(k)}$ and $G\colon\R^n \to K_B^{(k)}$ such that
\[ | \E_{(x,y) \sim \mathcal{G}_{\xi}^{\otimes n}} [ \dotprod{ F(x) }{ G(y) } ] - \E_{(x,y) \sim \mathcal{G}_0^{\otimes n}} [ \dotprod{ F(x) }{ G(y) } ] | \ge \gamma\ , \]
then $k \geq k_1(\xi,\gamma)$.
\end{lemma}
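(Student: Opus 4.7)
The plan is to reduce (unique) Set Disjointness, whose $\Omega(1/\xi)$ randomized communication complexity lower bound is classical, to the Gaussian distinguishing task at hand. Setting $m = \lfloor 1/\xi \rfloor$, consider Unique Disjointness on universe $[2m]$: Alice holds $A \subseteq [2m]$ and Bob holds $B \subseteq [2m]$ with $|A| = |B| = m$, and they are promised $|A \cap B| \in \{0, 1\}$. By the Kalyanasundaram--Schnitger / Razborov lower bound, even allowing public coins, distinguishing these two cases with constant advantage requires $\Omega(m) = \Omega(1/\xi)$ bits; standard refinements give $\Omega_\gamma(m)$ for advantage $\gamma$.

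From the hypothesized $F, G$, I would build a shared-randomness protocol for Unique Disjointness as follows. Using public randomness, Alice and Bob sample $2mn$ i.i.d.\ standard Gaussians $\{Z_j^{(t)}\}_{j \in [2m],\, t \in [n]}$ together with $n$ fresh independent standard Gaussians $\{W_t\}_{t \in [n]}$. Alice sets $x_t := \frac{1}{\sqrt{m}}\sum_{j \in A} Z_j^{(t)}$, and Bob first computes $\tilde{y}_t := \frac{1}{\sqrt{m}}\sum_{j \in B} Z_j^{(t)}$ and then $y_t := (m\xi)\,\tilde{y}_t + \sqrt{1 - (m\xi)^2}\, W_t$ (noting $m\xi \leq 1$ by the choice of $m$). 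A direct variance computation then shows that for each $t$, $(x_t, y_t)$ is a standard bivariate Gaussian with covariance $(m\xi) \cdot |A \cap B|/m = \xi \cdot \mathbf{1}[|A\cap B|=1]$, and different pairs are independent across $t$. Hence $(x, y) \sim \mathcal{G}_0^{\otimes n}$ when $A \cap B = \emptyset$, and $(x, y) \sim \mathcal{G}_\xi^{\otimes n}$ when $|A \cap B| = 1$. Alice and Bob then run the $k$-bit probabilistic strategies encoded by $F(x)$ and $G(y)$, which by \autoref{prop:ka:kb} have acceptance probability exactly $\dotprod{F(x)}{G(y)}$. The lemma's hypothesis then yields a $k$-bit shared-randomness protocol for Unique Disjointness with distinguishing advantage $\gamma$, and the disjointness lower bound forces $k \geq \Omega_\gamma(1/\xi)$.

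The main conceptual point, as emphasized in the paper's overview, is that the symmetry of the Gaussian inputs across Alice and Bob is precisely what enables both parties to simultaneously embed their disjointness inputs into the Gaussian problem, in contrast to the asymmetric sparse-vs-dense inputs of $\sparsegapip$. The main technical obstacle I anticipate is the exact correlation matching via the noise-mixing step above, and verifying that an advantage-$\gamma$ form of the disjointness lower bound can be cited directly (or derived from a short boosting argument converting an advantage-$\gamma$ protocol to constant advantage with an $O(1/\gamma^2)$ blowup in communication). The use of shared randomness in the reduction is harmless, since the disjointness lower bound holds for public-coin protocols as well.
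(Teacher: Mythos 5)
Your proposal is correct and follows essentially the same route as the paper: reduce promise (unique) set disjointness to the Gaussian distinguishing task by having both players lift their indicator vectors via shared Gaussians, so that the resulting coordinates are standard bivariate Gaussians whose correlation encodes the inner product, and then invoke \autoref{prop:ka:kb} to interpret $F,G$ as strategies for a $k$-bit protocol. The only difference is cosmetic: the paper fixes $\xi$ as a function of the disjointness parameter $m$ (taking $\xi = 3/m$), whereas you fix $\xi$ first, take $m = \lfloor 1/\xi\rfloor$, and add a noise-mixing step to hit the target correlation exactly, which is a slightly more careful way of handling $\xi$ that is not of the form $c/m$.
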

We will prove the above theorem by translating the above question to a communication lower bound question.

\noindent$\gausscorr_{\xi,n}$: In this (promise) communication game, Alice holds $x \in \R^n$ and Bob holds $y \in \R^n$ from one of two distributions:
\begin{itemize}
\item $\mu_{\yes}$: each $(x_i,y_i)$ is an independent pair of $\xi$-correlated standard normal variables.
\item $\mu_{\no}$: each $(x_i,y_i)$ is an independent copy of uncorrelated standard normal variables.
\end{itemize}
The goal is for Alice and Bob to communicate with each other, with shared randomness,
and distinguish between the two cases with good advantage.

\medskip Note that if $(X,Y)$ denotes the random variable each pair $(x_i,y_i)$ {is a realization of}, estimating $\expect{XY}$ within accuracy $< \xi/2$ suffices to solve the above problem.
If Alice sends the values of $x_i$ (suitably discretized) for the first $O(1/\xi^2)$ choices of $i$,
then by standard Chebyshev tail bounds Bob can estimate $\expect{XY}$ to the desired accuracy, and so this problem can be solved with $O(1/\xi^2)$ bits of (one-way) communication.  We now show that $\Omega(1/\xi)$ is a lower bound.

\begin{lemma}
\label{lm:gaussian-corr-lb:twoway}
Let $\xi  \in (0,1/2)$ and $n$ be sufficiently large. Suppose there is a $k$-bit communication
protocol for \gausscorr$(\xi,n)$ that distinguishes between $\mu_{\yes}$
and $\mu_{\no}$ with advantage $\gamma > 0$.  Then $k \ge \Omega_\gamma(1/\xi)$.
\end{lemma}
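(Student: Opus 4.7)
The plan is to reduce \emph{Unique Set-Disjointness} on a universe of size $m = \Theta(1/\xi)$ to $\gausscorr_{\xi,n}$, and then invoke the classical $\Omega(m)$ randomized communication lower bound for $\textsc{UDisj}_m$. Throughout we may assume without loss of generality that $s \eqdef 1/\xi$ is an integer (otherwise replace $\xi$ by $\xi' = 1/\lceil 1/\xi\rceil \in [\xi/2,\xi]$; a lower bound of $\Omega_\gamma(1/\xi')$ is the same, up to constants, as a lower bound of $\Omega_\gamma(1/\xi)$).

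\textbf{The reduction.} Set $m = 4s = 4/\xi$ and consider $\textsc{UDisj}_m$ on sets of size $s$: Alice holds $A\subseteq[m]$ and Bob holds $B\subseteq[m]$ with $|A|=|B|=s$, under the promise $|A\cap B|\in\{0,1\}$. Using their (perfectly) shared randomness, Alice and Bob sample $n$ i.i.d.\ vectors $\mathbf{g}^{(1)},\ldots,\mathbf{g}^{(n)} \in \R^m$ with independent $N(0,1)$ entries. Alice then locally computes
\[
 x_t \;=\; \frac{1}{\sqrt{s}}\sum_{i \in A} g^{(t)}_i, \qquad t \in [n],
\]
and Bob computes $y_t = s^{-1/2}\sum_{j \in B} g^{(t)}_j$. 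Since $|A|=|B|=s$, each $x_t$ and $y_t$ is a standard normal, and a direct covariance computation gives $\E[x_t y_t] = |A\cap B|/s$. Being jointly Gaussian with matching covariances, the $n$ pairs $(x_t,y_t)$ are therefore i.i.d.\ samples from $\mu_{\no}$ when $|A\cap B|=0$, and from $\mu_{\yes}$ (the $\xi$-correlated distribution, as $1/s=\xi$) when $|A\cap B|=1$. Alice and Bob now run the assumed $k$-bit protocol $\Pi$ on $(x,y)$ and return its verdict; this solves $\textsc{UDisj}_m$ with advantage $\gamma$ using $k$ bits of communication and shared randomness. Note that the reduction is purely local on each party's side (no extra communication), and that by \autoref{prop:agree-positive}-type arguments we are free to use perfectly shared randomness here since $\gausscorr$ is itself defined in the psr setting.

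\textbf{Applying the \textsc{UDisj} lower bound.} By the Razborov/Kalyanasundaram--Schnitger lower bound, the randomized (psr) communication complexity of $\textsc{UDisj}_m$ with constant advantage is $\Omega(m)$. By standard parallel repetition and majority-vote amplification, a protocol achieving advantage $\gamma$ with $k$ bits yields one with constant advantage using $O(k/\gamma^2)$ bits. Combining,
\[
 O(k/\gamma^2) \;\geq\; \Omega(m) \;=\; \Omega(1/\xi),
\]
which rearranges to $k \geq \Omega(\gamma^2/\xi) = \Omega_\gamma(1/\xi)$, as desired. Taking $n$ sufficiently large ensures $\mu_{\yes}$ and $\mu_{\no}$ are distinguishable with advantage $\gamma$ at all (their total variation distance must exceed $\gamma$, which requires $n = \Omega(\gamma^2/\xi^2)$), making the hypothesis non-vacuous.

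\textbf{Main obstacle.} The main non-routine step is ensuring the reduction produces the exact $\mu_{\yes}/\mu_{\no}$ distributions — in particular, that the marginals are standard normals and the correlation is \emph{exactly} $\xi$, so that any protocol $\Pi$ works ``off the shelf'' on the reduced instance. This is handled by choosing $s=1/\xi$ and enforcing $|A|=|B|=s$. Beyond this, one has to be mindful of the $\gamma$-dependence: rather than directly citing the $\Omega(m)$ bound (which gives constant advantage), we lose a factor of $\gamma^2$ from amplification, but this is acceptable since the conclusion is $\Omega_\gamma(1/\xi)$.
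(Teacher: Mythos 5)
Your proof is correct and follows essentially the same approach as the paper: reduce a promise/unique version of \textsc{Disjointness} on a universe of size $\Theta(1/\xi)$ to $\gausscorr_{\xi,n}$ by taking $x_t,y_t$ to be normalized sums of shared i.i.d.\ Gaussians indexed by the respective sets, observe that the resulting $(x_t,y_t)$ are exactly distributed as $\mu_{\yes}$ or $\mu_{\no}$ according to whether the sets intersect, and then invoke the $\Omega(m)$ randomized lower bound for (promise) disjointness. The paper uses H{\aa}stad--Wigderson's formulation with $\|u\|^2=\|v\|^2=m/3$ and $\xi=3/m$, while you use Razborov/Kalyanasundaram--Schnitger with sets of size $s=1/\xi$ in $[4s]$; these are the same argument with different normalization constants. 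Your explicit majority-vote amplification to account for the $\gamma$-dependence is a useful clarification of a step the paper absorbs into the $\Omega_\gamma(\cdot)$ notation.
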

Before we prove the result, note that \autoref{lm:gaussian-gapip-lb:twoway} follows immediately with $k_1(\xi,\gamma) = \Omega_\gamma(1/\xi)$,
since by \autoref{prop:ka:kb} the functions $F\colon\R^n \to K_A^{(k)}$ and $G\colon\R^n \to K_B^{(k)}$
simply correspond to strategies for a $k$-bit two-way communication protocol with acceptance
probability given by $\E_{X,Y}[\ip{F(X)}{G(Y)}]$.

\begin{proof}[Proof of \autoref{lm:gaussian-corr-lb:twoway}] The lower bound is proved by reducing the \textsc{Disjointness} problem (in particular a promise version of it) to the
\gausscorr problem.

Specifically we consider the promise \textsc{Disjointness} problem with parameter
$m$, where Alice gets a vector $u \in \{0,1\}^m$ and Bob gets
$v \in \{0,1\}^m$, such that $\normtwo{u}^2=\normtwo{v}^2=\frac{m}{3}$. 
The \yes-instances satisfy $\ip{u}{v} = 1$ while the \no-instances satisfy $\ip{u}{v} = 0$, where the inner product is
over the reals. {H{\aa}stad and Wigderson~\cite{HW:07}} show that
distinguishing \yes-instances from \no-instances requires $\Omega(m)$ bits
of communication, even with shared randomness.

We reduce $\textsc{Disjointness}_m$ to $\gausscorr$ with $\xi=1/m$ as follows:
Alice and Bob share $mn$ independent standard Gaussians $\setOfSuchThat{G_{ij} }{ i \in [n], j \in [m] }$.
Alice generates $x = (x_1,\ldots,x_n)$ by letting
$x_i = {\sqrt{\frac{3}{m}}} \sum_{j=1}^m u_j \cdot G_{ij}$ and Bob generates
$y  = (y_1,\ldots,y_n)$ by letting
$y_i = {\sqrt{\frac{3}{m}}} \sum_{j=1}^m v_j \cdot G_{ij}$.
It can be verified that $x_i$ and $y_i$ are standard Gaussians\footnote{{Namely, for any $i$ we have $\expect{x_i} = \frac{\sqrt{3}}{\sqrt{m}} \sum_{j=1}^m u_j \underbrace{\expect{G_{ij}}}_{=0} = 0$, and
\[
\expect{x_i^2} = \frac{3}{m} \sum_{j=1}^m\sum_{\ell=1}^m u_j u_\ell \underbrace{\expect{G_{ij}G_{i\ell}}}_{=0\text{ if } j\neq \ell} = \frac{3}{m} \sum_{j=1}^m u_j^2 \expect{G_{ij}^2} = \frac{3}{m}\normtwo{u}^2 =1.
\]}}
with $\expect{x_i y_i} = \frac{{3}}{m} \ip{u}{v}$.
Thus \yes-instances of \textsc{Disjointness} map to \yes-instances of
\gausscorr drawn according to $\mu_\yes$
with $\xi = {3}/m$,
and \no-instances map to \no-instances drawn according to $\mu_{\no}$.
The communication lower bound of $\Omega(m)$ for \textsc{Disjointness}
thus translates to a lower bound of $\Omega(1/\xi)$ for
\gausscorr.
\end{proof}

\subsection{Putting things together and proof of \autoref{thm:singlefg}}
\label{sec:putting-together}
We now combine the results from the previous two sections to prove \autoref{thm:singlefg}.
\begin{proof}[Proof of \autoref{thm:singlefg}]
{Postponing} the precise setting of parameters for now, the main idea behind the proof is the following. Suppose the conclusion of the theorem does not hold and $f,g$ do not have a common influential variable so that
\begin{equation}  \label{eq:singlefg:1}
  \max_{i \in [n]} \min\left\{ {\max_{j\in[2^k]} \infldeg[i]{d}{f_j}}, {\max_{j\in[2^k]} \infldeg[i]{d}{g_j}} \right\}  \leq \tau
\end{equation}
for parameters $d,\tau$ that can be picked with an arbitrary dependence on $q,\eps$.

We now associate the domains of $f$ and $g$ with $\{+1,-1\}^n$ in the natural way by mapping $x \in \{0,1\} \to 2x-1 \in  \{+1,-1\}$. This defines us functions $f^\prime\colon\{+1,-1\}^n \to K_A^{(k)}$ and $g^\prime\colon\{+1,-1\}^n \to K_B^{(k)}$
which satisfy the same conditions on influence as $f$. Further, under this mapping, the distribution $B_Y$ is mapped to $N_\mathcal{Y} \equiv N_{2/q-1,0,1.9/q}$ and $B_N$ is mapped to $N_\mathcal{N} \equiv N_{2/q-1,0,0}$ (for $N_{p_1,p_2,\theta}$ as defined in \autoref{thm:invariance:principle:bifunction}). Let $\mathcal{G_Y}$ and $\mathcal{G_N}$ denote bivariate Gaussian distributions whose first two moments match those of $N_\mathcal{Y}$ and $N_\mathcal{N}$ respectively.

Since the ranges of $f^\prime,g^\prime$ are {closed and} convex (from \autoref{prop:ka:kb}) we get,
by applying \autoref{thm:invariance:principle:bifunction} to functions $f^\prime,g^\prime$ and distributions $N_\mathcal{Y},\mathcal{G_Y}$ and $N_\mathcal{N},\mathcal{G_N}$ respectively, that there exist functions $F\colon\R^n \to K_A^{(k)}$ and $G\colon\R^n \to K_B^{(k)}$ such that
\begin{align}\label{eq:singlefg:2}
    \abs{ \shortexpect_{(x,y)\sim N_\mathcal{Y}^{\otimes n}}[ \dotprod{f^\prime(x)}{g^\prime(y)} ] - \shortexpect_{(X,Y)\sim \mathcal{G_Y}^{\otimes n}} [ \dotprod{F(X)}{G(Y)} ] }
    &\leq \frac{\eps}{3} \\
    \abs{ \shortexpect_{(x,y)\sim N_\mathcal{N}^{\otimes n}}[ \dotprod{f^\prime(x)}{g^\prime(y)} ] - \shortexpect_{(X,Y)\sim \mathcal{G_N}^{\otimes n}} [ \dotprod{F(X)}{G(Y)} ] }
    &\leq \frac{\eps}{3} \ .\nonumber
\end{align}
Combining the above equations with the hypothesis that $\success_{(\dyes,\dno)}(f,g) \geq \eps$, we get
\[
\abs{ \shortexpect_{(X,Y)\sim \mathcal{G_Y}^{\otimes n}}[ \dotprod{F(X)}{G(Y)} ] - \shortexpect_{(X,Y)\sim \mathcal{G_N}^{\otimes n}} [ \dotprod{F(X)}{G(Y)} ] } \geq \frac{\eps}{3} \ .
\]
To finish the argument, we shall appeal to \autoref{lm:gaussian-gapip-lb:twoway}. Let $p = 1/q$ and $\theta = .95 p/\sqrt{p-p^2}={\bigTheta{1/\sqrt{q}}}$. Let $\phi\colon\R \to \R$ be defined by $\phi(z) = 2\sqrt{p-p^2} \cdot z + (2p-1)$. It is easy to check that for $(z,w) \sim \mathcal{G_\theta}$, $(\phi(z), w) \sim \mathcal{G_Y}$ and for $(z,w) \sim \mathcal{G}_0$, $(\phi(z),w) \sim \mathcal{G_N}$. Therefore, if we define ${F^\prime\colon\R^n \to K_A^{(k)}}$ by ${F^\prime}(X) = F(\phi(X_1),\ldots,\phi(X_n))$, then the above equation is equivalent to
\[
  \abs{ \shortexpect_{(X,Y)\sim \mathcal{G_\theta}^{\otimes n}}[ \dotprod{{F^\prime}(X)}{G(Y)} ] - \shortexpect_{(X,Y)\sim \mathcal{G}_0^{\otimes n}} [ \dotprod{{F^\prime}(X)}{G(Y)} ] }
  \geq \frac{\eps}{3} \ .
\]
We can now conclude from \autoref{lm:gaussian-gapip-lb:twoway} that {$k \geq \Omega_\eps(1/\theta) = \Omega_\eps(\sqrt{q})$}. To complete the proof of theorem by a contradiction we set the parameters as follows: choose $d,\tau$ in \autoref{eq:singlefg:1} so as to deduce \autoref{eq:singlefg:2} from \autoref{thm:invariance:principle:bifunction} (with $\eps/3$ playing role of $\eps$)  and set {$k_0 = k_1(\theta,\eps/3)$ for $k_1$} as given by \autoref{lm:gaussian-gapip-lb:twoway}.

\end{proof}

\section{Conclusions}
\label{sec:conclusions}

In this paper we carried out an investigation of the power of imperfectly
shared randomness in the context of communication complexity. There
are two important aspects to the perspective that motivated our work: First,
the notion that in many forms of natural communication, the communicating
parties understand each other (or ``know'' things about each other)
fairly well, but never perfectly. This imperfection in knowledge/understanding
creates an obstacle to many of the known solutions and new solutions have to be
devised, or new techniques need to be developed to understand whether the
obstacles are barriers. Indeed for the positive results described in this paper,
classical
solutions do not work and the solutions that ended up working are even
``provably'' different from classical solutions. (In particular they work hard
to preserve ``low influence'').

However, we also wish to stress a second aspect that makes the problems here
interesting in our view, which is an aspect of scale. Often in communication
complexity our main motivation is to compute functions with sublinear
communication, or prove linear lower bounds. Our work, and natural
communication in general, stresses the setting where inputs are enormous,
and the communication complexity one is considering is tiny. This models
many aspects of natural communication where there is a huge context to any
conversation which is implicit. If this context were known exactly to sender
and receiver, then it would play no significant mathematical role. However in
natural communication this context is not exactly known, and resolving this
imperfection of knowledge before communicating the relevant message would be
impossibly hard.  Such a setting naturally motivates the need to study problems
{of input length $n$, but where any dependence on $n$ in the communication complexity
would be impractical.}

We note that we are not at the end of the road regarding questions of this form: Indeed a
natural extension to communication complexity might be where Alice wishes to
compute $f_A(x,y)$ and Bob wishes to compute $f_B(x,y)$ but Alice does not
know $f_B$ and Bob does not know $f_A$ (or have only approximate knowledge
of these functions). If $x$ and $y$ are $n$-bits strings, $f_A$ and $f_B$
might require $2^n$ bits to describe and this might be the real input size.
There is still a trivial upper bound of $2n$ bits for solving any such
communication problem, but it would be interesting to study when, and what
form of, approximate knowledge of $f_A$ and $f_B$ helps improve over this
trivial bound.

Turning to the specific questions studied in this paper a fair number of
natural questions arise that we have not been able to address in this work.
For instance, we stuck to a specific and simple form of correlation in
the randomness shared by Alice and Bob. One could ask what general forms of
randomness $(r,r')$ are equally powerful. In particular if the distribution
of $(r,r')$ is known to both Alice and Bob, can they convert their
randomness to some form of correlation in the sense used in this paper
(in product form with marginals being uniform)?

In \autoref{sec:agreement} we considered the \agreement\ problem where the
goal was for Alice and Bob to agree perfectly on some random string. What if
their goal is only to generate more correlated bits than they start with?
What is possible here and what are the limits?

In the study of perfectly shared randomness, Newman's Theorem~\cite{Newman:91}
is a simple but powerful tool, showing that $O(\log n)$ bits of randomness
suffice to deal with problems on $n$ bit inputs. When randomness is shared
imperfectly, such a randomness reduction is not obvious. Indeed for the
problem of equality testing, the protocol of \cite{BavarianGI:14}
uses $2^n$ bits
of randomness, and our Gaussian protocol (which can solve this with one-way
communication) uses $\mathrm{poly}(n)$ bits. Do $O(\log n)$ bits of
imperfectly shared randomness suffice for this problem? How about for
general problems?

Finally almost all protocols we give for imperfectly shared randomness lead
to two-sided error. This appears to be an inherent limitation (with some
philosophical implications) but we do not have a proof. It would be nice to
show that one-sided error with imperfectly shared randomness cannot lead to
any benefits beyond that offered by private randomness.

\section*{Acknowledgments}
We thank Brendan Juba for his helpful notes \cite{JubaReport:Invariance} on the invariance principle. We thank Badih Ghazi for pointing an error in a previous version of this paper and other discussions, \newer{and Jayanth Naranambadikalpathy Sadhasivan for pointing out an issue in the proof of~\autoref{thm:compress}}. We thank the anonymous referee for extensive comments and
corrections.

 \bibliographystyle{alpha}
 \bibliography{references,added}
 \clearpage
 \appendix
\section{Proofs from \autoref{sec:invariance}}\label{sec:appendix:invariance}

Our goal in this section is to prove the needed invariance principle, as stated in \autoref{thm:discrete:to:gaussian:multivalued:twoway}, that allows us to pass from a correlated distribution on $\{+1,-1\}^2$ to a two-dimensional Gaussian distribution with matching moments. We first formally define the discrete distribution of interest to us.

\begin{definition}\label{def:corr:pair:p1p2eta}
 For parameters $p_1,p_2,\theta \in [-1,1]$, let the distribution $N_{p_1,p_2,\theta}$ on ${\{+1,-1\}}\times{\{+1,-1\}}$ be defined as follows:\footnote{We assume that the parameters $p_1,p_2,\theta$ are such that each of the probabilities is in $[0,1]$.}
 \[
    (x,y) = \begin{cases}
        (+1,+1) &\text{ with probability } \frac{1+\theta}{4} + \frac{p_1+p_2}{4} \\
        (+1,-1) &\text{ with probability } \frac{1-\theta}{4}+\frac{p_1-p_2}{4}\\
        (-1,+1) &\text{ with probability } \frac{1-\theta}{4} - \frac{p_1-p_2}{4}\\
        (-1,-1) &\text{ with probability } \frac{1+\theta}{4} - \frac{p_1+p_2}{4}
    \end{cases}
 \]
 so that $\expect{x}=p_1$, $\expect{y}=p_2$ and $\expect{xy}=\theta$.
\end{definition}
The proof of \autoref{thm:discrete:to:gaussian:multivalued:twoway} relies on two general ingredients.
The first is that replacing $f$ and $g$ by
their {\em smoothened} versions $T_{1-\eta} f$ and $T_{1-\eta} g$
(obtained by applying the Bonami--Beckner noise operator, defined below) does not change the inner product $\dotprod{f(x)}{g(y)}$ much, due to the fact that the components $(x_j,y_j)$ are sampled independently from a bounded correlation space (namely $N_{p_1,p_2,\theta}$ for $\theta < 1$). The second is a multi-dimensional invariance principle asserting that these smoothened functions behave similarly on
Gaussian inputs that have
matching moments, with respect to Lipschitz test functions. We then apply this to the Lipschitz function which is the inner product of appropriately rounded versions of inputs, thereby yielding $K_1$ and $K_2$ valued functions 
in the Gaussian domain with inner product close to $\dotprod{f(x)}{g(y)}$.

\begin{definition}[Bonami--Beckner $T_{1-\eta}$ operator]
Let $(\Omega,\mu)$ be a finite probability space, and $\eta \in (0,1)$.
For a function $h\colon\Omega^n \to \R$, the function $T_{1-\eta} h$ is defined as $T_{1-\eta} h(x) = \E_{y}[ h(y) ]$, where each coordinate $y_i$ is sampled independently as follows:
\begin{itemize}
  \item with probability $(1-\eta)$ set $y_i = x_i$;  and
  \item with probability $\eta$, pick $y_i \in \Omega$ as a fresh sample according to $\mu$.
\end{itemize}
For a vector-valued function, $T_{1-\eta}$ acts component-wise, i.e., if $f = (f_1,\dots,f_\ell)\colon\Omega^n \to \R^\ell$, we define $T_{1-\eta} f = (T_{1-\eta} f_1,\dots,T_{1-\eta} f_\ell)$.
\end{definition}

A useful property of the $T_{1-\eta}$ operator for us is that if $h$ has {convex} range {$K \subseteq [0,1]^\ell$ then so does $T_{1-\eta} h$.}
As stated below, the action of $T_{1-\eta}$ has a particularly nice form when a function is expanded in an orthonormal basis, but this will not be important for us.

\begin{fact}
If a function $h\colon\Omega^n \to \R$ has multilinear expansion
$h(x)= \sum_{\multindex} \hat{h}_{\multindex} \prod_{i=1}^n \chi_{\sigma_i}(x_i)$
w.r.t. an orthonormal ensemble $\mathcal{L}=(\chi_0,\dots,\chi_{b-1})$ of random variables over $\Omega$, then the multilinear expansion of $T_{1-\eta}h$ is given by
$\sum_{\multindex} \hat{h}_{\multindex} (1-\eta)^{|\multindex|}  \prod_{i=1}^n \chi_{\sigma_i}(x_i)$.
\end{fact}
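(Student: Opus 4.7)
The plan is to establish this identity in three short steps: first reduce to a single monomial using linearity of $T_{1-\eta}$, then reduce the $n$-variable monomial to a product of one-variable computations using the coordinatewise independence in the definition of $T_{1-\eta}$, and finally compute the one-variable action using orthonormality.

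First I would observe that $T_{1-\eta}$ is linear: since $T_{1-\eta}h(x) = \E_y[h(y)]$ where $y$ is drawn coordinatewise from the mixture above, the operator is an expectation, hence $\R$-linear. Thus it suffices to verify the claim for a single basis monomial $\chi_{\multindex}(x) = \prod_{i=1}^n \chi_{\sigma_i}(x_i)$. Next, since the $n$ coordinates of $y$ are drawn independently in the definition of $T_{1-\eta}$, the expectation of a product factors, giving
\[
  T_{1-\eta}\chi_{\multindex}(x) = \prod_{i=1}^n \E_{y_i}[\chi_{\sigma_i}(y_i)],
\]
where each $y_i$ equals $x_i$ with probability $1-\eta$ and is a fresh sample from $\mu$ with probability $\eta$. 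This reduces the problem to the one-variable case.

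Now I would compute, for each coordinate $i$ and index $\sigma \in \{0,1,\ldots,b-1\}$,
\[
  \E_{y_i}[\chi_\sigma(y_i)] = (1-\eta)\chi_\sigma(x_i) + \eta\, \E_{z \sim \mu}[\chi_\sigma(z)].
\]
Using the orthonormality condition $\E_{z\sim\mu}[\chi_j(z)\chi_0(z)] = \delta_{j0}$ together with $\chi_0 \equiv 1$, we get $\E_{z\sim\mu}[\chi_\sigma(z)] = \delta_{\sigma 0}$. Hence when $\sigma = 0$ the expression equals $1 = \chi_0(x_i)$, while for $\sigma \neq 0$ it equals $(1-\eta)\chi_\sigma(x_i)$. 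In either case,
\[
  \E_{y_i}[\chi_\sigma(y_i)] = (1-\eta)^{\mathbf{1}[\sigma \neq 0]}\,\chi_\sigma(x_i).
\]

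Plugging this back into the product formula and recalling that $|\multindex| = \#\{i : \sigma_i \neq 0\}$, we conclude
\[
  T_{1-\eta}\chi_{\multindex}(x) = \prod_{i=1}^n (1-\eta)^{\mathbf{1}[\sigma_i \neq 0]}\chi_{\sigma_i}(x_i) = (1-\eta)^{|\multindex|}\chi_{\multindex}(x).
\]
Summing over $\multindex$ with the coefficients $\hat{h}_{\multindex}$ yields the stated multilinear expansion of $T_{1-\eta}h$. No step is a real obstacle here; the only point to be slightly careful about is verifying that the $\sigma = 0$ case fits the formula $(1-\eta)^{|\multindex|}$ correctly, which it does since the $\sigma_i = 0$ coordinates simply do not contribute to $|\multindex|$.
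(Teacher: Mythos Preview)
Your proof is correct and is the standard argument for this well-known fact. The paper itself states this as a Fact without proof, so there is nothing to compare against; your three-step reduction (linearity, coordinatewise independence, one-variable orthonormality computation) is exactly the expected justification.
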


We next state the multi-dimensional invariance principle that we rely on. A version similar to the following is stated formally in \cite[Theorem 10.1]{GHMRC:11} (we have renamed some variables to avoid conflict with other uses in this paper) and
it follows from Theorem 3.6 in the work of Isaksson and Mossel~\cite{isaksson-mossel}.

\begin{theorem}
 \label{thm:multidim-invariance}
  Let $(\Omega,\mu)$ be a finite probability space with the least non-zero
  probability of an atom being at least $\alpha \leq 1/2$.  Let $b = \abs{\Omega}$ and let $\mathcal{L} =
  \{\chi_0=1,\chi_1,\chi_2,\ldots,\chi_{b-1}\}$ be a basis for random variables
  over $\Omega$.  Let $\Upsilon =
  \{\xi_0=1,\xi_1,\ldots,\xi_{b-1}\}$ be an ensemble of real-valued Gaussian random variables with first and second moments matching those of the $\chi_i$'s; specifically:
  \begin{align*}
    \E[\chi_i] = \E[\xi_i]  & & \E[\chi_i^2] = \E[\xi_i^2] & & \E[\chi_i
    \chi_j] = \E[\xi_i \xi_j] & & \forall i,j \in \{1,\dots,b-1\}
  \end{align*}
Let $h = (h_1,h_2,\dots,h_t)\colon\Omega^n \to \R^t$ be a vector-valued function such that $\infl[i]{h_\ell}  \leq \tau$ and  $\Var(h_\ell) \leq 1$ for all $i \in [n]$ and $\ell \in [t]$.
For $\eta \in (0,1)$, let $H_\ell$, $\ell=1,2,\dots,t$, be the multilinear polynomial associated with $T_{1-\eta} h_\ell$ with respect to the basis $\mathcal{L}$, as per \autoref{def:rv-to-poly}.

If $\Psi\colon \R^t \rightarrow \R$ is a Lipschitz-continuous function with
Lipschitz constant $\Lambda$ (with respect to the $L_2$-norm), then
\begin{equation}
\label{eq:inv-principle-guarantee}
\biggl|\E \Bigl[\Psi\bigl(H_1(\mathcal{L}^{n}),  \cdots, H_t(\mathcal{L}^n)\bigr)\Bigr] -
    \E \Bigl[\Psi\bigl(H_1(\Upsilon^{n}),  \cdots, H_t(\Upsilon^n)\bigr)\Bigr] \biggr| \leq
    C(t) \cdot \Lambda \cdot \tau^{(\eta/18) \log(1/\alpha)} = o_{\tau}(1)
\end{equation}
for some constant $C(t)$ depending on $t$, where $H_\ell(\mathcal{L}^n)$ and $H_\ell(\Upsilon^n)$, $\ell \in [t]$, denote random variables as in \autoref{def:poly-to-rv}.
\end{theorem}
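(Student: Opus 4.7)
My plan is to follow the Lindeberg-style hybrid argument that underlies the Mossel--O'Donnell--Oleszkiewicz invariance principle, extended to vector-valued functions and Lipschitz (rather than thrice differentiable) test functions as in Isaksson--Mossel. The outline has four parts: mollify $\Psi$ so that it has bounded third derivatives; truncate the multilinear polynomials at a well-chosen degree $d$ so the tails are negligible; swap one coordinate block $\mathcal{L}$ for $\Upsilon$ at a time; and bound each single-block swap by a Taylor expansion that exploits the matching first and second moments.

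First I would replace $\Psi$ by $\tilde\Psi = \Psi \ast \phi_\delta$, the convolution with a spherical Gaussian density of variance $\delta^2$ on $\R^t$. Standard mollification estimates give $\|\Psi-\tilde\Psi\|_\infty \leq O(\Lambda\sqrt{t}\,\delta)$ and $\|\nabla^3\tilde\Psi\|_\infty \leq O(\Lambda/\delta^2)$, so it suffices to establish (\ref{eq:inv-principle-guarantee}) for $\tilde\Psi$ up to an additive $O(\Lambda\sqrt{t}\,\delta)$ loss. Next, writing $H_\ell = H_\ell^{\leq d} + H_\ell^{>d}$ according to multi-index degree, the smoothing operator $T_{1-\eta}$ contracts level-$j$ coefficients by $(1-\eta)^j$, so $\|H_\ell^{>d}\|_2^2 \leq (1-\eta)^{2d}$ on both the $\mathcal{L}^n$ and the $\Upsilon^n$ sides; the induced replacement error is $O(\Lambda\sqrt{t}\,(1-\eta)^d)$ because $\tilde\Psi$ is $\Lambda$-Lipschitz.

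The heart of the argument is the hybrid step. Define ensembles $\mathcal{Z}^{(i)}$ whose first $i$ coordinates are drawn from $\Upsilon$ and whose remaining $n-i$ coordinates are drawn from $\mathcal{L}$, and telescope the total error as $\sum_{i=1}^n |\E\tilde\Psi(H(\mathcal{Z}^{(i-1)})) - \E\tilde\Psi(H(\mathcal{Z}^{(i)}))|$. For each $i$, decompose $H_\ell^{\leq d} = H_\ell^{(-i)} + R_\ell^{(i)}$, where $H_\ell^{(-i)}$ does not touch coordinate $i$ and $R_\ell^{(i)}$ collects all monomials that do. Taylor-expanding $\tilde\Psi$ around $H^{(-i)}$ in the perturbation $R^{(i)} = (R_1^{(i)},\ldots,R_t^{(i)})$ to second order, the zeroth-, first-, and second-order terms contribute identically on either side of the swap because the first two moments of $(\chi_1,\ldots,\chi_{b-1})$ and $(\xi_1,\ldots,\xi_{b-1})$ agree by hypothesis; the only surviving term is a third-order remainder bounded by $O(\Lambda/\delta^2)\cdot \E\|R^{(i)}\|_2^3$. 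This third moment is then controlled by the Wolff--Oleszkiewicz $(2,3)$-hypercontractivity of $T_{1-\eta}$ on $(\Omega,\mu)$, which yields $\|R^{(i)}\|_3 \leq \alpha^{-c\eta d}\|R^{(i)}\|_2$ for an absolute constant $c$; summing gives $\sum_i \|R^{(i)}\|_2^2 \leq d \cdot \Var(h_\ell)$, and the extra factor $\max_i \infl[i]{h_\ell}^{1/2} \leq \tau^{1/2}$ pulls out of one of the three copies.

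Finally I would tune the parameters: choose $d \asymp (\eta/18)\log(1/\alpha)\cdot\log(1/\tau)/|\log(1-\eta)|$ so that $(1-\eta)^d \leq \tau^{(\eta/18)\log(1/\alpha)}$, pick $\delta$ a small polynomial of $\tau^{(\eta/18)\log(1/\alpha)}$ to balance the mollification and Taylor-remainder errors, and collect a dimensional constant $C(t)$ to obtain the stated bound. The main obstacle I anticipate is the careful interplay of three competing parameters: the mollification scale $\delta$ must be small enough for pointwise approximation yet large enough that $1/\delta^2$ does not swamp the hypercontractive gain; the hypercontractivity constant itself grows with the truncation degree $d$ and deteriorates as $\alpha \to 0$; and replacing the \emph{sum} of influences, which naturally appears in a Lindeberg sum, by the \emph{maximum} influence $\tau$ requires an extra H\"older step to trade per-coordinate $L_2$ mass against the global low-influence hypothesis.
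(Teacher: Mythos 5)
The paper does not actually prove this theorem: it is invoked as a black box, with the remark that it ``follows from Theorem 3.6 in the work of Isaksson and Mossel'' and is a renamed version of \cite[Theorem 10.1]{GHMRC:11}. So there is no paper-internal proof to compare against; what you have done is reconstruct the standard Lindeberg argument underlying those cited sources, and your outline captures it faithfully: mollify the Lipschitz test function, truncate to low degree using the $(1-\eta)^{|\multindex|}$ contraction built into $T_{1-\eta}$, perform the coordinate-by-coordinate hybrid swap, cancel the zeroth/first/second-order Taylor terms by the matching-moments hypothesis, and bound the third-order remainder by hypercontractivity plus the low-influence hypothesis. Your H\"older step $\E\|R^{(i)}\|_2^3 \leq \tau^{1/2}\,\E\|R^{(i)}\|_2^2$ is indeed exactly how the \emph{max}-influence hypothesis (as opposed to a sum of influences) gets used, and you are also right that the actual content lies in the three-way tension between the mollification scale $\delta$, the truncation degree $d$, and the hypercontractive blow-up.

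Two details would need to be corrected in a full write-up. First, the hypercontractive estimate you state, $\|R^{(i)}\|_3 \leq \alpha^{-c\eta d}\|R^{(i)}\|_2$, has a spurious $\eta$ in the exponent: after truncating at degree $d$, the polynomial $R^{(i)}$ is a degree-$\le d$ multilinear polynomial over a product space with least atom $\alpha$, and the $(2,3)$-hypercontractive bound is $\|R^{(i)}\|_3 \leq \alpha^{-cd}\|R^{(i)}\|_2$ for an absolute $c$; the parameter $\eta$ should enter only through the choice $d = \Theta(\eta^{-1}\log(1/X))$ that equates the truncation loss $(1-\eta)^d$ with the target error $X$. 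Second, once one substitutes this corrected constant and optimizes, the resulting error scales as $\tau^{\Theta(\eta/\log(1/\alpha))}$, with $\log(1/\alpha)$ in the \emph{denominator} of the exponent --- which is also the direction one should expect on qualitative grounds, since the theorem must get weaker, not stronger, as $\alpha \to 0$ and hypercontractivity degrades. The exponent $(\eta/18)\log(1/\alpha)$ as printed has the opposite $\alpha$-dependence and appears to be a transcription slip from the sources; for the paper's downstream application only the $o_\tau(1)$ conclusion matters, so this does not affect the rest of the argument, but you should not design your parameter balancing to hit the exponent exactly as written.
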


Armed with the above invariance principle, we now turn to the proof of \autoref{thm:discrete:to:gaussian:multivalued:twoway}, restated below.

\thminvariancetoprove*
\begin{proofof}{\autoref{thm:discrete:to:gaussian:multivalued:twoway}}

Let $\Omega = {\{+1,-1\}} \times {\{+1,-1\}}$ with the measure $N := N_{p_1,p_2,\theta}$. Define the basis $\mathcal{L} = \{\chi_0,\chi_1,\chi_2,\chi_3\}$ of functions on $\Omega$ as:
\begin{itemize}[itemsep=1pt]
\item $\chi_0=1$,
\item $\chi_1((w_1,w_2)) = w_1$ (where $w_1,w_2 \in \{+1,-1\}$),
\item $\chi_2((w_1,w_2)) = w_2$, and
\item $\chi_3((w_1,w_2)) = w_1 w_2$.
\end{itemize}

We will apply the above invariance principle
\autoref{thm:multidim-invariance} with $t=2\ell$, $h_j = f_j$ and
$h_{\ell+j} = g_j$ for $j \in [\ell]$. We note that while $f_j$, $j
\in [\ell]$ are functions on $\{+1,-1\}^n$, we can view them as 
functions on $\Omega^n$ by simply ignoring the second coordinate. (Thus, for $(x,y) \sim \Omega^n$, $f_j(x,y) = f_j(x)$.) The
multilinear expansion of $f_j$ w.r.t. $\mathcal{L}$ will only involve
$\chi_0$ and $\chi_1$. Similarly, the functions $h_j$'s only depend on
the second coordinate of $\Omega$ and have a multilinear expansion
depending only on $\chi_0,\chi_2$. The function $\Psi\colon \R^{2\ell} \to
\R$ is defined as
\[ \Psi(\mv{a},\mv{b}) = \dotprod{ \round_{{K_1}}(\mv{a}) }{ \round_{{K_2}}(\mv{b}) } \]
for $\mv{a},\mv{b} \in \R^\ell$, where for a {\em {closed} convex} set $K \subset \R^\ell$, $\round_K\colon \R^{\ell} \to \R^{\ell}$ maps a point to its (unique) closest point (in Euclidean distance) in $K$  -- in particular, it is the identity map on $K$. It is easy to see that by the convexity of $K$, $\round_K$
  is a $1$-Lipschitz function,\footnotemark{} and it follows that the function $\Psi$ is $O(\sqrt{\ell})$-Lipschitz. Also, since $T_{1-\eta} f$ is ${K_1}$-valued and $T_{1-\eta} g$ is ${K_2}$-valued on $\{+1,-1\}^n$, the $\round$ functions act as the identity on their images, and hence
\begin{equation}\label{eq:inv-proof-1}
\E\Bigl[ \Psi\bigl(H_1(\mathcal{L}^{n}), \cdots, H_t(\mathcal{L}^n)\bigr)\Bigr] = \E_{(x,y)} \Bigl[ \dotprod{ T_{1-\eta} f(x) }{ T_{1-\eta} g(y) } \Bigr] \ ,
\end{equation}
where $(x,y)$ is distributed according to  $N_{p_1,p_2,\theta}^{\otimes n}$.

\footnotetext{
 To see why, let $a,b$ be two arbitrary points and $a'=\round_K(a)$, $b'=\round_K(b)$. Without loss of generality, we can change the coordinates so that $a'=(0,\dots,0)$ and $b'=(c,0,\dots,0)$ for some $c > 0$: by convexity, the segment $[a'b']$ lies within $K$.  Now, by virtue of $a'$ (resp. $b'$) being the closest point to $a$ (resp. $b$), this implies the first coordinate of $a$ must be non-positive and the first coordinate of $b$ must be at least $c$; but this in turn means the distance between $a$ and $b$ is at least $c$.
}

For $j \in [\ell]$, define real-valued functions
$\tilde{F}_j= H_j(\Upsilon^n)$ and $\tilde{G}_j=H_{\ell +j}(\Upsilon^n)$.
Note that as the multilinear expansion of $T_{1-\eta} f_j$ (resp. $T_{1-\eta} h_j$) only involves $\chi_0,\chi_1$ (resp. $\chi_0,\chi_2$), the multilinear expansion of $\tilde{F}_j$ (resp. $\tilde{G}_j$) {only} involves $\xi_0,\xi_1$ (resp. $\xi_0,\xi_2$). As $\xi_0=1$, the functions $\tilde{F}_j$ (resp. $\tilde{G}_j$) are defined on $\R^n$ under a product measure with coordinates distributed as Gaussians with mean $p_1$ (resp. mean $p_2$) and second moment $1$.

Let $\tilde{F} = (\tilde{F}_1,\dots,\tilde{F}_\ell)$
and $\tilde{G} = (\tilde{G}_1,\dots,\tilde{G}_\ell)$, and finally let
$F\colon \R^n \to {K_1}$ be $F(X) = \round_{{K_1}}(\tilde{F}(X))$ and $G\colon \R^n \to {K_2}$ be $G(Y) = \round_{{K_2}}(\tilde{G}(Y))$.
Note that $F$ (resp. $G$) depends only on $f=(f_1,\dots,f_\ell)$ (resp. $g=(g_1,\dots,g_\ell)$) as required in the statement of \autoref{thm:discrete:to:gaussian:multivalued:twoway}. By construction, it is clear that
\begin{equation}
\label{eq:inv-proof-2}
 \E \Bigl[ \Psi\bigl(H_1(\Upsilon^{n}), \cdots, H_t(\Upsilon^n)\bigr)\Bigr] = \E_{(X,Y)} \Bigl[  \dotprod{ F(X) }{ G(Y) }  \Bigr] \ ,
\end{equation}
for $(X,Y) \sim (\xi_1,\xi_2)^{\otimes n} = \mathcal{G}^{\otimes n}$ where $\mathcal{G}$ is the Gaussian distribution which matches the first and second moments of $N=N_{p_1,p_2,\theta}$.

\noindent Combining \eqref{eq:inv-proof-1} and \eqref{eq:inv-proof-2} with the guarantee \eqref{eq:inv-principle-guarantee} of \autoref{thm:multidim-invariance}, we get that
\begin{equation}
\label{eq:inv-proof-3}
\abs{ \E_{(x,y) \sim N^{\otimes n}} \bigl[ \dotprod{ T_{1-\eta} f(x) }{ T_{1-\eta} g(y) } \bigr] - \E_{(X,Y) \sim \mathcal{G}^{\otimes n}} \bigl[  \dotprod{ F(X) }{ G(Y) }  \bigr] }  \le \eps/2
\end{equation}
for $\tau > 0$ chosen small enough (as a function of $\eps,\ell,p_1,p_2,\theta_0$ and $\eta$). We are almost done, except that we would like to be close to the inner product $\dotprod{f(x)}{g(y)}$ of the original functions, and we have the noised versions in \eqref{eq:inv-proof-3} above. However, as the correlation of the space $N_{p_1,p_2,\theta}$ is bounded away from $1$, applying Lemma 6.1 of \cite{MO:GAFA:10} implies that for small enough $\eta > 0$ (as a function of $\eps,\ell,p_1,p_2,\theta_0$),
\[
\abs{ \E_{(x,y) \sim N^{\otimes n}} \bigl[ \dotprod{T_{1-\eta} f(x)}{T_{1-\eta} g(y)} \bigr] - \E_{(x,y) \sim N^{\otimes n}} \bigl[ \dotprod{f(x)}{g(y)} \bigr] } \le \eps/2  \ .
\]
Combining this with \eqref{eq:inv-proof-3}, the proof of \autoref{thm:discrete:to:gaussian:multivalued:twoway} is complete.
\end{proofof}

\end{document}